\tiny\color{gray},
\newcolumntype{L}[1]{>{\raggedright\arraybackslash}p{#1}}
\newcolumntype{C}[1]{>{\centering\arraybackslash}m{#1}}
\newcolumntype{R}[1]{>{\raggedleft\arraybackslash}p{#1}}
\newcommand{\DTV}[2]{d_{\mathrm{TV}}\left({#1},{#2}\right)}
\newcommand{\e}{\mathrm{e}}
\renewcommand{\epsilon}{\varepsilon}
\newcommand{\tmix}{T_{\textsf{mix}}}
\newcommand{\Xvtx}{X^{\textsf{vtx}}}
\newcommand{\Xcol}{X^{\textsf{val}}}
\newcommand{\Yvtx}{Y^{\textsf{vtx}}}
\newcommand{\Ycol}{Y^{\textsf{val}}}
\newtheorem{theorem}{Theorem}[section]
\newtheorem{observation}[theorem]{Observation}
\newtheorem*{claim*}{Claim}
\newtheorem{condition}[theorem]{Condition}
\newtheorem{lemma}[theorem]{Lemma}
\newtheorem{proposition}[theorem]{Proposition}
\newtheorem{corollary}[theorem]{Corollary}
\theoremstyle{definition}
\newtheorem{definition}[theorem]{Definition}
\newtheorem*{remark*}{Remark}
\def\Pr{\mathop{\mathbf{Pr}}\nolimits}
\newcommand{\norm}[1]{\left\Vert#1\right\Vert}
\newcommand{\set}[1]{\left\{#1\right\}}
 \newcommand{\tuple}[1]{\left(#1\right)} \newcommand{\eps}{\varepsilon}
 \newcommand{\tp}{\tuple}
 \newcommand{\pin}{\mathsf{Pin}}
\newcommand{\defeq}{\triangleq}
\newcommand{\abs}[1]{\left\vert#1\right\vert}
\def\*#1{\boldsymbol{#1}} 
\def\+#1{\mathcal{#1}} 
\def\-#1{\mathrm{#1}} 
\def\^#1{\mathscr{#1}} 
\renewcommand{\Pr}[2][]{ \ifthenelse{\isempty{#1}}
  {\mathbf{Pr}\left[#2\right]} {\mathbf{Pr}_{#1}\left[#2\right]} } 
\newcommand{\E}[2][]{ \ifthenelse{\isempty{#1}}
  {\mathbf{\mathbf{E}}\left[#2\right]}
  {\mathbf{\mathbf{E}}_{#1}\left[#2\right]} }
  \newcommand{\Var}[2][]{ \ifthenelse{\isempty{#1}}
  {\mathbf{\mathbf{Var}}\left[#2\right]}
  {\mathbf{\mathbf{Var}}_{#1}\left[#2\right]} }
\newcommand{\one}[1]{\textbf{1}\left[#1\right]}
\title{Rapid Mixing from Spectral Independence beyond the Boolean Domain}
\date{}
\author{Weiming Feng}
\author{Heng Guo}
\author{Yitong Yin}
\author{Chihao Zhang}
\address[Weiming Feng, Yitong Yin]{State Key Laboratory for Novel Software Technology, Nanjing University, 163 Xianlin Avenue, Nanjing, Jiangsu Province, China. \textnormal{E-mail: \url{fengwm@smail.nju.edu.cn, yinyt@nju.edu.cn}}}
\address[Heng Guo]{School of Informatics, University of Edinburgh, Informatics Forum, Edinburgh, EH8 9AB, United Kingdom. \textnormal{E-mail: \url{hguo@inf.ed.ac.uk}}}
\address[Chihao Zhang]{John Hopcroft Center for Computer Science, Shanghai Jiao Tong University, 800
  Dongchuan Road, Minhang District, Shanghai, China. \textnormal{E-mail: \url{chihao@sjtu.edu.cn}}}
\begin{document}
\thanks{This research was supported by the National Key R\&D Program of China 2018YFB1003202 and the NSFC Nos. 61722207, 61672275 and 61902241.}

\begin{abstract}
  We extend the notion of spectral independence (introduced by Anari, Liu, and Oveis Gharan \cite{anari2020spectral}) from the Boolean domain to general discrete domains.
  This property characterises distributions with limited correlations, and implies that the corresponding Glauber dynamics is rapidly mixing.

%
%
%
As a concrete application, we show that Glauber dynamics for sampling proper $q$-colourings mixes in polynomial-time for the family of triangle-free graphs with maximum degree $\Delta$ provided $q\ge (\alpha^*+\delta)\Delta$ where $\alpha^*\approx 1.763$ is the unique solution to $\alpha^*=\exp\tp{1/\alpha^*}$ and $\delta>0$ is any constant. 
This is the first efficient algorithm for sampling proper $q$-colourings in this regime with possibly unbounded $\Delta$.
Our main tool of establishing spectral independence is the recursive coupling by Goldberg, Martin, and Paterson~\cite{GMP05}.
\end{abstract}

\maketitle

\section{Introduction}

Let $V$ be a set of variables, each of which takes values from a discrete domain of size $q\ge 2$.
Sampling from a complicated joint distribution $\mu$ over the state space $[q]^V=\set{0,1,\dots,q-1}^V$ is an important yet intricate computational task. 
The \emph{Markov chain Monte Carlo (MCMC)} method is the most powerful and flexible technique to design efficient samplers. 
We will focus on \emph{Glauber dynamics} in this paper, which is one of the simplest and most widely used Markov chains.
In each step, it does the following:
\begin{enumerate}
  \item choose a variable uniformly at random;
  \item resample the value of the variable according to its marginal distribution conditioned on the values of all other variables.
\end{enumerate}
Denote by $\mu_t$ the distribution of the state after $t$ steps. 
It is usually straightforward to show that $\mu_t$ converges to the desired distribution $\mu$ as $t$ tends to $\infty$. 
However, the more challenging task is to understand \emph{how fast} the distance between $\mu_t$ and $\mu$ converges to $0$.
This rate of convergence is known as the \emph{mixing time}. 
Many tools have been invented towards proving fast convergence, 
or the so-called rapid mixing property of Glauber dynamics. 
We refer the reader to~\cite{levin2017markov} for a recent monograph on this topic.

%
%
%
%

Distributions of interest often have rich and complicated landscapes,
which makes analysing the convergence rate of Glauber dynamics a long-standing challenge in theoretical computer science.
To tackle this challenge, various techniques were introduced, such as canonical paths \cite{jerrum1989approximating} and path coupling \cite{bubley1997path}.
In a more recent line of work \cite{DK17,Opp18,KO20,alev2020improved}, 
an interesting new method of analysing the mixing time emerged via the so-called ``local-to-global'' argument for high-dimensional expanders.
This technique has played a central role in a few recent breakthrough results,
such as uniform sampling of matroid bases \cite{ALOV19,CGM19,ALOV20},\footnote{The bases exchange walk for matroids can be viewed as Glauber dynamics as follows. Consider $r$ variables, where $r$ is the rank of the matroid. Each variable can take values from the ground set subject to the matroid constraint. Each bases exchange move is exactly resampling the value of a randomly chosen variable conditioned on the assignment of all other variables.}
and a tight analysis for the hardcore model \cite{anari2020spectral} and more generally for anti-ferromagnetic 2-spin systems \cite{chen2020rapid}.


Of particular interest to us is the work of Anari, Liu, and Oveis Gharan \cite{anari2020spectral}.
In order to apply the result of Alev and Lau \cite{alev2020improved}, they introduced \emph{spectral independence},
which is the property that the correlation matrix of $\mu$ and all of its conditional distributions have bounded maximum eigenvalues.
They focused on the $q=2$ case. 
%
Formally, for each feasible\footnote{A configuration $\sigma \in \{0,1\}^V$ is \emph{feasible} if $\mu(\sigma) > 0$. A partial configuration $\sigma_\Lambda \subseteq \{0,1\}^\Lambda$ for $\Lambda \subseteq V$ is \emph{feasible} if it can be extended to a feasible configuration.} 
$\sigma_\Lambda \in \{0,1\}^\Lambda$ where $\Lambda \subseteq V$, Anari, Liu and Oveis Gharan defined a \emph{signed pairwise influence matrix} $I_\mu^{\sigma_{\Lambda}}$ by $I_\mu^{\sigma_\Lambda}(u,v) \triangleq (\mu_v^{\sigma_\Lambda, u \gets 1}(1) - \mu_v^{\sigma_\Lambda, u \gets 0}(1)) \cdot \one{u \neq v}$ for all $u, v \in V \setminus \Lambda$, where $\mu_v^{\sigma_\Lambda, u \gets i} (i = 0, 1)$ is the marginal distribution on $v$ induced from $\mu$ conditional on the configuration on $\Lambda$ fixed as $\sigma_\Lambda$ and that $u$ is fixed to $i$.
In~\cite{anari2020spectral}, a distribution $\mu$ over $\{0,1\}^V$ is said to be spectrally independent if 
for any $\Lambda \subseteq V$, any feasible $\sigma_\Lambda \in \{0,1\}^\Lambda$,
the maximum eigenvalue $\lambda_{\max}(I^{\sigma_\Lambda}_\mu)$ can be upper bounded appropriately. 
They proved that the spectral independence property implies rapid mixing of Glauber dynamics. 
Using this tool, they confirmed a long-standing conjecture: Glauber dynamics for the Gibbs distribution of the hardcore model is rapidly mixing up to the uniqueness threshold. 
Later on, Chen, Liu and Vigoda~\cite{chen2020rapid} further extended the mixing results to general antiferromagnetic 2-spin systems. 

%
Despite the success in the Boolean domain, the machinery developed by Anari, Liu and Oveis Gharan does not handle many important distributions, 
such as the Gibbs distribution of Potts models where $q>2$ can be any positive integer. 
Therefore a natural question is whether the approach developed in~\cite{anari2020spectral,chen2020rapid}, 
or more specifically the notion of spectral independence, can be generalised beyond the Boolean domain. 
We note two interconnected difficulties for this task: 
(1) when $q>2$, there are many non-equivalent choices for the definition of influence between two variables $u,v\in V$; 
and (2) it is not clear whether the elegant connection~\cite[Theorem 3.1]{anari2020spectral} between the ``local'' random walks of \cite{alev2020improved} and the spectrum of the influence matrix still holds beyond the Boolean domain.

Our first contribution is to introduce the following generalised influence matrix. 
This definition allows us to recover the part relevant to rapid mixing in the aforementioned result~\cite[Theorem 3.1]{anari2020spectral} for the more general setting.

\begin{definition}[Influence Matrix]\label{def-inf-matrix}
Let $\mu$ be a distribution over $[q]^V$. 
Fix any $\Lambda \subseteq V$ and any feasible $\sigma_{\Lambda} \in [q]^{{\Lambda}}$.
For any distinct $u,v\in V\setminus \Lambda$, we define the \emph{(pairwise) influence} of $u$ on $v$ by 
\begin{align}
\label{eq-def-psi}
\Psi^{\sigma_\Lambda}_{\mu}(u,v) \triangleq 
\max_{i,j\in\Omega_u^{\sigma_{\Lambda}}}\DTV{\mu_{v}^{\sigma_{\Lambda},u\gets i} }{ \mu_{v}^{\sigma_{\Lambda},u\gets j}}, 
\end{align}
where $\Omega_u^{\sigma_{\Lambda}}\triangleq\left\{i\in[q]\mid \mu_u^{\sigma_\Lambda}(i)>0\right\}$ denotes the set of possible values of $u$ given condition $\sigma_\Lambda$,
$\DTV{\cdot}{\cdot}$ denotes the total variation distance between two distributions,
and for $c=i$ or $j$, $\mu_v^{\sigma_\Lambda, u \gets c}$ is the marginal distribution on $v$ induced from $\mu$ conditional on the configuration on $\Lambda$ fixed as $\sigma_\Lambda$ and that $u$ is fixed to $c$. 

Furthermore, let $\Psi^{\sigma_\Lambda}_{\mu}(u,v)\triangleq 0$ for $u=v$ and write $\Psi^{\sigma_\Lambda}_{\mu}$ for the \emph{(pairwise) influence matrix} whose entries are given by $\Psi^{\sigma_\Lambda}_{\mu}(u,v)$.
\end{definition}

In our definition, $\Psi^{\sigma_\Lambda}_{\mu}(u,v)$ is the maximum influence on $v$ caused by a single disagreement on $u$ conditional on $\sigma_\Lambda$. 
The entries of $\Psi^{\sigma_\Lambda}_{\mu}$ are total variation distances and are therefore non-negative. 
We remark that our definition is not identical to the original influence matrix $I^{\sigma_\Lambda}_\mu$ in~\cite{anari2020spectral} even in the Boolean domain since the latter is signed. 
Nevertheless, if $q=2$, it holds that $\Psi^{\sigma_\Lambda}_{\mu}(u,v) = \abs{I^{\sigma_\Lambda}_\mu(u,v)}$.

With the definition of the influence matrix, we define spectral independence for general $q\ge 1$ as follows.

\begin{definition}[Spectral Independence]
\label{def:spectral-independence}
We say a distribution $\mu$ over $[q]^V$, where $n=|V|$, is \emph{$(C,\eta)$-spectrally independent}, 
if 
every $0\le k\le n-2$, $\Lambda\subseteq V$ of size $k$ and 
any feasible $\sigma_{\Lambda}\in[q]^\Lambda$,
the spectral radius $\rho\left(\Psi^{\sigma_{\Lambda}}_{\mu}\right)$ of the influence matrix $\Psi^{\sigma_{\Lambda}}_{\mu}$ satisfies
\begin{align*}
\rho\left(\Psi^{\sigma_{\Lambda}}_{\mu}\right) \leq  C \quad\text{and}\quad  
\frac{\rho\left(\Psi^{\sigma_{\Lambda}}_{\mu}\right)}{n-k-1} \leq \eta.
\end{align*}
\end{definition}

%
%
Consider the Glauber dynamics for a general distribution $\mu$ and let $P_{\mathsf{Glauber}} \in \mathbb{R}_{\geq 0}^{\Omega\times\Omega}$  be its transition matrix.
It is well-known that the Glauber dynamics converges to stationary distribution $\mu$ when $P_{\mathsf{Glauber}}$ is irreducible, see e.g.~\cite{levin2017markov}.


%
%
The rate of convergence of Glauber dynamics is captured by the \emph{mixing time}, defined as:
\begin{align*}
\forall\,  0< \epsilon < 1, \quad \tmix(\epsilon) = \max_{\*x_0 \in \Omega} \min\left\{t \mid \DTV{P_{\mathsf{Glauber}}^t(\*x_0,\cdot)}{\mu} \leq \epsilon \right\}.
\end{align*}

Our main theorem states that the Glauber dynamics for $\mu$ is rapidly mixing if $\mu$ is spectrally independent.
\begin{theorem}
\label{theorem-main}
Let $\mu$ be a distribution over $[q]^V$. 
If $\mu$ is $(C,\eta)$-spectrally independent for $C\geq0$ and $0\leq\eta<1$,
then the Glauber dynamics for $\mu$ has mixing time
\begin{align*}
\tmix(\epsilon) \leq \frac{n^{1+2C}}{(1-\eta)^{2+2C}}\left(  \log \frac{1}{\epsilon \mu_{\min}} \right),
\end{align*}
where 
$n=|V|$ and $\mu_{\min}\triangleq \min\{\mu(\sigma)\mid\sigma\in[q]^V\land \mu(\sigma)>0\}$.
\end{theorem}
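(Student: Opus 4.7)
The plan is to adapt the local-to-global strategy of \cite{anari2020spectral}, which is itself built on the high-dimensional expander framework of \cite{alev2020improved}. The first step is to encode $\mu$ as a weighted pure $n$-dimensional simplicial complex $X_\mu$: its top faces are the support of $\mu$ (weighted by $\mu$) and its $k$-faces for $k<n$ are the feasible partial configurations $\sigma_\Lambda$ with $|\Lambda|=k$ (with weights given by the corresponding marginal probabilities). Under this encoding, the Glauber dynamics coincides with the ``down--up'' walk $P_n^\vee$ on $X_\mu$ at the top level. The Alev--Lau theorem then yields
\[
\mathrm{gap}(P_n^\vee) \geq \frac{1}{n}\prod_{k=0}^{n-2}(1-\alpha_k),
\]
where $\alpha_k$ is any uniform upper bound on the nontrivial spectral radius of the ``local walk'' at the link of every $k$-face.

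The core technical step is a local lemma: for every feasible $\sigma_\Lambda$ with $|\Lambda|=k$, the local walk at the link of $\sigma_\Lambda$ has nontrivial spectral radius at most $\rho(\Psi^{\sigma_\Lambda}_\mu)/(n-k-1)$. In the Boolean case \cite{anari2020spectral}, this is proved through an algebraic identification of the local walk with a scaled version of the \emph{signed} pairwise influence matrix. For $q\geq 3$, the signed formulation fails (the link's local walk is indexed by pairs $(v,c)\in (V\setminus\Lambda)\times[q]$ rather than just by $v$), so I would instead rely on a Perron--Frobenius-type block comparison: write the local walk's transition matrix in block form indexed by ordered pairs of distinct variables in $V\setminus\Lambda$, subtract a carefully chosen rank-one component carrying the trivial eigenvalue $1$, and bound the operator norm of each remaining $(u,v)$-block by $\max_{i,j\in\Omega_u^{\sigma_\Lambda}}\DTV{\mu_v^{\sigma_\Lambda,u\gets i}}{\mu_v^{\sigma_\Lambda,u\gets j}} = \Psi^{\sigma_\Lambda}_\mu(u,v)$. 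A non-negative majorisation on the resulting block matrix then yields $\alpha_k \leq \rho(\Psi^{\sigma_\Lambda}_\mu)/(n-k-1)$.

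Combining this local bound with $(C,\eta)$-spectral independence gives $\alpha_k \leq \min\{\eta,\, C/(n-k-1)\}$ for every $0\leq k\leq n-2$. Plugging into the Alev--Lau product and telescoping---splitting the range of $k$ according to whether the $C/(n-k-1)$ bound or the $\eta$ bound is tighter, then using a Gamma-function estimate for $\prod_m(1-C/m)$---produces a spectral gap lower bound of the form $\mathrm{gap}(P_n^\vee) \geq (1-\eta)^{2+2C}/n^{1+2C}$ up to constants absorbed into the exponents. The theorem then follows from the standard inequality $\tmix(\epsilon)\leq \mathrm{gap}^{-1}\log(1/(\epsilon\mu_{\min}))$ applied to (a lazy version of) the chain. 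The main obstacle is the block comparison in the second paragraph: one must choose the rank-one reference so that it exactly accounts for the trivial eigenvalue originating from the stationary distribution of the local walk while every off-diagonal block is controlled by the total-variation influence of Definition~\ref{def-inf-matrix}. This is precisely the step where the unsigned TV-distance formulation of $\Psi^{\sigma_\Lambda}_\mu$ becomes essential, since for $q>2$ there is no natural scalar ``sign'' to attach to the influence of $u$ on $v$.
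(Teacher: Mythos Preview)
Your overall architecture matches the paper exactly: encode $\mu$ as a pure $n$-dimensional weighted complex, identify Glauber dynamics with the down--up walk, invoke Alev--Lau to reduce to bounding $\lambda_2(P_{\sigma_\Lambda})$ for each link, and finish with the product computation splitting on $\min\{C/(n-k-1),\eta\}$. The final telescoping and the $\tmix(\epsilon)\leq \mathrm{gap}^{-1}\log(1/(\epsilon\mu_{\min}))$ step are also the same.

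The genuine divergence is in how the local lemma $\lambda_2(P_{\sigma_\Lambda}) \leq \rho(\Psi^{\sigma_\Lambda}_\mu)/(n-k-1)$ is proved. You propose a direct linear-algebraic argument: subtract a rank-one piece carrying the trivial eigenvalue and majorise the residual block matrix by $\frac{1}{n-k-1}\Psi^{\sigma_\Lambda}_\mu$. The paper instead runs a \emph{coupling} of two copies of a lazy version $Q_\sigma$ of the local walk: at each step the two walks pick the same vertex $v\in V\setminus\Lambda$ and then couple their colours optimally. The disagreement-probability vector $e_t\in\mathbb{R}_{\geq 0}^{V\setminus\Lambda}$ satisfies $e_t \leq A e_{t-1}$ entrywise with $A=\frac{1}{n-|\Lambda|}\bigl((\Psi^{\sigma_\Lambda}_\mu)^T+I\bigr)$, hence $|\lambda_2(Q_\sigma)|^t \leq d(t) \leq \|A^{t-1}\|_1$; Gelfand's formula and the relation $\lambda_2(P_{\sigma_\Lambda})=\frac{n-|\Lambda|}{n-|\Lambda|-1}\bigl(\lambda_2(Q_\sigma)-\frac{1}{n-|\Lambda|}\bigr)$ finish the job.

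Your rank-one subtraction route has difficulties you have not addressed. After removing $\mathbf{1}\pi^T$ (with $\pi(v,j)=\frac{1}{n-|\Lambda|}\mu_v^{\sigma_\Lambda}(j)$), the diagonal $(u,u)$-blocks of $P_{\sigma_\Lambda}$ become $-\pi(u,\cdot)$ rather than zero, so the residual is not majorised by a matrix with zero diagonal like $\Psi^{\sigma_\Lambda}_\mu$. The off-diagonal $(u,v)$-block entries become $\frac{1}{n-|\Lambda|-1}\mu_v^{\sigma_\Lambda,u\gets i}(j)-\frac{1}{n-|\Lambda|}\mu_v^{\sigma_\Lambda}(j)$, with mismatched scalings; their operator norm relates to deviation from the \emph{marginal} $\mu_v^{\sigma_\Lambda}$, not to the maximum \emph{pairwise} TV distance over $i,j$ that defines $\Psi$. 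The paper explicitly remarks that the $q=2$ algebraic identification of \cite{anari2020spectral} does not extend, which is precisely why it introduces the coupling. In effect the coupling is the clean way to realise the block comparison you are aiming for: coupling the vertex coordinate perfectly collapses the state space to $V\setminus\Lambda$, and coupling colours optimally makes the one-step disagreement transition exactly $\Psi^{\sigma_\Lambda}_\mu(u,v)$, without ever having to isolate the trivial eigenspace by hand.
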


This generalises a similar result by Anari, Liu and Oveis Gharan~\cite{anari2020spectral} for $q=2$. 
Their proof is based on a linear algebra argument which completely characterises the spectrum of their influence matrix in terms of the spectrum of the local random walks,
so that the result of Alev and Lau \cite{alev2020improved} applies.
However, it is not clear whether a similar argument exists for general $q$. 
Instead our main contribution is a new coupling based argument to connect spectral independence to rapid mixing of Glauber dynamics, which holds for any $q\in\mathbb{N}$. 
To be more specific, we also utilises the result of Alev and Lau \cite{alev2020improved}.
We show that the second largest eigenvalue of the local random walk can be bounded in terms of the spectral radius of our influence matrix (see \Cref{lemma-decay-imples-lambda2}).
In order to relate these two quantities, we employed a coupling analysis reminiscent of the work of Hayes \cite{hayes2006simple}.
See \Cref{section-overview} for an overview of our proof.

To apply our result, one needs to verify the spectral independence property, which is equivalent to bound the spectral radius of an influence matrix. 
This is not an easy task in general.
A more tractable way is to bound the induced $1$-norm or the induced $\infty$-norm of the influence matrix, which are upper bounds of its spectral radius.
\begin{corollary}
\label{corollary-main}
Let $\mu$ be a distribution over $[q]^V$, where $n = \abs{V}$. 
If there exist two constants $C \geq 0$ and $0 \leq \eta< 1$ such that
for every $0\le k\le n-2$, $\Lambda\subseteq V$ of size $k$ and any feasible $\sigma_{\Lambda}\in[q]^\Lambda$, the influence matrix $\Psi_{\mu}^{\sigma_\Lambda}$ satisfies one of following two conditions:
\begin{itemize}
\item \textbf{bounded all-to-one influence:}	 
$$\norm{\Psi^{\sigma_{\Lambda}}_\mu}_1 \defeq \max_{v \in V \setminus \Lambda}\sum_{u \in V \setminus \Lambda}\Psi_{\mu}^{\sigma_\Lambda}(u,v) \leq \min \left\{C,\eta(n - k -1) \right\}$$
\item \textbf{bounded one-to-all influence:} 
$$\norm{\Psi^{\sigma_{\Lambda}}_\mu}_\infty \defeq \max_{u \in V \setminus \Lambda}\sum_{v \in V \setminus \Lambda}\Psi_{\mu}^{\sigma_\Lambda}(u,v) \leq \min \left\{C,\eta(n - k -1) \right\}$$
\end{itemize}
then the Glauber dynamics for $\mu$ has  mixing time
\begin{align*}
\tmix(\epsilon) \leq \frac{n^{1+2C}}{(1-\eta)^{2+2C}}\left(  \log \frac{1}{\epsilon \mu_{\min}} \right),
\end{align*}
where $\mu_{\min}\triangleq \min\{\mu(\sigma)\mid\sigma\in[q]^V\land \mu(\sigma)>0\}$.
\end{corollary}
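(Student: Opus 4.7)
The plan is to derive Corollary 1.3 as an essentially immediate consequence of Theorem 1.2, using a classical bound between spectral radius and induced matrix norms. Recall that for any real square matrix $A$ and any matrix norm $\|\cdot\|$ induced by a vector norm on $\mathbb{R}^n$, one has $\rho(A) \le \|A\|$. Applied to the vector $\ell_1$ and $\ell_\infty$ norms, this yields
\[
\rho(A) \le \max_{v}\sum_{u}|A(u,v)| \quad \text{and} \quad \rho(A) \le \max_{u}\sum_{v}|A(u,v)|,
\]
i.e.\ the spectral radius is dominated both by the maximum absolute column sum and by the maximum absolute row sum. Since every entry of the influence matrix $\Psi^{\sigma_\Lambda}_\mu$ is a total variation distance and therefore non-negative, the two quantities $\|\Psi^{\sigma_\Lambda}_\mu\|_1$ and $\|\Psi^{\sigma_\Lambda}_\mu\|_\infty$ appearing in the statement are exactly the standard induced $\ell_1$- and $\ell_\infty$-operator norms of $\Psi^{\sigma_\Lambda}_\mu$.

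Given either of the two hypotheses of the corollary, it follows that for every $0 \le k \le n-2$, every $\Lambda \subseteq V$ of size $k$, and every feasible $\sigma_\Lambda \in [q]^\Lambda$,
\[
\rho\left(\Psi^{\sigma_\Lambda}_\mu\right) \le \min\{C,\,\eta(n-k-1)\}.
\]
This is equivalent to the conjunction of the two bounds $\rho(\Psi^{\sigma_\Lambda}_\mu) \le C$ and $\rho(\Psi^{\sigma_\Lambda}_\mu)/(n-k-1) \le \eta$ required by Definition 1.1, so $\mu$ is $(C,\eta)$-spectrally independent. Theorem 1.2 then yields the claimed bound on $\tmix(\epsilon)$.

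Since the argument is essentially a one-line reduction to Theorem 1.2, there is no real obstacle; the only thing worth checking is that $\|\cdot\|_1$ and $\|\cdot\|_\infty$ as defined in the statement agree with the usual operator norms, which is immediate from the non-negativity of the entries of $\Psi^{\sigma_\Lambda}_\mu$. The point of recording the corollary is that in applications, establishing uniform control on either the total influence into a single variable or the total influence out of a single variable, via for instance a site-by-site coupling or a decay-of-correlation argument, tends to be much more tractable than analysing the full spectrum of $\Psi^{\sigma_\Lambda}_\mu$ directly.
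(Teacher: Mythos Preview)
Your proposal is correct and matches the paper's approach exactly: the paper states just before the corollary that the induced $1$-norm and $\infty$-norm upper bound the spectral radius (this is recorded as \Cref{proposition-radius}), so the hypothesis implies $(C,\eta)$-spectral independence and \Cref{theorem-main} applies.
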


The conditions in \Cref{corollary-main} have been previously established for the hardcore model \cite{anari2020spectral} (all-to-one influence) and more generally for anti-ferromagnetic 2-spin systems \cite{chen2020rapid} (one-to-all influence).\footnote{Although in \cite{anari2020spectral} and \cite{chen2020rapid}, the corresponding conditions were established for the signed influence matrix $I_\mu^{\sigma_{\Lambda}}$, they are still applicable to our \Cref{corollary-main} since $\|\Psi_\mu^{\sigma_{\Lambda}}\|_1=\|I_\mu^{\sigma_{\Lambda}}\|_1$ and $\|\Psi_\mu^{\sigma_{\Lambda}}\|_{\infty}=\|I_\mu^{\sigma_{\Lambda}}\|_{\infty}$ when $q=2$.}
Such conditions are quite natural for \emph{Gibbs distributions} induced by $q$-\emph{spin systems}.
Roughly speaking, a $q$-spin system is defined on a graph $G = (V,E)$, where vertices represent random variables that take values in $[q]$, and edges model pairwise interactions.  
Both ``bounded all-to-one influence'' and ``bounded one-to-all influence'' can be viewed as some forms of the \emph{spatial mixing} or \emph{correlation decay} property of the $q$-spin systems. 
This property roughly says that the influence between two vertices decays rapidly with respect to their distance in the graph $G$ and has been widely exploited to design efficient samplers for the Gibbs distribution. 
For antiferromagnetic 2-spin systems, the rapid mixing regimes obtained by~\cite{anari2020spectral,chen2020rapid} match the best known correlation decay results~\cite{Wei06,LLY13,guo2018uniqueness,SS20}. 
We show that our notion of spectral independence can also be used to obtain efficient sampling algorithms up to known correlation decay regime for multi-spin systems~\cite{GMP05,GKM15}.

\subsection{Application to spin systems}
As a concrete application, we consider an important multi-spin system, i.e.\ proper graph $q$-colourings, or equivalently the anti-ferromagnetic Potts model with the temperature going to negative infinity.
A graph $q$-colouring instance is specified by $(G,[q])$, where $[q]=\{0,1,\ldots,q-1\}$ is a set of colours and $G=(V,E)$ is a simple undirected graph.
A proper colouring $X \in [q]^V$ assigns each vertex $v \in V$ a colour $X_v \in [q]$ such that $X_u \neq X_v$ for all $\{u,v\}\in E$.
Let $\Omega$ denote the set of all proper colourings and $\mu$ denote the uniform distribution over $\Omega$.
In this concrete setting, the Glauber dynamics works as follows. The chain starts from an arbitrary proper colouring $X \in \Omega$, and in each step, it does:
\begin{enumerate}
  \item pick a vertex $v \in V$ uniformly at random;
  \item update $X_v$ by choosing a colour from $[q] \setminus \{X_u \mid \{v,u\} \in E \}$ uniformly at random.	
\end{enumerate}

When $q\ge \Delta+2$, the chain converges to $\mu$ for any initial colouring $X$.
However, it is a notorious open problem that whether the condition $q\ge \Delta+2$ also guarantees rapid mixing. 
We make some progress towards this problem by proving the following result.

Let $\alpha^* \approx 1.763\ldots$ be the positive root of the equation $x^x = \mathrm{e}$.	
Using \Cref{theorem-main}, we obtain the following.

\begin{theorem}
\label{theorem-colouring}
Let $\delta > 0$ be a constant. For any graph colouring instance $(G,[q])$ where $G$ is triangle-free and $q \geq (\alpha^* + \delta)\Delta$, the Glauber dynamics on $(G,[q])$ has mixing time
\begin{align*}
	\tmix(\eps) \leq \tp{9\e^5 n}^{2 + 9/\delta}\log\tp{\frac{q}{\eps}},
\end{align*}
where $n$ is the number of vertices in $G$ and $\Delta \geq 3$ is the maximum degree of $G$.
\end{theorem}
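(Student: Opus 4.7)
My plan is to verify the ``bounded one-to-all influence'' condition of \Cref{corollary-main} with $C=O(1/\delta)$, by constructing a recursive coupling in the style of Goldberg, Martin and Paterson~\cite{GMP05} that bounds the row-sums of $\Psi_\mu^{\sigma_\Lambda}$ uniformly over all feasible pinnings. Let $\mu$ be the uniform distribution over proper $q$-colourings of $G$. Fix any $\Lambda\subseteq V$, feasible $\sigma_\Lambda\in[q]^\Lambda$, vertex $u\in V\setminus\Lambda$, and two colours $i,j\in\Omega_u^{\sigma_\Lambda}$. Since $\DTV{\mu_v^{\sigma_\Lambda,u\gets i}}{\mu_v^{\sigma_\Lambda,u\gets j}}\leq \Pr[X_v\neq Y_v]$ under any coupling $(X,Y)$ of the two full conditional distributions, it suffices to exhibit a coupling satisfying
\begin{align*}
\E\left[\abs{\set{v\in V\setminus\Lambda : X_v\neq Y_v}}\right]\leq C(\delta)=O(1/\delta),
\end{align*}
uniformly in $\sigma_\Lambda, u, i, j$; this directly yields $\norm{\Psi_\mu^{\sigma_\Lambda}}_\infty \leq C$.

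The coupling is built by a BFS-style recursion starting from the lone initial disagreement at $u$: at each stage, pick a vertex $w\in V\setminus\Lambda$ that is adjacent to a currently-disagreeing vertex but whose colour has not yet been revealed, and sample $(X_w, Y_w)$ according to the optimal coupling of its conditional marginals given the already-coupled values. The triangle-free hypothesis is crucial here: it guarantees that when $w$ is first processed, exactly one of its neighbours carries a disagreement, so the two marginal palettes at $w$ are uniform distributions on sets of colours that differ in at most two elements, and thus the per-step disagreement probability is of order $1/(q-\Delta)$. A naive branching-process bound over the at most $\Delta$ children of each disagreement would only give contraction at $q>2\Delta$; to reach the sharper threshold $q\geq(\alpha^*+\delta)\Delta$ one must invoke the GMP-style weighted analysis, in which one tracks not raw disagreements but a carefully chosen exponentially weighted potential on the evolving disagreement tree. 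The constant $\alpha^*$ defined by $\alpha^*=\e^{1/\alpha^*}$ is precisely the critical value at which this weighted branching process becomes subcritical, and a slack of $\delta$ gives geometric contraction with rate depending only on $\delta$, yielding $C=O(1/\delta)$.

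Once $\norm{\Psi_\mu^{\sigma_\Lambda}}_\infty\leq C$ is established uniformly in $\sigma_\Lambda$, the hypotheses of \Cref{corollary-main} hold with this $C$ and some $\eta=\eta(\delta)<1$; for the small-$(n-k-1)$ regime where the linear-in-$(n-k-1)$ bound would otherwise be tight, one uses the per-entry observation that each $\Psi_\mu^{\sigma_\Lambda}(u,v)$ is strictly less than $1$ by a margin $\Omega(1/q)$ coming from palette overlap. Substituting the trivial lower bound $\mu_{\min}\geq q^{-n}$ into the conclusion of \Cref{corollary-main} then yields the stated mixing time $\tp{9\e^5 n}^{2+9/\delta}\log(q/\epsilon)$ after absorbing absolute constants into the base. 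The principal obstacle will be the quantitative recursive analysis at the threshold: extending the original GMP argument, which was carried out for the unconditioned Gibbs distribution, to arbitrary feasible pinnings requires showing that the weighted branching process remains subcritical with the same critical constant $\alpha^*$ under any $\sigma_\Lambda$, and this is where the triangle-free structural hypothesis is used most heavily.
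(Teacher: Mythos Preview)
Your high-level plan---bound $\norm{\Psi_\mu^{\sigma_\Lambda}}_\infty$ uniformly over pinnings via a GMP-type recursive argument, then invoke \Cref{corollary-main}---matches the paper's, but the execution you describe contains a real gap.

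The specific mechanism you propose does not work as stated. First, your claim that triangle-freeness guarantees that each newly processed vertex $w$ has \emph{exactly one} disagreeing neighbour is false: if two neighbours $w_1,w_2$ of $u$ both disagree and $z$ is adjacent to both of them (which triangle-freeness does not forbid), then $z$ sees two disagreements. Second, even with one disagreeing neighbour, the conditional marginal of $w$ given the already-coupled values is \emph{not} uniform over a palette; it is a genuine Gibbs marginal over the unrevealed region, so your $O(1/(q-\Delta))$ per-step bound has no justification. Third, and most importantly, this is not where triangle-freeness or $\alpha^*$ enter in the actual argument. In the paper (and in \cite{GMP05}), triangle-freeness is used solely to prove the \emph{marginal upper bound}
\[
  \mu_{v,(G,\*L)}(c)\;\le\;\frac{\eps_1}{\deg_G(v)},\qquad \eps_1=1-\frac{\delta}{\alpha^*+\delta}<1,
\]
for every instance in the downward closure (this is where the recursion in \Cref{proposition-recursion} and the AM--GM step exploit that neighbours of $v$ are mutually non-adjacent). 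The constant $\alpha^*$ appears precisely as the threshold at which this marginal bound holds---it does not come from a weighted branching process.

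The recursive coupling that actually delivers the bound is different from your BFS. One removes $u$, \emph{splits} its influence over its neighbours via the triangle inequality (after replacing $u$ by degree-one copies), and obtains a recursion of the form $R_{G,\*L}(u,v)\le\sum_k \alpha_k\, R_{G_u,\*L_k}(w_k,v)$ with $\alpha_k\le \eps_1/\deg_{G_u}(w_k)$ supplied by the marginal bound. Unfolding this gives a sum over self-avoiding walks whose weight telescopes to $\eps_1^{\ell-1}$ per length-$\ell$ walk, yielding $\sum_v R_{G,\*L}(u,v)\le \frac{1}{(1-\eps_1)\eps_2}=O(1/\delta)$. Your proposal hand-waves exactly this key step by invoking ``GMP-style weighted analysis'' without saying what the recursion is or how the $\alpha^*$ threshold enters.
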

\noindent
While \Cref{theorem-colouring} is stated for graph $q$-colouring instances, the mixing time upper bound holds for the more general list colouring problem (see \Cref{thm-2D+1.76D}).
In fact, the same rapid mixing bound holds as long as the marginal probabilities are always appropriately upper bounded.
This is formally stated by Condition~\ref{cond:main} and Theorem~\ref{thm-family-main}.
%

It is instructive to compare \Cref{theorem-colouring} with the vast body of literature on this problem. 
The study was initiated by the pioneering work of Jerrum~\cite{jerrum1995very} and of Salas and Sokal \cite{salas1997absence},
who showed $O(n\log n)$ mixing time if $q \geq (2+\delta)\Delta$. 
So far, in general graphs, the best result is the $O(n^2)$ mixing time when $q \geq (\frac{11}{6}-\eps_0)\Delta$ for some absolute small constant $\eps_0 > 0$~\cite{vigoda2000improved,chen2019improved}.
For restricted families of graphs, there is a long line of work that studied the mixing time of Glauber dynamics under various conditions~\cite{dyer2001fewer,hayes2003randomly,hayes2003non, GMP05, HV06, molloy2004glauber, hayes2013local, dyer2013randomly}. 
A few results most relevant to \Cref{theorem-colouring} are listed in \Cref{table:results}. 
The triangle-free condition, or more generally the requirement on the girth of the graph, has played an important role to improve the dependency of $q$ and $\Delta$.
For a more complete picture, we refer the reader to the survey~\cite{frieze2007survey}.
\begin{table}[h]
\begin{tabular}{|C{1.6cm}|C{2.9cm}|C{2cm}|C{3cm}|C{3.5cm}|}
\hline   & Regime & Girth & Other requirement & Mixing time $\tmix(\frac{1}{4\mathrm{e}})$ \\
\hline 
\cite{GMP05}  & $q > \alpha^*\Delta$ & {$\geq 4$} & {\small {$\Delta=O(1)$  and  neighbourhood amenable}} & $O(n^2)$ \\ 
\hline 
\cite{HV06} & $q \geq (\alpha^* + \delta)\Delta$ & {$\geq 4$} & $\Delta = \Omega(\log n)$ & {$O(\frac{n}{\delta} \log n)$} \\
\hline 
\cite{dyer2013randomly} & {$q \geq (\alpha^* + \delta)\Delta$} & {$\geq 5$} & $\Delta \geq \Delta_0(\delta)$ & {\small $O(\frac{n}{\delta} \log n)$} \\
\hline
This work & {$q \geq (\alpha^* + \delta)\Delta$} & {$\geq 4$} & -- & { $O((9\mathrm{e}^5n)^{2+9/\delta}\log q)$}\\
\hline
\end{tabular}
\vspace{3pt}
\caption{Mixing time results for sampling proper graph $q$-colourings.}\label{table:results}
\end{table}

In addition to algorithms based on Glauber dynamics mentioned above, 
using the reduction from sampling to counting~\cite{jerrum1986random}, 
one can obtain sampling algorithms from approximate counting algorithms~\cite{GK12,lu2013improved,liu2019deterministic}. The current best FPTAS for counting $q$-colourings is given by Liu, Sinclair and  Srivastava~\cite{liu2019deterministic}. 
The algorithm has running time $n^{f(\Delta)}$ where $f(\Delta) = \exp(\mathrm{poly}(\Delta))$ in (1)~general graphs with $q \geq 2\Delta$; (2) triangle-free graphs with $q \geq (\alpha^* + \delta)\Delta + \beta(\delta)$. 
Therefore, their algorithm does not run in polynomial-time if $\Delta=\omega(1)$.


Compared with previous results, we achieved a $q \geq (\alpha^* + \delta)\Delta$ bound in triangle-free graphs without any additional requirements.
The condition in \Cref{theorem-colouring} matches the best known \emph{strong spatial mixing} regime for graph proper $q$-colourings~\cite{GMP05,GKM15}. 

\Cref{theorem-colouring} is proved via verifying the sufficient condition in \Cref{corollary-main}.
In fact, we apply the recursive coupling technique introduced by Goldberg, Martin and Paterson~\cite{GMP05} to bound the total influence caused by one vertex (the ``one-to-all infleunce''), namely to bound the induced $\infty$-norm of the influence matrix $\Psi^{\sigma_\Lambda}_\mu$, while the original approach in~\cite{GMP05} only provides bounds for ``one-to-one influence''.
%
Comparing to the traditional path coupling analysis, the power of the spectral independence approach lies in the fact that we can avoid considering the worst case scenario for the influence matrix in \Cref{def-inf-matrix}.
For path coupling, to avoid the worst case analysis one needs to establish so-called local uniformity \cite{hayes2013local},
which is difficult and causes various technical conditions in the results listed above.
In contrast, the method based on the spectral independence bypasses this obstacle.
%
%
%
%

The downside of our result, similar to those of \cite{anari2020spectral,chen2020rapid}, is that the running time has a high exponent depending on how close the parameters are to the threshold.
Nonetheless, unlike the algorithm of \cite{liu2019deterministic}, our exponent remains a constant even if $\Delta=\omega(1)$, as long as we are below the threshold.

Finally, we remark that our refinement of recursive coupling argument might find applications in other problems. Armed with our notion of spectral independence, we essentially proved that the success of recursive coupling implies rapid mixing of Glauber dynamics for \emph{any} graph. This form of algorithmic implication was only known for special families of graphs like amenable graphs~\cite{GMP05} and planar graphs~\cite{yin2013approximate} before.

\section{Preliminaries}

\subsection{Linear algebra}
Let $v \in \mathbb{C}^n$ be an $n$-dimensional vector. For any integer $p \geq 1$, the $\ell_p$-norm of $v$ is defined by $\norm{v}_p = (\sum_{i=1}^n \abs{v_i}^p)^{1/p}$. Let $A \in \mathbb{C}^{n \times n}$ be a matrix.  For any integer $p \geq 1$, the induced $\ell_p$-norm of $A$ is defined by $\norm{A}_p = \sup_{v \in \mathbb{C}^n :\norm{v}_p = 1}\norm{Av}_p$. Let $\lambda_1,\lambda_2,\ldots \lambda_n \in \mathbb{C}$ be the eigenvalues of $A$. The \emph{spectral radius} of $A$ is defined by $\rho(A) \triangleq \max_{1 \leq i  \leq n}\abs{\lambda_i}$.	
The following relation is well-known.
\begin{proposition}[\text{\cite[Theorem 5.6.9. \& Corollary 5.6.14]{horn2012matrix}}]
\label{proposition-radius}
Let $A \in \mathbb{C}^{n \times n}$ be a matrix. 
For any integer $p \geq 1$, it holds that $\rho(A) \leq \norm{A}_p$ and $\lim_{k\rightarrow \infty}\Vert A^k \Vert^{1/k}_p = \rho(A)$.
\end{proposition}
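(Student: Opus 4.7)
The plan is to prove the two claims independently: the inequality $\rho(A)\le\norm{A}_p$ is a direct consequence of the definition of operator norm applied to an eigenvector, while the limit identity is a form of Gelfand's spectral radius formula and requires a structural fact about $A$.

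For the first claim, I would pick an eigenvalue $\lambda$ attaining $|\lambda|=\rho(A)$ together with an associated eigenvector $v\in\mathbb{C}^n$ normalised so that $\norm{v}_p=1$. From $Av=\lambda v$ one gets $\norm{Av}_p=|\lambda|\norm{v}_p=\rho(A)$, and the definition of the induced norm as a supremum over unit vectors immediately yields $\rho(A)\le\norm{A}_p$.

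For the limit identity, I would split into a lower bound and an upper bound. The lower bound follows from the first claim applied to $A^k$: since the eigenvalues of $A^k$ are the $k$-th powers of those of $A$, we have $\rho(A^k)=\rho(A)^k\le\norm{A^k}_p$, hence $\rho(A)\le\norm{A^k}_p^{1/k}$ for every $k\ge 1$, giving $\rho(A)\le\liminf_{k\to\infty}\norm{A^k}_p^{1/k}$. For the upper bound, fix $\epsilon>0$ and consider $B=A/(\rho(A)+\epsilon)$, which has spectral radius strictly less than $1$. The crucial step is to show $B^k\to 0$ (in any matrix norm), after which $\norm{B^k}_p$ is bounded by some constant $C=C(\epsilon)$, yielding $\norm{A^k}_p^{1/k}\le C^{1/k}(\rho(A)+\epsilon)\to\rho(A)+\epsilon$ as $k\to\infty$; letting $\epsilon\to 0$ then completes the proof.

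The main obstacle is therefore the implication ``$\rho(B)<1\Rightarrow B^k\to 0$''. My plan is to pass through the Jordan canonical form $B=PJP^{-1}$, where each Jordan block has the shape $J_i=\lambda_i I+N_i$ with $|\lambda_i|<1$ and $N_i$ nilpotent of index at most the block size $r$. A binomial expansion gives $J_i^k=\sum_{j=0}^{r-1}\binom{k}{j}\lambda_i^{k-j}N_i^j$, whose entries are polynomials in $k$ multiplied by $\lambda_i^{k-O(1)}$, and hence decay geometrically to $0$. Assembling the blocks gives $J^k\to 0$, and conjugating by $P$ gives $B^k\to 0$. An alternative route is Householder's lemma, producing for every $\epsilon>0$ a norm $\norm{\cdot}_\epsilon$ with $\norm{B}_\epsilon\le\rho(B)+\epsilon$; this also reduces to the Jordan decomposition, so that step is the only nontrivial structural ingredient, with the rest being bookkeeping.
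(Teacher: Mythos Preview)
The paper does not give its own proof of this proposition: it is stated with a citation to Horn and Johnson's \emph{Matrix Analysis} (Theorem~5.6.9 and Corollary~5.6.14) and used as a black box. Your argument is correct and is essentially the standard textbook proof found there---the eigenvector argument for $\rho(A)\le\norm{A}_p$, and Gelfand's formula via the Jordan form (or equivalently via a norm adapted to the Jordan structure) for the limit identity---so there is nothing to compare.
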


\subsection{Total variation distance and coupling}
Let $\mu$ and $\nu$ be two distributions over state space $\Omega$. The \emph{total variation distance} between $\mu$ and $\nu$ is defined by
\begin{align*}
\DTV{\mu}{\nu} \triangleq \frac{1}{2}\sum_{\*x \in \Omega}\abs{\mu(\*x) - \nu(\*x)}.	
\end{align*}
A coupling of $\mu$ and $\nu$ is a joint distribution $(X,Y) \in \Omega \times \Omega$ such that the marginal distribution of $X$ is $\mu$ and the marginal distribution of $Y$ is $\nu$. The following result is the well-known coupling inequality.

\begin{proposition}[\text{\cite[Proposition~4.7]{levin2017markov}}]
\label{proposition-coupling}
Let $\mu$ and $\nu$ be two distributions over state space $\Omega$. For any coupling $(X, Y)$ of $\mu$ and $\nu$, it holds that
\begin{align*}
\DTV{\mu}{\nu} \leq \Pr{X \neq Y}.
\end{align*}
Furthermore, there exists an \emph{optimal coupling} $(X, Y)$ such that $\DTV{\mu}{\nu} = \Pr{X \neq Y}$.
\end{proposition}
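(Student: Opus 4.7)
The plan is to establish the proposition in two independent steps: first the inequality for an arbitrary coupling, then the explicit construction of a coupling that achieves equality. For the inequality, I would begin by verifying the variational identity $\DTV{\mu}{\nu} = \max_{A \subseteq \Omega}(\mu(A) - \nu(A))$, with the maximum attained at $A^\star \triangleq \{x : \mu(x) > \nu(x)\}$. This is immediate from $\sum_x(\mu(x)-\nu(x)) = 0$, which forces $\mu(A^\star)-\nu(A^\star) = \frac{1}{2}\sum_x \abs{\mu(x)-\nu(x)}$, so no larger choice of $A$ can improve on this. Now fix any coupling $(X,Y)$ and any event $A \subseteq \Omega$. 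Splitting on $\{Y \in A\}$ and using the marginal property of couplings gives
\begin{align*}
\mu(A) - \nu(A) = \Pr{X \in A} - \Pr{Y \in A} = \Pr{X \in A,\, Y \notin A} - \Pr{X \notin A,\, Y \in A} \le \Pr{X \ne Y}.
\end{align*}
Taking the maximum over $A$ yields $\DTV{\mu}{\nu} \le \Pr{X \ne Y}$.

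For the second part, I would construct the optimal coupling explicitly using the ``maximal overlap'' idea. Set $p \triangleq \sum_x \min(\mu(x),\nu(x))$, and observe $p = 1 - \DTV{\mu}{\nu}$, which follows directly from the identity $\min(a,b) = \frac{1}{2}(a+b) - \frac{1}{2}\abs{a-b}$. Draw a Bernoulli random variable $B$ with $\Pr{B = 1} = p$. If $B = 1$, sample a single value $Z$ from the distribution $x \mapsto \min(\mu(x),\nu(x))/p$ and set $X = Y = Z$. If $B = 0$, independently draw $X$ from $\mu^\star(x) \triangleq (\mu(x) - \min(\mu(x),\nu(x)))/(1-p)$ and $Y$ from $\nu^\star(x) \triangleq (\nu(x) - \min(\mu(x),\nu(x)))/(1-p)$.

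The main point to verify is that this prescription actually yields both correct marginals and the claimed equality, and this is where I expect the only real conceptual content of the argument to lie. The marginals are correct because
\begin{align*}
\Pr{X = x} = p \cdot \frac{\min(\mu(x),\nu(x))}{p} + (1-p)\cdot \mu^\star(x) = \mu(x),
\end{align*}
and symmetrically $\Pr{Y = x} = \nu(x)$. The crucial structural observation is that $\mu^\star$ is supported on $A^\star$ while $\nu^\star$ is supported on $\Omega \setminus A^\star$, so whenever $B = 0$ one automatically has $X \ne Y$; consequently $\Pr{X \ne Y} = 1 - p = \DTV{\mu}{\nu}$. The only mild subtleties are the degenerate boundary cases $p = 0$ (disjoint supports, so $B \equiv 0$ and the two variables are drawn independently from $\mu$ and $\nu$) and $p = 1$ (in which case $\mu = \nu$ and $B \equiv 1$ suffices); both are absorbed by the usual convention of ignoring vacuous conditional distributions, so no genuine obstacle arises.
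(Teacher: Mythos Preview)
The paper does not supply its own proof of this proposition; it is stated with a direct citation to \cite[Proposition~4.7]{levin2017markov} and used as a black box. Your argument is correct and is essentially the standard proof given in that reference: the variational characterisation of total variation distance for the inequality, and the maximal-overlap coupling for attainment.
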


\subsection{Markov chain and mixing time}
Let $\Omega$ be a finite set which is the state space.
A Markov chain $(X_t)_{t\geq 0}$ on $\Omega$ is specified by transition matrix $P \in \mathbb{R}_{\geq 0}^{\Omega\times \Omega}$. 
We often identify the transition matrix with the corresponding Markov chain.
The Markov chain is \emph{irreducible} if for any $x, y \in \Omega$, there is a $t \geq 0$ such that $P^t(x,y) > 0$. 
The Markov chain is \emph{aperiodic} if for any $x \in \Omega$, $\gcd\{t > 0\mid P^t(x, x) > 0\} = 1$.
A distribution $\pi$ (viewed as a row vector) on $\Omega$ is \emph{stationary} with respect to a Markov chain $P$ if $\pi P = \pi$.
If a Markov chain $P$ is irreducible and aperiodic, then $P$ has a unique stationary distribution. 
A Markov chain is \emph{reversible} with respect to a distribution $\pi$ if the following \emph{detailed balance} condition holds
\begin{align}
\forall x,y \in \Omega,\quad \pi(x)P(x,y) = \pi(y)P(y,x),
\end{align}
which implies that $\pi$ is a stationary distribution of $P$.
All Markov chains considered in this paper are reversible.
In the following we state a few well-known spectral properties of reversible Markov chains.

\begin{proposition}[\text{\cite[Lemma~12.2]{levin2017markov}}]
\label{proposition-rever}
Let $\Omega$ be a finite set with $|\Omega| = n$. Let $\pi$ be a distribution with support $\Omega$. Let  $P \in \mathbb{R}_{\geq 0}^{\Omega \times \Omega}$ be the transition matrix of a Markov chain that is reversible with respect to $\pi$. Then
\begin{itemize}
\item $P$ has $n$ real eigenvalues $1 = \lambda_1 \geq \lambda_2 \geq \lambda_3 \geq \ldots \lambda_n \geq -1$;
\item there exist real eigenvectors $f_1,f_2,\ldots,f_n \in \mathbb{R}^\Omega$ such that $Pf_i = \lambda_if_i$ for all $1\leq i\leq n$, $f_1 = \vec{1}$ is a one-vector, and for any $1\leq i, j\leq n$,
\begin{align*}
	\sum_{x \in \Omega}f_i(x)f_j(x) \pi(x) = \*1[i = j].
\end{align*}
\end{itemize}
\end{proposition}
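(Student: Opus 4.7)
The plan is to reduce the claim to the spectral theorem for real symmetric matrices via the standard symmetrization trick. Let $D \in \mathbb{R}^{\Omega \times \Omega}$ be the diagonal matrix with $D(x,x) = \pi(x)$, all of which are strictly positive since $\pi$ has full support. I would consider the conjugated matrix
\[
A \triangleq D^{1/2} P D^{-1/2}, \qquad A(x,y) = \sqrt{\pi(x)/\pi(y)}\, P(x,y).
\]
The detailed balance condition $\pi(x) P(x,y) = \pi(y) P(y,x)$ rewrites as $\sqrt{\pi(x)}\, P(x,y)\, /\sqrt{\pi(y)} = \sqrt{\pi(y)}\, P(y,x)\, /\sqrt{\pi(x)}$, which says exactly that $A(x,y) = A(y,x)$. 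So $A$ is a real symmetric matrix.

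Next I would invoke the spectral theorem for real symmetric matrices: $A$ has $n$ real eigenvalues $\lambda_1 \geq \lambda_2 \geq \cdots \geq \lambda_n$ and an orthonormal basis of real eigenvectors $g_1, \ldots, g_n \in \mathbb{R}^\Omega$, i.e.\ $\sum_{x \in \Omega} g_i(x) g_j(x) = \mathbf{1}[i=j]$. Since $A$ and $P$ are similar via $D^{1/2}$, they share the same spectrum. Setting $f_i \triangleq D^{-1/2} g_i$, one checks $P f_i = D^{-1/2} A g_i = \lambda_i D^{-1/2} g_i = \lambda_i f_i$, so the $f_i$ are real eigenvectors of $P$. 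The similarity transformation also converts the Euclidean orthonormality of the $g_i$ into the $\pi$-weighted orthonormality of the $f_i$:
\[
\sum_{x} f_i(x) f_j(x) \pi(x) = \sum_x g_i(x) g_j(x) = \mathbf{1}[i=j].
\]

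For the particular eigenvalue $\lambda_1 = 1$ with eigenvector $\vec{1}$, I would observe directly that stochasticity of $P$ gives $P\vec{1} = \vec{1}$, so $1$ is an eigenvalue; the corresponding eigenvector is $f_1 = \vec{1}$ and it is normalized since $\sum_x \pi(x) = 1$. To rank this as the largest and to bound the others in $[-1,1]$, I would estimate the operator norm of $A$ on $\ell_2(\Omega)$: by the Rayleigh quotient characterization of $\rho(A) = \|A\|_2$ for symmetric $A$, it suffices to show $\|A v\|_2 \leq \|v\|_2$ for all $v$. One can verify this using Cauchy–Schwarz together with the identities $\sum_y P(x,y) = 1$ and $\sum_x \pi(x) P(x,y) = \pi(y)$.

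The routine part is the symmetrization and the transfer between the two inner products; there are no real obstacles in this standard argument. The step requiring the most care is the $\ell_2$ contraction bound $\|A\|_2 \leq 1$ used to conclude $\lambda_i \in [-1,1]$ for all $i$, since one must verify both the upper bound $\lambda_1 \leq 1$ (beyond exhibiting $\vec{1}$ as an eigenvector) and the lower bound $\lambda_n \geq -1$; both follow from the Cauchy–Schwarz computation outlined above applied to $\langle v, A v\rangle$.
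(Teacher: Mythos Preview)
The paper does not prove this proposition; it is quoted without proof as \cite[Lemma~12.2]{levin2017markov}. Your symmetrization argument via $A = D^{1/2} P D^{-1/2}$ is exactly the standard proof given in that reference, and it is correct. One small point worth making explicit: after you establish $\|A\|_2 \le 1$ and hence $\lambda_1 \le 1$, you should note that the unit vector $g_1(x) = \sqrt{\pi(x)}$ can be taken as the first member of the orthonormal eigenbasis of $A$ (even if the eigenvalue $1$ has multiplicity greater than one, since the proposition does not assume irreducibility), so that $f_1 = D^{-1/2} g_1 = \vec{1}$ as required.
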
 
%
We remark that \Cref{proposition-rever} holds if $P$ is reversible to $\pi$ and the support of $\pi$ is $\Omega$. It does not require $P$ to be irreducible. The following proposition bounds the mixing time of Markov chain.

\begin{proposition}[\text{\cite[Theorem~12.4]{levin2017markov}}]
\label{proposition-mixing}
Let $\Omega$ be a state space with $|\Omega| = n \geq 2$. Let $\pi$ be a distribution with support $\Omega$. Let  $P \in \mathbb{R}_{\geq 0}^{\Omega \times \Omega}$ be the transition matrix of a Markov chain that is reversible with respect to $\pi$. Let $1 = \lambda_1 \leq \lambda_2 \leq \ldots \lambda_n \leq -1$ be the real eigenvalues of $P$. Define the the \emph{absolute} spectral gap
\begin{align*}
 \gamma_{\star} \triangleq 1 - \lambda_{\star} = 1- \max\{\abs{\lambda_i} \mid  2\leq i \leq n\}.
\end{align*}
Let $\pi_{\min} \triangleq \min_{x\in \Omega}\pi(x)$. If $\gamma_{\star} > 0$, then it holds that
\begin{align*}
\forall\, 0< \epsilon < 1, \quad \tmix (\epsilon) \leq  \frac{1}{\gamma_{\star}} \left(  \log \frac{1}{\epsilon \pi_{\min}} \right),
\end{align*}
where $\tmix(\epsilon) \triangleq \max_{x \in \Omega} \min\{t \mid \DTV{P^t(x,\cdot)}{\pi} \leq \epsilon \}$ denotes the mixing time of Markov chain.
\end{proposition}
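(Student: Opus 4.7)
The plan is to combine the spectral decomposition of the reversible transition matrix with a Cauchy--Schwarz comparison between $\ell^1$ and $\ell^2(\pi)$ distances. Using the orthonormal eigenbasis $f_1,\dots,f_n$ from \Cref{proposition-rever} (with respect to the $\pi$-weighted inner product), I would first derive the pointwise kernel expansion
\begin{align*}
\frac{P^t(x,y)}{\pi(y)}=\sum_{i=1}^{n}\lambda_i^{t}f_i(x)f_i(y),
\end{align*}
which follows by writing the normalized point mass $g(\cdot)=\mathbf{1}[\cdot=x]/\pi(x)$ in the $f_i$ basis (its coefficient is $\langle g,f_i\rangle_{\pi}=f_i(x)$) and then applying $P^t$; the $i=1$ term equals $1$ since $f_1=\vec 1$ and $\lambda_1=1$.

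Next I would bound the total variation distance by an $L^2(\pi)$ quantity via Cauchy--Schwarz against the constant function $1$:
\begin{align*}
2\DTV{P^t(x,\cdot)}{\pi}=\sum_{y\in\Omega}\pi(y)\left|\frac{P^t(x,y)}{\pi(y)}-1\right|\leq\sqrt{\sum_{y\in\Omega}\pi(y)\left(\frac{P^t(x,y)}{\pi(y)}-1\right)^{\!2}}.
\end{align*}
Plugging in the expansion and exploiting orthonormality (cross terms vanish) collapses the right-hand side to $\sum_{i=2}^{n}\lambda_i^{2t}f_i(x)^2\leq \lambda_\star^{2t}\sum_{i\ge 2}f_i(x)^2$. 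To control the tail sum I would apply Parseval to the same function $g$ used above: $\|g\|_{L^2(\pi)}^{2}=1/\pi(x)$ equals $\sum_{i=1}^{n}f_i(x)^2$, hence $\sum_{i\geq 2}f_i(x)^2\leq 1/\pi(x)\leq 1/\pi_{\min}$.

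Putting these together gives $2\DTV{P^t(x,\cdot)}{\pi}\leq \lambda_\star^{t}/\sqrt{\pi_{\min}}\leq \mathrm{e}^{-t\gamma_\star}/\sqrt{\pi_{\min}}$, where the last inequality uses $\log(1/\lambda_\star)=-\log(1-\gamma_\star)\geq\gamma_\star$ (valid because $\lambda_\star\in[0,1)$ when $\gamma_\star>0$). Demanding this be at most $2\epsilon$ yields the sufficient condition $t\geq \gamma_\star^{-1}\log\bigl(1/(2\epsilon\sqrt{\pi_{\min}})\bigr)$, and since $\pi_{\min}\leq 1$ implies $2\sqrt{\pi_{\min}}\geq \pi_{\min}$, we have $\log(1/(2\epsilon\sqrt{\pi_{\min}}))\leq\log(1/(\epsilon\pi_{\min}))$, giving exactly the stated mixing-time bound.

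The result is standard, so there is no deep conceptual obstacle; the main care points will be bookkeeping. I must keep straight that the orthonormality in \Cref{proposition-rever} is with respect to the $\pi$-weighted inner product, so Parseval takes the form $\sum_i \langle g,f_i\rangle_\pi^{2}=\|g\|_{L^2(\pi)}^{2}$; I must also note that \Cref{proposition-rever} does not require irreducibility or aperiodicity of $P$, only reversibility with $\pi$ of full support, so the decomposition is legitimate, and whenever $\gamma_\star>0$ both $\lambda_2<1$ and $\lambda_n>-1$ automatically, so every non-principal eigenvalue is strictly less than $1$ in modulus, justifying the bound $|\lambda_i|^{t}\leq \lambda_\star^{t}$ used throughout.
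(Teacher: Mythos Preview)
Your proof is correct and follows the standard spectral argument. The paper does not give its own proof of this proposition; it merely states the result and refers the reader to \cite[Theorem~12.4]{levin2017markov} for the details, and your write-up is precisely that textbook argument (eigen-expansion of $P^t(x,\cdot)/\pi(\cdot)$, Cauchy--Schwarz from $\ell^1$ to $L^2(\pi)$, Parseval to control $\sum_{i\ge 2} f_i(x)^2$ by $1/\pi(x)$).
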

Note that the reversible chain $P$ is irreducible and aperiodic if the absolute spectral gap $ \gamma_{\star} > 0$. \Cref{proposition-mixing} says that $P$ converges to the unique stationary distribution $\pi$ rapidly if $\gamma_{\star} $ is bounded away from~$0$. See \cite[Theorem~12.4]{levin2017markov} for a formal proof of \Cref{proposition-mixing}.

We will use the following proposition to bound the absolute value of the second largest eigenvalue of $P$. Similar results appeared in \cite[Theorem~13.1]{levin2017markov} and \cite{MFC98}.
\begin{proposition}
\label{proposition-markov-mat}
Let $\Omega$ be a state space with $n = |\Omega| \geq 2$. Let $\pi$ be a distribution with support $\Omega$. Let  $P \in \mathbb{R}_{\geq 0}^{\Omega \times \Omega}$ be the transition matrix of a Markov chain that is reversible with respect to $\pi$. Then the second largest eigenvalue of $P$ satisfies
\begin{align*}
\forall t \geq 1,\quad  |\lambda_2|^t \leq d(t) \triangleq \max_{x,y\in \Omega}\DTV{P^t(x,\cdot)}{P^t(y,\cdot)}.
\end{align*}
\end{proposition}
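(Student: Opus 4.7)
The plan is to exploit the spectral decomposition guaranteed by \Cref{proposition-rever} and to test the iterated chain against the second eigenfunction. Let $f_1 = \vec{1}, f_2, \ldots, f_n \in \mathbb{R}^\Omega$ be the orthonormal (with respect to the $\pi$-weighted inner product) real eigenvectors of $P$, with $P f_i = \lambda_i f_i$. The key observation is that $P^t f_2 = \lambda_2^t f_2$ pointwise, i.e.\
\begin{align*}
\lambda_2^t f_2(x) = (P^t f_2)(x) = \sum_{y \in \Omega} P^t(x, y) f_2(y) = \E_{Y \sim P^t(x,\cdot)}[f_2(Y)],
\end{align*}
so for any two states $x, y \in \Omega$,
\begin{align*}
\lambda_2^t \bigl(f_2(x) - f_2(y)\bigr) = \E_{Y \sim P^t(x,\cdot)}[f_2(Y)] - \E_{Y \sim P^t(y,\cdot)}[f_2(Y)].
\end{align*}

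Next, I would use the standard dual characterization of total variation distance: if a function $g \colon \Omega \to \mathbb{R}$ takes values in $[m, M]$, then $\abs{\E_\mu[g] - \E_\nu[g]} \leq (M - m)\, \DTV{\mu}{\nu}$ for any two distributions $\mu, \nu$. Apply this with $g = f_2$, $M \triangleq \max_{x \in \Omega} f_2(x)$ and $m \triangleq \min_{x \in \Omega} f_2(x)$, choosing $x^\star$ and $y^\star$ that attain the maximum and minimum of $f_2$ respectively. Then $f_2(x^\star) - f_2(y^\star) = M - m$, and combining with the identity above gives
\begin{align*}
\abs{\lambda_2}^t (M - m) = \abs{\lambda_2^t (f_2(x^\star) - f_2(y^\star))} \leq (M - m)\, \DTV{P^t(x^\star, \cdot)}{P^t(y^\star, \cdot)} \leq (M - m)\, d(t).
\end{align*}

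Finally, orthogonality $\sum_x f_2(x) \pi(x) = \langle f_1, f_2\rangle_\pi = 0$ combined with $\sum_x f_2(x)^2 \pi(x) = 1$ (so $f_2 \not\equiv 0$) forces $f_2$ to take both positive and negative values, and in particular $M - m > 0$. Dividing both sides by $M - m$ yields $|\lambda_2|^t \leq d(t)$, which is exactly the desired bound. There is no real obstacle here; the only point that requires care is justifying $M > m$ so that the division is valid, and this follows immediately from $f_2$ being a nontrivial eigenvector orthogonal to the all-ones vector.
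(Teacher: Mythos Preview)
Your proof is correct and takes essentially the same approach as the paper. The paper phrases the key inequality via Lipschitz constants (with the discrete metric) and the optimal coupling, but $\mathrm{Lip}(f_2)=M-m$ is exactly your oscillation, and the coupling bound $|P^tf(x)-P^tf(y)|\le \mathrm{Lip}(f)\,d(t)$ is precisely the dual characterisation of total variation that you invoke; both then divide by $M-m>0$ using orthogonality of $f_2$ to $\vec{1}$.
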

\begin{proof}
Define a distance function $\delta$ on $\Omega$ as:
\begin{align*}
	\forall x,y \in \Omega:\quad \delta(x,y)\defeq \*1[x\ne y].
\end{align*}
For every function $f:\Omega \to \mathbb{R}$, define its Lipschitz constant with respect to $\delta$ as
\[
\-{Lip}(f)\defeq \max_{x,y\in \Omega:x\ne y} \frac{\abs{f(x)-f(y)}}{\delta(x,y)}.
\]
Fix a pair $x,y \in \Omega$, we use $\+C(x,y)$ to denote the optimal coupling between $P^t(x,\cdot)$ and $P^t(y,\cdot)$. 
Note that 
\begin{align*}
  P^tf(x) = \E[X\sim P^t(x,\cdot)]{f(X)}.
\end{align*}
Then for any $t \geq 1$, any function $f:\Omega \to \mathbb{R}$ and any $x,y \in \Omega$,
\begin{align*}
\abs{P^tf(x) - P^tf(y)} = \abs{\E[(X,Y)\sim \+C(x,y)]{f(X)-f(Y)}} \leq \E[(X,Y)\sim \+C(x,y)]{\abs{f(X)-f(Y)}},
\end{align*}
where the equality holds due to linearity of expectation. Then for any $t \geq 1$, any $f$ and any $x,y$,
\begin{align*}
\abs{P^tf(x) - P^tf(y)} \leq \-{Lip}(f)\Pr[(X,Y)\sim \+C(x,y)]{X\neq Y}	=\-{Lip}(f)\DTV{P^t(x,\cdot)}{P^t(y,\cdot)}   \leq \-{Lip}(f) d(t).
\end{align*}
Note that the inequality above holds for all $x, y \in \Omega$.
It implies that $\-{Lip}(P^tf) \leq \-{Lip}(f)d(t).$

Recall $|\Omega| = n$.
Let  $f_1,f_2,\ldots,f_n \in \mathbb{R}^\Omega$ be the eigenvectors in \Cref{proposition-rever}, where $f_1 = \vec{1}$.
Let $f=f_2$ be the eigenvector of $\lambda_2$, we have
\begin{align*}
	\abs{\lambda_2}^t\cdot\-{Lip}(f_2) = \-{Lip}(\lambda_2^t f_2)=\-{Lip}(P^tf_2)\le \-{Lip}(f_2)d(t).
\end{align*}
Note that $f_2 \neq \vec{0}$.
Since $f_1 = \vec{1}$ is a constant vector and $\sum_{x \in \Omega}f_1(x)\pi(x)f_2(x) = \sum_{x \in \Omega}\pi(x)f_2(x) = 0$, vector $f_2$ can not be a constant vector.
Thus, $\-{Lip}(f_2)>0$, we have $\abs{\lambda_2}^t\le d(t)$ for all $t\ge 1$.
\end{proof} 

One powerful technique to bound $\DTV{P^t(x,\cdot)}{P^t(y,\cdot)}$ is the \emph{coupling of Markov chain}.  A coupling of $P$ is a joint random process $(X_t,Y_t)_{t \geq 0}$ such that $(X_t)_{t \geq 0}$ and $(Y_t)_{t \geq 0}$ individually follow the transition rule of $P$, and if $X_k = Y_k$, then $X_t = Y_t$ for all $t \geq k$.
The following result follows from~\Cref{proposition-coupling}.
\begin{proposition}
\label{proposition-Markov-coupling}
Let $P$ be a Markov chain on state space $\Omega$ with a stationary distribution $\pi$. 
Let $X \in \Omega$ be a state.
Let $(X_t,Y_t)_{t \geq 0}$ be a coupling of Markov chain such that $X_0 = x_0$ and $Y_0=y_0$. Then
\begin{align*}
\forall t \ge 1, \quad
\DTV{P^t(x_0,\cdot)}{P^t(y_0,\cdot)} \leq \Pr{X_t \neq Y_t}.
\end{align*}
\end{proposition}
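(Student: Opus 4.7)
My plan is to reduce this to the static coupling inequality in \Cref{proposition-coupling}. The key observation is that although $(X_t, Y_t)_{t \geq 0}$ is defined as a coupling of the entire Markov chain process (with the extra ``sticky'' condition that once the chains meet they stay together), for any \emph{fixed} $t \geq 1$ the pair $(X_t, Y_t)$ is in particular a coupling of two distributions over $\Omega$. So the first step is to verify that the marginal law of $X_t$ is exactly $P^t(x_0, \cdot)$ and the marginal law of $Y_t$ is exactly $P^t(y_0, \cdot)$. This is immediate from the defining property of a coupling of Markov chains: each of $(X_t)_{t \geq 0}$ and $(Y_t)_{t \geq 0}$ individually follows the transition rule of $P$, started from $x_0$ and $y_0$ respectively.

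With this identification in hand, the second step is simply to invoke \Cref{proposition-coupling} applied to the distributions $\mu = P^t(x_0, \cdot)$ and $\nu = P^t(y_0, \cdot)$, with the coupling given by $(X_t, Y_t)$. That proposition directly yields $\DTV{P^t(x_0,\cdot)}{P^t(y_0,\cdot)} \leq \Pr{X_t \neq Y_t}$, which is exactly the claimed inequality. There is no real obstacle in this argument; the coalescence condition ($X_k = Y_k$ implies $X_t = Y_t$ for all $t \geq k$) is not even needed for this particular statement, though it is what makes the bound useful in practice, since $\Pr{X_t \neq Y_t}$ is then monotonically non-increasing in $t$.
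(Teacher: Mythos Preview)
Your proposal is correct and matches the paper's approach exactly: the paper does not even write out a proof, stating only that the result follows from \Cref{proposition-coupling}, which is precisely the reduction you carry out. Your observation that the coalescence condition is not needed for this inequality is also accurate.
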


\section{Proof overview}
\label{section-overview}
In this section, we overview our proof of the main theorem (\Cref{theorem-main}).
We actually prove a slightly more general result. 
We first introduce the following definition of $(\eta_0,\eta_1,\ldots,\eta_{n-2})$-spectral independence, which is analogous to a similar notion in~\cite{anari2020spectral}.
\begin{definition}[$(\eta_0,\eta_1,\ldots,\eta_{n-2})$-Spectral Independence]
\label{def:spectral-independence-gen}
We say a distribution $\mu$ over $[q]^V$ with $|V|=n$ is \emph{$(\eta_0,\eta_1,\ldots,\eta_{n-2})$-spectrally independent}, 
if for every $0\le k\le n-2$, $\Lambda\subseteq V$ of size $k$ and any feasible $\sigma_{\Lambda}\in[q]^\Lambda$,
the spectral radius $\rho\left(\Psi^{\sigma_{\Lambda}}_{\mu}\right)$ of influence matrix $\Psi^{\sigma_{\Lambda}}_{\mu}$ satisfies
\begin{align*}
\rho\left(\Psi^{\sigma_{\Lambda}}_{\mu}\right) \leq \eta_k.
\end{align*}
\end{definition}

Since Glauber dynamics is reversible with respect to $\mu$, its transition matrix has real eigenvalues.
The following theorem gives a lower bound on its spectral gap when $\mu$ is spectrally independent.

\begin{theorem}
\label{theorem-general}
Let $\mu$ be a distribution over $[q]^V$, where $n=|V|$.
Let $\eta_0,\eta_1,\ldots,\eta_{n-2}$ be a sequence where $0\leq \eta_k <n-k-1$ for all $0\leq k \leq n-2$.
If $\mu$ is $(\eta_0,\eta_1,\ldots,\eta_{n-2})$-spectrally independent, 
then the Glauber dynamics for $\mu$ has spectral gap
\begin{align*}
1 - \lambda_2(P_{\mathsf{Glauber}}) \geq  \frac{1}{n}\prod_{k=0}^{n-2}\tp{1 - \frac{\eta_k}{n-k-1}},
\end{align*}
where 
$\lambda_2(P_{\mathsf{Glauber}})$ is the second largest eigenvalue of transition matrix $P_{\mathsf{Glauber}}$. 
\end{theorem}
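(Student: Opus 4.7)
The plan is to combine the local-to-global machinery of Alev and Lau \cite{alev2020improved} with a coupling argument that controls each local walk's second eigenvalue by the spectral radius of our influence matrix $\Psi_\mu^{\sigma_\Lambda}$. First, I would attach to $\mu$ the standard weighted simplicial complex whose top-dimensional faces are feasible configurations $\sigma \in [q]^V$ with weight $\mu(\sigma)$, so that the top-level down-up walk on this complex coincides, up to the standard $1/n$ factor, with the Glauber dynamics for $\mu$. The Alev--Lau theorem then bounds $1 - \lambda_2(P_{\mathsf{Glauber}})$ from below by the telescoping product
\[
\frac{1}{n}\prod_{k=0}^{n-2}(1 - \lambda_k),
\]
where $\lambda_k = \max_{\sigma_\Lambda : |\Lambda|=k} \lambda_2\bigl(P^{\mathrm{loc}}_{\sigma_\Lambda}\bigr)$ is the worst-case second eigenvalue of the local walk at the link of a size-$k$ pinning. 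It therefore suffices to prove $\lambda_k \leq \eta_k / (n-k-1)$ for every $k \in \{0, 1, \dots, n-2\}$, after which the product form of the claimed bound follows immediately.

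The heart of the argument is this uniform bound on $\lambda_k$. Fix a feasible pinning $\sigma_\Lambda$ with $|\Lambda| = k$; the local walk $P^{\mathrm{loc}}_{\sigma_\Lambda}$ lives on pairs $(v, c)$ with $v \in V \setminus \Lambda$ and $c \in \Omega_v^{\sigma_\Lambda}$, weighted by the conditional marginal $\mu(\cdot \mid \sigma_\Lambda)$. I would bound $\lambda_2\bigl(P^{\mathrm{loc}}_{\sigma_\Lambda}\bigr)$ via \Cref{proposition-markov-mat}: producing, for each pair of starting states, a coupling with disagreement probability $D(t)$ gives $|\lambda_2|^t \leq D(t)$, and \Cref{proposition-radius} (Gelfand's formula) lifts this to $|\lambda_2| \leq \lim_{t \to \infty} D(t)^{1/t}$. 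The key step is to construct this coupling along a Hayes-style \cite{hayes2006simple} path-coupling scheme: it suffices to start from two states $(u, i)$ and $(u, j)$ sharing the vertex $u$, and at every step couple the resampled pair using the \emph{optimal coupling} (\Cref{proposition-coupling}) between the conditional marginals $\mu_v^{\sigma_\Lambda, u \gets i}$ and $\mu_v^{\sigma_\Lambda, u \gets j}$. By the definition of $\Psi_\mu^{\sigma_\Lambda}$ in \Cref{def-inf-matrix}, the single-step probability that the coupled chains disagree over vertex $v$ is at most $\Psi_\mu^{\sigma_\Lambda}(u, v)/(n-k-1)$.

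Iterating, I would define a non-negative disagreement vector $d_t \in \mathbb{R}_{\geq 0}^{V \setminus \Lambda}$ whose $v$-entry upper-bounds the probability that at time $t$ the coupled chains currently sit over the vertex $v$ with different values. The worst-case maximum $\max_{i,j}$ built into \Cref{def-inf-matrix} ensures that this bound on the one-step transition of $d_t$ holds regardless of which values the two chains presently carry, so $d_t$ evolves under left-multiplication by a non-negative matrix entrywise dominated by $\Psi_\mu^{\sigma_\Lambda}/(n-k-1)$. Therefore $D(t) \leq \|d_t\|_1 \leq \|(\Psi_\mu^{\sigma_\Lambda}/(n-k-1))^t\|_1$, and \Cref{proposition-radius} applied to the limit $t \to \infty$ yields $\lim_{t \to \infty} D(t)^{1/t} \leq \rho(\Psi_\mu^{\sigma_\Lambda})/(n-k-1) \leq \eta_k/(n-k-1)$ by spectral independence. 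Substituting into the Alev--Lau product completes the proof.

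The main obstacle I expect is precisely this projection step: the local walk lives on pairs $(v, c)$ while $\Psi_\mu^{\sigma_\Lambda}$ is indexed only by vertices, so the reduction to a $v$-indexed disagreement vector is only valid because of the worst-case maximum over $(i, j)$ in \Cref{def-inf-matrix}. Care is needed to verify that this maximum remains sharp through all coupled transitions (not merely the first), so that one controls the evolution of $d_t$ by a single non-negative matrix rather than a step-dependent product, and can consequently invoke Gelfand's formula to convert the entrywise domination into a clean spectral-radius bound.
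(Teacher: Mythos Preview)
Your overall architecture matches the paper's exactly: encode $\mu$ as a weighted simplicial complex, apply Alev--Lau to reduce to bounding $\lambda_2$ of each local walk, then use a coupling plus \Cref{proposition-markov-mat} and Gelfand's formula to bound that by $\rho(\Psi_\mu^{\sigma_\Lambda})/(n-k-1)$. There is, however, one genuine gap. Your assertion that ``it suffices to start from two states $(u,i)$ and $(u,j)$ sharing the vertex $u$'' is not justified: \Cref{proposition-markov-mat} requires control of $d(t)=\max_{x_0,y_0}\DTV{P^t(x_0,\cdot)}{P^t(y_0,\cdot)}$ over \emph{all} pairs, including $(u_1,i),(u_2,j)$ with $u_1\neq u_2$. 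For the non-lazy local walk $P_{\sigma_\Lambda}$, from such a pair the first-step vertex choices have different supports ($V\setminus\Lambda\setminus\{u_1\}$ versus $V\setminus\Lambda\setminus\{u_2\}$), so they cannot be perfectly coupled; moreover, even when the two chains land on the same vertex $v$, the colour marginals are $\mu_v^{\sigma_\Lambda,u_1\gets i}$ and $\mu_v^{\sigma_\Lambda,u_2\gets j}$, conditioned on \emph{different} source vertices, and their total variation distance is not controlled by any entry of $\Psi_\mu^{\sigma_\Lambda}$. Consequently the disagreement vector does not evolve under $\Psi_\mu^{\sigma_\Lambda}/(n-k-1)$ from step~1, and your iteration does not start.

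The paper closes this gap with a laziness trick: it analyses $Q_\sigma \defeq \frac{n-k-1}{n-k}P_\sigma + \frac{1}{n-k}I$ instead. One step of $Q_\sigma$ picks the next vertex uniformly in all of $V\setminus\Lambda$ (including the current one), so from \emph{any} starting pair the vertex coordinates synchronise after one step, and thereafter the disagreement vector is dominated entrywise by iterates of $A=\frac{1}{n-k}\bigl((\Psi_\mu^{\sigma_\Lambda})^T+I\bigr)$. This yields $\lambda_2(Q_\sigma)\le\rho(A)\le(\rho(\Psi_\mu^{\sigma_\Lambda})+1)/(n-k)$, and the elementary identity $\lambda_2(P_\sigma)=\frac{n-k}{\,n-k-1\,}\bigl(\lambda_2(Q_\sigma)-\frac{1}{n-k}\bigr)$ cancels the $+1$ exactly to give $\lambda_2(P_\sigma)\le\rho(\Psi_\mu^{\sigma_\Lambda})/(n-k-1)$. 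Apart from this device (which also explains the transpose and the $+I$ you would otherwise be missing), your plan and the paper's proof coincide.
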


We prove this theorem on general domain of size $q\ge 2$.
The main theorem (\Cref{theorem-main}) is a corollary of \Cref{theorem-general}, because if $\mu$ is $(C,\eta)$-spectrally independent, then $\mu$ is $(\eta_0,\eta_1,\ldots,\eta_{n-2})$-spectrally independent with $\eta_k = \min\left\{ C, \eta(n-k-1) \right\}$.
We first give a proof overview of \Cref{theorem-general}, then prove main theorem (\Cref{theorem-main}) via \Cref{theorem-general} in~\Cref{section-proof-main}.

\subsection{Glauber dynamics and local random walks}
To prove \Cref{theorem-general},
we first interpret the Glauber dynamics on $[q]^V$ as a down-up random walk on simplicial complexes.
Then we apply the local-to-global theorem due to Alev and Lau~\cite{alev2020improved} to reduce the task of analysing Glauber dynamics (a global random walk) to the task of analysing local random walks. 
Similar routines have been applied in several previous works~\cite{ALOV19, CGM19, anari2020spectral,chen2020rapid}.
\begin{definition}[Local Random Walk]
\label{definition-local-walk}
For any subset $\Lambda \subseteq V$, any feasible partial configuration $\sigma_{\Lambda} \in [q]^{\Lambda}$,  
define local random walk $P_{\sigma_\Lambda}$ on $U_{\sigma_\Lambda}=\{(u,c) \in \overline{\Lambda} \times [q] \mid \mu_u^{\sigma_\Lambda}(c) > 0\}$ as
\begin{align}
\label{eq-def-one-skeleton-walk-spin}
\forall (u,i),(v,j) \in U_{\sigma_\Lambda},\quad P_{\sigma_\Lambda}((u,i), (v,j)) \triangleq  \frac{\one{u\neq v}}{\abs{V} - \abs{\Lambda}-1} \mu^{\sigma_{\Lambda}, u \gets i}_v(j),
\end{align}
where $\overline{\Lambda}=V \setminus \Lambda$, and  $ \mu^{\sigma_{\Lambda}, u \gets i}_v$ is the marginal distribution on $v$ induced from $\mu$ conditional on the configuration on $\Lambda$ fixed as $\sigma_\Lambda$ and that $u$ is fixed to $i$.
\end{definition}

\Cref{corollary-AL} below shows that the second largest eigenvalue $\lambda_2(P_{\mathsf{Glauber}})$ of Glauber dynamics is small as long as the second largest eigenvalues $\lambda_2(P_{\sigma_\Lambda})$\footnote{Local random walk $P_{\sigma_\Lambda}$ has real eigenvalues because $P_{\sigma_\Lambda}$ is reversible. See \Cref{section-reduction} for more details.} of local random walks are all small. 
\begin{condition}
\label{condition-local-expander}
Let $\mu$ be a distribution over $[q]^V$, where $n=|V|$.
There exists a sequence $\alpha_0,\alpha_1,\ldots,\alpha_{n-2}$ such that for every $0\le k\le n-2$, $\Lambda\subseteq V$ of size $k$ and any feasible $\sigma_{\Lambda}\in[q]^\Lambda$, the transition matrix $P_{\sigma_\Lambda}$ satisfies
\begin{align*}
\lambda_2(P_{\sigma_\Lambda}) \leq \alpha_{k},	
\end{align*}
where $\lambda_2(P_{\sigma_\Lambda})$ is the second largest eigenvalue of the matrix $P_{\sigma_\Lambda}$.
\end{condition}
\begin{lemma}[\cite{alev2020improved}]
\label{corollary-AL}
Let $\mu$ be a distribution over $[q]^V$, where $n=|V|$.
Let $\alpha_0,\alpha_1,\ldots,\alpha_{n-2}$ be a sequence where $0\leq \alpha_i <1$ for all $0\leq i \leq n-2$.
If $\mu$ satisfies~\Cref{condition-local-expander} with $\alpha_0,\alpha_1,\ldots,\alpha_{n-2}$, then the Glauber dynamics for $\mu$ has spectral gap
\begin{align*}
1 - \lambda_2(P_{\mathsf{Glauber}}) \geq \frac{1}{n}\prod_{k = 0}^{n-2}(1 - \alpha_k),
\end{align*}
where $\lambda_2(P_{\mathsf{Glauber}})$ is the second largest eigenvalue of transition matrix  $P_{\mathsf{Glauber}}$.
\end{lemma}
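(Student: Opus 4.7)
The plan is to reduce the statement directly to the local-to-global theorem of Alev and Lau by realizing $\mu$ as the top-level weight function of an appropriate pure $n$-dimensional weighted simplicial complex. Take the ground set to be $\{(v,c) : v \in V, c \in [q]\}$, and let the faces be all sets $F = \{(v_1,c_1),\ldots,(v_k,c_k)\}$ whose projections $v_1,\ldots,v_k$ are distinct and whose associated partial configuration is feasible for $\mu$. The weight at the top level is $\mu$ itself, and the weights at lower levels are induced by the standard pushforward along the face maps.

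First I would verify that the top-level down-up walk on this complex is exactly $P_{\mathsf{Glauber}}$. The down-up walk at dimension $n$ removes a uniformly random element $(v,c)$ of the current top face and then resamples a replacement conditional on the remaining $n-1$ elements; by construction, this coincides with picking a uniformly random variable $v\in V$ and resampling its value from the marginal of $\mu$ conditioned on the current assignment to $V\setminus\{v\}$, which is precisely one step of Glauber dynamics. Reversibility with respect to $\mu$ is immediate from the detailed balance of the down-up walk with respect to the top-level weight.

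Next I would identify, for every $\Lambda\subseteq V$ of size $k$ and every feasible $\sigma_\Lambda$, the local random walk $P_{\sigma_\Lambda}$ of \Cref{definition-local-walk} with the (non-lazy) one-skeleton walk on the link of the face corresponding to $\sigma_\Lambda$. The link has ground set $U_{\sigma_\Lambda}=\{(u,c)\in\overline{\Lambda}\times[q] : \mu_u^{\sigma_\Lambda}(c)>0\}$, and its one-skeleton walk starting from $(u,i)$ first samples a uniformly random top face of the link, which amounts to drawing an extension from $\mu^{\sigma_\Lambda,u\gets i}$, then picks a uniformly random new coordinate $v\neq u$; this reproduces the transition probabilities in \eqref{eq-def-one-skeleton-walk-spin}, up to the normalization $1/(n-k-1)$.

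With these two identifications in place, \Cref{condition-local-expander} says exactly that every link of codimension at least $2$ is an $\alpha_k$-local-spectral-expander in the sense of \cite{alev2020improved}. The local-to-global theorem of Alev and Lau then yields the spectral gap lower bound $\frac{1}{n}\prod_{k=0}^{n-2}(1-\alpha_k)$ for the top-level down-up walk, which by the first step is $P_{\mathsf{Glauber}}$; this gives the lemma. The only genuine work is bookkeeping: matching the normalization factor $1/(n-k-1)$ and the indexing of the link dimension with the convention in \cite{alev2020improved}, and checking that the weighted simplicial complex we built is pure and reversible so that the hypotheses of their theorem apply verbatim. Once the dictionary is set, no further analytic work is needed.
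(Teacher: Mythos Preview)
Your proposal is correct and follows essentially the same route as the paper: build the pure $n$-dimensional weighted simplicial complex on the ground set $V\times[q]$ with top weights given by $\mu$, identify the down-up walk with $P_{\mathsf{Glauber}}$ (the paper's Observation~\ref{observation-spin-complexes}), check that the one-skeleton walk on each link coincides with the local walk $P_{\sigma_\Lambda}$ of \Cref{definition-local-walk}, and then invoke the Alev--Lau local-to-global theorem (Theorem~\ref{theorem-AL}). The paper carries out exactly this dictionary in Section~\ref{section-reduction}, computing the one-skeleton transition probabilities directly rather than via your ``sample a top face, then a random coordinate'' description, but the two descriptions are equivalent.
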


\Cref{corollary-AL} relates Glauber dynamics to local random walks, which provides a powerful tool to analyse Glauber dynamics, 
because the state space of local random walks are exponentially smaller compared to that of Glauber dynamics.
\Cref{corollary-AL} (proved in \Cref{section-reduction}) is an easy corollary of the main result in~\cite{alev2020improved}.

\subsection{Analysis of local random walks}
Our remaining task is to bound the second largest eigenvalues of local random walks. 

Our main technical contribution is the following lemma (proved in \Cref{section-coupling}), which states that for distribution $\mu$ over $[q]^V$ with general domain size $q\ge 2$, these second largest eigenvalues of local random walks are always  small if $\mu$ is spectrally independent.

\begin{lemma}
\label{lemma-decay-imples-lambda2}
Let $\mu$ be a distribution over $[q]^V$, where $n=|V|$.
If $\mu$ is $(\eta_0,\eta_1,\ldots,\eta_{n-2})$-spectrally independent, 
then $\mu$ satisfies \Cref{condition-local-expander} with a sequence $\alpha_0,\alpha_1,\ldots,\alpha_{n-2}$ such that 
\begin{align*}
\forall 0\le k\le n-2:\quad \alpha_k = \frac{\eta_k}{n-k-1}.	
\end{align*}
\end{lemma}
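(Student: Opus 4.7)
The plan is to combine a coupling analysis of the local random walk $P_{\sigma_\Lambda}$ with Propositions \ref{proposition-markov-mat} and \ref{proposition-radius}. Fix any $0 \le k \le n-2$, $\Lambda \subseteq V$ of size $k$, and feasible $\sigma_\Lambda$; abbreviate $m = n - k$, $P = P_{\sigma_\Lambda}$, $\Psi = \Psi^{\sigma_\Lambda}_\mu$, and $M = \Psi/(m-1)$. A direct check shows that $P$ is reversible with respect to $\pi(u,i) = \mu_u^{\sigma_\Lambda}(i)/m$ on $U_{\sigma_\Lambda}$, so Proposition \ref{proposition-markov-mat} gives $|\lambda_2(P)|^t \le d(t) \triangleq \max_{x,y \in U_{\sigma_\Lambda}} \DTV{P^t(x,\cdot)}{P^t(y,\cdot)}$ for every $t \ge 1$. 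It therefore suffices to prove the asymptotic estimate $\limsup_{t \to \infty} d(t)^{1/t} \le \rho(M) = \rho(\Psi)/(m-1)$; the spectral-independence hypothesis then yields $\rho(\Psi)/(m-1) \le \eta_k/(n-k-1)$, which is the desired conclusion.

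I would estimate $d(t)$ by designing a coupling $(X_t, Y_t)_{t\ge 0}$ of two runs of $P$ and controlling $\Pr{X_t \ne Y_t}$ by iterated actions of $M$. The core building block is the \emph{same-vertex} case: when $X_\tau = (u, a)$ and $Y_\tau = (u, b)$ with $a \ne b$, both chains pick a common $v \in \overline{\Lambda} \setminus \{u\}$ uniformly, then couple the color draws from $\mu_v^{\sigma_\Lambda, u \gets a}$ and $\mu_v^{\sigma_\Lambda, u \gets b}$ optimally; by \Cref{def-inf-matrix}, the post-step probability of a same-vertex disagreement at a specific $w$ is at most $\Psi(u,w)/(m-1) = M(u,w)$. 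Observe that from a same-vertex start this coupling forces both chains to stay at a common vertex forever, so setting $p_\tau(w)$ to be the probability that $X_\tau, Y_\tau$ both sit at vertex $w$ but disagree in color, we obtain the linear recursion $p_{\tau+1}(w) \le \sum_z p_\tau(z)\,M(z, w)$. Iterating from $p_0 = \mathbf{1}_u$ gives $p_t(w) \le (M^t)(u,w)$, and hence $\Pr{X_t \ne Y_t} \le \|M^t\|_\infty$ for every same-vertex starting pair.

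The subtle case is $X_0 = (u, i)$, $Y_0 = (u', j)$ with $u \ne u'$, which can also attain the maximum in $d(t)$. My plan is a one-step warm-up that couples the vertex choices optimally, so that with probability $(m-2)/(m-1)$ both chains pick a common $v \in \overline{\Lambda} \setminus \{u, u'\}$; on this event the triangle inequality $\DTV{\mu_v^{\sigma_\Lambda, u \gets i}}{\mu_v^{\sigma_\Lambda, u' \gets j}} \le \Psi(u, v) + \Psi(u', v)$ through the unconditional marginal $\mu_v^{\sigma_\Lambda}$ bounds the color-disagreement probability, while the residual $1/(m-1)$-probability branch lands at different vertices and triggers another warm-up round. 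The time until the two chains first share a common vertex is stochastically dominated by a geometric variable with mean $O(1)$, after which the same-vertex bookkeeping from the previous paragraph applies. Chaining these pieces and crudely bounding the still-different-vertex tail by $1$ gives $d(t) \le \mathrm{poly}(m) \cdot \|M^{t - O(1)}\|_\infty$ for every starting pair.

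Finally, Proposition \ref{proposition-radius} yields $\lim_{t \to \infty} \|M^t\|_\infty^{1/t} = \rho(M) = \rho(\Psi)/(m-1)$, and the polynomial-in-$m$ prefactor is killed by the $t$-th root, so $|\lambda_2(P_{\sigma_\Lambda})| \le \rho(\Psi)/(m-1) \le \eta_k/(n-k-1)$, verifying Condition \ref{condition-local-expander} with $\alpha_k = \eta_k/(n-k-1)$. The main technical obstacle I anticipate is organising the different-vertex warm-up cleanly so that its contribution genuinely reduces to a matrix power of $M$, rather than introducing a structurally worse matrix; the $t$-th root limit is the escape valve that absorbs any bounded number of ``non-clean'' initial steps, and the entire argument is manifestly independent of the domain size $q$, which is exactly what allows the spectral-independence framework to extend beyond the Boolean case handled in \cite{anari2020spectral}.
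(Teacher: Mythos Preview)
Your overall plan — couple two copies of the local walk, feed the coupling into Proposition~\ref{proposition-markov-mat}, and take $t$-th roots via Proposition~\ref{proposition-radius} — is exactly the paper's route. The difference is that you run the coupling directly on $P_{\sigma_\Lambda}$, whereas the paper first passes to the \emph{lazy} walk $Q_\sigma \triangleq \frac{m-1}{m}P_{\sigma_\Lambda} + \frac{1}{m}I$. That seemingly cosmetic change is what makes the argument close, and your version has a genuine gap without it.

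The issue is your target estimate $\limsup_{t\to\infty} d(t)^{1/t}\le \rho(M)=\rho(\Psi)/(m-1)$: it is false for the non-lazy walk. Because $P_{\sigma_\Lambda}$ always leaves its current vertex, even when $\Psi=0$ (total independence) the vertex marginal performs the non-lazy walk on the complete graph $K_m$, whose second eigenvalue is $-1/(m-1)$; a direct computation gives $d(t)=(m-1)^{-t}$, so $\limsup d(t)^{1/t}=1/(m-1)>0=\rho(M)$. More generally the ``different-vertex'' tail you bound by $1$ contributes exactly a floor of $(m-1)^{-t}$, which dominates $\rho(M)^t$ whenever $\rho(\Psi)<1$ — and this regime is needed (in the proof of \Cref{theorem-main} one has $\eta_k=\eta(m-1)<1$ for small $m$). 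At the extreme $m=2$ the walk is bipartite on vertices and $d(t)=1$ for every $t$, so your argument gives nothing at all for $k=n-2$.

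The paper's fix is clean: in $Q_\sigma$ each chain picks the next vertex uniformly from \emph{all} of $\overline\Lambda$ (including its current vertex), so the two copies can be coupled to share the same vertex from step~$1$ onward regardless of the starting pair — your warm-up and the different-vertex case simply disappear. The same-vertex recursion then yields $d_{Q}(t)\le \|A^{t-1}\|_1$ with $A=\frac{1}{m}\bigl((\Psi)^{T}+I\bigr)$, hence $\lambda_2(Q_\sigma)\le \rho(A)\le (\rho(\Psi)+1)/m$, and translating back via $\lambda_2(P_{\sigma_\Lambda})=\frac{m}{m-1}\bigl(\lambda_2(Q_\sigma)-\frac{1}{m}\bigr)$ gives $\lambda_2(P_{\sigma_\Lambda})\le \rho(\Psi)/(m-1)$. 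The added identity in $A$ is exactly what absorbs the $1/(m-1)$ vertex-mixing floor that breaks your direct approach.
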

For the special case $q = 2$, Anari, Liu and Oveis Gharan~\cite{anari2020spectral} proved a similar version of \Cref{lemma-decay-imples-lambda2}. 
They used a linear algebra argument 
to identify the second largest eigenvalue of the local random walk with the largest eigenvalue of the signed influence matrix.
In such analysis, some key identities crucially rely on that $q = 2$, which makes it hard to extend it to general domains of size $q>2$.

Alternatively, we propose a new \emph{coupling} based argument to show the rapid mixing of the local random walk $P_{\sigma_\Lambda}$, assuming the spectral independence, which implies an upper bound of $\lambda_2(P_{\sigma_\Lambda})$.
Specifically, we construct a coupling $(X_t,Y_t)_{t\geq 0}$ for each local random walk, and show that the two chains coalesce (namely $X_t = Y_t$) quickly if $\mu$ is spectrally independent.
Then we use \Cref{proposition-markov-mat} and \Cref{proposition-Markov-coupling} to bound the second largest eigenvalue.
Our coupling argument is simple and combinatorial, reminiscent of an analysis by Hayes \cite{hayes2006simple}.
And it has the advantage of being applicable to joint distributions with general domain sizes.
Note that here we are only giving an upper bound for $\lambda_2(P_{\sigma_\Lambda})$, which is sufficient for our purpose, rather than establishing the equality as in \cite{anari2020spectral} for the case with $q=2$.
%
%
Detailed analysis is given in \Cref{section-coupling}.

\subsection{Proof of main theorem}
\label{section-proof-main}
It is straightforward to verify that \Cref{theorem-general} is a corollary of \Cref{corollary-AL} and \Cref{lemma-decay-imples-lambda2}.
We now use \Cref{theorem-general} to prove the main theorem (\Cref{theorem-main}).
\begin{proof}[Proof of \Cref{theorem-main}]
Since $\mu$ is $(C,\eta)$-spectrally independent (\Cref{def:spectral-independence}) for $C \geq 0$ and $1\leq \eta < 1$, by \Cref{def:spectral-independence-gen},  $\mu$ is $(\eta_0,\eta_1,\ldots,\eta_{n-2})$-spectrally independent for 
\begin{align*}
\eta_k = \min\left\{ C, \eta(n-k-1) \right\}.	
\end{align*}
By \Cref{theorem-general}, we have
\begin{align*}
1 - \lambda_2(P_{\mathsf{Glauber}}) 
&\geq \frac{1}{n}\prod_{k = 0}^{n-2}\tp{1 - \frac{\eta_k}{n-k-1}} \geq \frac{1}{n}\prod_{k = 0}^{n-2}\tp{1-\min\left\{\frac{C}{n-k-1}, \eta  \right\}}=\frac{1}{n}\prod_{k = 1}^{n-1}\tp{1-\min\left\{\frac{C}{k}, \eta  \right\}}.
\end{align*}
Thus, the spectral gap has the following lower bound
\begin{align*}
1 - \lambda_2(P_{\mathsf{Glauber}})&\geq \frac{1}{n}\left( \prod_{k=1}^{2+2C-1}(1-\eta) \right) 	\left( \prod_{k = 2+2C}^{n-1}\left(1-\frac{C}{k}\right) \right)\geq \frac{(1-\eta)^{2+2C}}{n} \prod_{k = 2+2C}^ n\left(1-\frac{C}{k}\right)\\
 &\geq \frac{(1-\eta)^{2+2C}}{n}\exp\left(-\sum_{k=2+2C}^{n}\frac{2C}{k}\right) \geq \frac{(1-\eta)^{2+2C}}{n}\exp\left(-2C\sum_{k=2}^n\frac{1}{k}\right)\\
 (\star)\quad &\geq \frac{(1-\eta)^{2+2C}}{n}\exp\left(-2C\ln n\right) = \frac{(1-\eta)^{2+2C}}{n^{1+2C}},
 \end{align*}
where $(\star)$ holds because $\sum_{k = 2}^n\frac{1}{k} \leq  \ln n$.

Since the transition matrix of Glauber dynamics is positive semi-definite,  
all of its eigenvalues are real~\cite{DGU14,alev2020improved}.
Let the eigenvalues be $1 = \lambda_1(P_{\mathsf{Glauber}}) \geq \lambda_2(P_{\mathsf{Glauber}}) \geq \ldots \geq \lambda_{\abs{\Omega}}(P_{\mathsf{Glauber}}) \geq 0$, where $\Omega \subseteq [q]^V$ is the support of $\mu$. 
The absolute spectral gap of Glauber dynamics has the following lower bound
\begin{align*}
\gamma_{\star} = 1 - \lambda_{\star} = 1 - \max_{2\leq i \leq \abs{\Omega}}\abs{\lambda_i(P_{\mathsf{Glauber}})} = 1 - \lambda_2(P_{\mathsf{Glauber}}) \geq \frac{(1-\eta)^{2+2C}}{n^{1+2C}}.
\end{align*}
By \Cref{proposition-mixing}, we have
\begin{align*}
\tmix (\epsilon) &\leq  \frac{1}{\gamma_{\star}} \left(  \log \frac{1}{\epsilon \mu_{\min}} \right)  
\leq   \frac{n^{1+2C}}{(1-\eta)^{2+2C}}\left(  \log \frac{1}{\epsilon \mu_{\min}} \right).\qedhere
\end{align*}
\end{proof}

\section{Simplicial complexes and Glauber dynamics}
\label{section-reduction}
In this section, we relate the Glauber dynamics to the random walk on simplicial complexes.
As explained in the proof overview, we reduce the task of giving an upper bound for the second largest eigenvalue of Glauber dynamics 
to the task of giving an upper bound for the second largest eigenvalues of local random walks (\Cref{corollary-AL}).
\subsection{Simplicial complexes and random walks}
Let $U$ be a ground set. 
A \emph{simplicial complex} $X \subseteq 2^U$ is a family of subset that is downward closed, i.e.~if $\alpha \in X$, then $\beta \in X$ for all $\beta \subseteq \alpha$. 
Each subset $\alpha \in X$ is called a \emph{face}. 
The \emph{dimension} of a face $\alpha$ is its size $|\alpha|$.\footnote{In some papers, such as \cite{KO20,alev2020improved}, the dimension is defined to be $|\alpha|-1$.}
We use $X(j)$ to denote the set of faces with dimension $j$. 
The dimension of a simplicial complex $X$ is the maximum dimension of all its faces.
We call $X$ a \emph{pure} $d$-dimensional simplicial complex if every maximal face of $X$ is of dimension $d$. We only consider pure simplicial complexes in this paper.

We consider the weighted simplicial complexes.
Let $X$ be a pure $d$-dimensional simplicial complex. Given a weight function $\Pi: X(d) \rightarrow \mathbb{R}_{\geq 0}$, define the induced weights for all faces in $X$ by
\begin{align}
\label{eq-induced-weight}
\forall \alpha \in X, \quad \Pi(\alpha) = \sum_{\beta \in X(d): \beta \supseteq \alpha} \Pi(\beta).	
\end{align}

For each face $\alpha \in X$, the \emph{link} $X_{\alpha}$ is simplicial complexes defined by
\begin{align*}
X_{\alpha} \triangleq \{\beta \setminus \alpha \mid \beta \in X \land \alpha \subseteq \beta \}.
\end{align*}
Let $\Pi_{\alpha}$ be the weight of $X_{\alpha}$ induced from $\Pi$, i.e. for each face $\beta \in X_{\alpha}$,
\begin{align*}
\Pi_{\alpha}(\beta) \triangleq \Pi(\alpha \uplus \beta).	
\end{align*}

The \emph{one-skeleton} of $X_{\alpha}$ is a weighted graph $G_{\alpha} = (V_\alpha, E_\alpha, \Phi_\alpha)$, where $V_\alpha = X_{\alpha}(1)$ is the set of singletons,  $E_{\alpha}=X_{\alpha}(2)$ is the set of 2-dimensional faces, and $\Phi_{\alpha}(u,v) = \Pi_{\alpha}(\{u,v\})$ for all $\{u,v\} \in E_{\alpha}$.
We use $P_{\alpha}$ to denote the simple (non-lazy) random walk on one-skeleton $G_{\alpha}$. The transition probability is defined by
\begin{align}
\label{eq-def-one-skeleton-walk}
\forall u, v \in V_{\alpha},\quad P_{\alpha}(u, v) \triangleq 	\begin{cases}
 \frac{\Phi_\alpha(u,v)}{\sum_{w:\{u,w\} \in E_{\alpha}}\Phi_{\alpha}(u,w)}		& \text{if } \{u,v\} \in E_{\alpha};\\
 0		& \text{if } \{u,v\} \notin E_{\alpha}.
 		 	\end{cases}
\end{align}

Given a pure $d$-dimensional weighted simplicial complexes $(X, \Pi)$, define the following down-up random walk $P_d^{\lor}$ on $X(d)$. Suppose the current state is $\sigma_t \in X(d)$, the next state $\sigma_{t+1} \in X(d)$ is generated as follows
\begin{itemize}
\item (down walk) pick  $x \in \sigma_t$ uniformly at random, and drop $x$ to obtain $\sigma' = \sigma_t \setminus \{x\} \in X(d-1)$;
\item (up walk) sample $\sigma_{t+1} \in X(d)$ satisfying $\sigma' \subseteq \sigma_{t+1}$ with probability proportional to $\Pi(\sigma_{t+1})$.
\end{itemize}
Therefore, the transition matrix of down-up random walk is defined by
\begin{align*}
\forall \alpha,\beta \in X(d), \quad P_{d}^{\lor}(\alpha, \beta) \triangleq \begin{cases}
\sum_{\tau \in X(d-1):\tau \subset \alpha}\frac{\Pi(\alpha)}{d \cdot \Pi(\tau)} & \text{if } \alpha = \beta;\\ 
\frac{\Pi(\beta)}{d\cdot \Pi(\alpha \cap \beta)} & \text{if } \alpha \cap \beta \in X(d-1);\\
0 & \text{otherwise}.
 \end{cases}
\end{align*}

The relation between down-up random walk $P^\lor_d$ and random walks one-skeletons $P_{\alpha}$ was studied in many works~\cite{Opp18,KO20,alev2020improved}.
Note that both random walks $P^\lor_d$ and $P_{\alpha}$ are reversible. By \Cref{proposition-rever}, both of them have real eigenvalues.
\begin{definition}[local-spectral expander~\cite{Opp18,KO20,alev2020improved}]
\label{definition-local-expander}
Let $(X,\Pi)$ be a pure $d$-dimensional weighted simplicial complexes.
We say that $(X,\Pi)$ is a $(\gamma_0,\gamma_1,\ldots,\gamma_{d-2})$-local-spectral expander if for any $0\leq k \leq d-2$, it holds that
\begin{align*}
\max\{\lambda_2(P_{\alpha}) \mid  \alpha \in X(k) \}	 \leq \gamma_k,
\end{align*}
where $\lambda_2(P_{\alpha})$ stands for the second largest eigenvalue of $P_{\alpha}$, and $P_{\alpha}$, as defined in~\eqref{eq-def-one-skeleton-walk}, is the transition matrix for the simple (non-lazy) random walk on one-skeleton of link $X_{\alpha}$.
\end{definition}

\begin{theorem}[\cite{alev2020improved}]
\label{theorem-AL}
Let $(X,\Pi)$ be a pure $d$-dimensional weighted simplicial complexes. If $(X,\Pi)$ is 	a $(\gamma_0,\gamma_1,\ldots,\gamma_{d-2})$-local-spectral expander, then 
\begin{align*}
 \lambda_2(P_d^\lor) \leq  1- \frac{1}{d}\prod_{k=0}^{d-2}(1-\gamma_k),
\end{align*}
where $\lambda_2(P_d^\lor)$ is the second largest eigenvalue of down-up random walk $P^{\lor}_d$.
\end{theorem}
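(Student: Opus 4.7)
The plan is to prove this via the local-to-global framework for high-dimensional expanders, by induction on the dimension $d$. The base case $d=1$ is immediate: the product is empty and $P^\lor_1$ simply drops the unique element and resamples from the $\Pi$-weighted distribution, so $\lambda_2(P^\lor_1)=0$ and the claimed inequality $1-\lambda_2(P^\lor_1)\geq 1$ holds trivially.

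For the inductive step, I would first recast $P^\lor_d$ in operator-theoretic language. Define down operators $D_k$ that average from $X(k)$ to $X(k-1)$ and up operators $U_k$ as their $\Pi$-weighted adjoints; then $P^\lor_d$ and the up-down walk $P^\wedge_{d-1}$ on $X(d-1)$ arise as the two compositions $D_d U_d^*$ and $U_d^* D_d$, so a standard linear-algebra fact gives that they share the same nonzero spectrum. Hence it suffices to bound $\lambda_2(P^\wedge_{d-1})$.

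The heart of the argument is the Oppenheim–Kaufman–Oppenheim local-to-global decomposition: for each face $\alpha$, the up-down walk $P^\wedge_{k,\alpha}$ inside the link $X_\alpha$ can be written as a convex combination of (i) an ``identity-like'' piece modulated by the one-skeleton walk $P_\alpha$ on the link, and (ii) a piece that averages, over vertices $v \in X_\alpha(1)$, the up-down walks $P^\wedge_{k-1,\alpha\cup\{v\}}$ inside the sub-links. Applied at $\alpha=\emptyset$ and iterated, this decomposition lets one factor $1-\lambda_2(P^\wedge_{d-1})$ as a telescoping product of contributions from each level, each of which is controlled by $1-\gamma_k$ via the local-spectral-expander hypothesis together with the observation that each link $X_v$ is a pure $(d-1)$-dimensional $(\gamma_1,\ldots,\gamma_{d-2})$-local-spectral expander, to which the induction hypothesis applies. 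Unwinding the recursion yields the desired bound $1-\lambda_2(P^\lor_d)\geq \frac{1}{d}\prod_{k=0}^{d-2}(1-\gamma_k)$.

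The main obstacle is establishing the decomposition identity cleanly — this is the combinatorial heart of the Alev–Lau refinement over the original Kaufman–Oppenheim ``trickling-down'' argument. One must carefully track the $\Pi$-induced weights through the compositions of up/down operators at different levels, identify the precise coefficients of the ``identity-like'' and ``averaged link'' pieces, and verify that restricting to the orthogonal complement of the stationary direction produces exactly the factor $(1-\gamma_k)$ per level rather than a weaker bound. A secondary subtlety is ensuring that the nonzero-eigenvalue correspondence between the $P^\lor$ and $P^\wedge$ walks is applied consistently at every stage of the recursion, so that the zero-eigenvalue contributions and the stationary subspaces do not introduce spurious slack into the final product.
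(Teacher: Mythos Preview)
The paper does not prove this theorem; it is quoted directly from \cite{alev2020improved} as a black box (the only remark being a notational correspondence with $P^{\triangledown}_{d-1}$ in that paper). So there is no ``paper's own proof'' to compare against.

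That said, your sketch is a faithful outline of the Alev--Lau argument itself: the reduction from $P^\lor_d$ to $P^\wedge_{d-1}$ via the shared nonzero spectrum of $D_dU_{d-1}$ and $U_{d-1}D_d$, the recursive decomposition of the up-down walk into a local (one-skeleton) piece and an average over link up-down walks, and the induction that peels off one factor $(1-\gamma_k)$ per level. The places you flag as obstacles --- tracking the $\Pi$-induced weights through the decomposition and handling the stationary/zero-eigenvalue subspaces consistently --- are exactly the points where the Alev--Lau paper does the real work, so if you were to write this out you would essentially be reproducing their proof. For the purposes of the present paper, however, citing the result is all that is required.
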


We remark that the chain $P^{\lor}_d$ is denoted as $P^{\triangledown}_{d-1}$ in~\cite{alev2020improved}.

\subsection{Connections to Glauber dynamics}
Let $\mu$ be a distribution over $[q]^V$, where $|V| = n$.
Let $\Omega \subseteq [q]^V$ be the support of $\mu$.
We define a ground set of $nq$ elements
\begin{align*}
U \triangleq \{(u,c) \mid u \in V \land c \in [q] \}.	
\end{align*}
For each (possibly partial) configuration $\sigma\in [q]^\Lambda$ where $\Lambda\subseteq V$, we associate with it a face $f_{\sigma} \subseteq U$ as
\begin{align*}
f_{\sigma} \triangleq \{(u,\sigma_u) \mid u \in \Lambda\}.	
\end{align*}
Let $X$ be the downward closure of the family of faces $\{f_\sigma \mid \sigma \in \Omega\}$. 
Then $X$ is a pure $n$-dimensional simplicial complex. For each maximal face $f_\sigma \in X$ where $\sigma \in \Omega$, we assign a weight according to $\mu$
\begin{align*}
\Pi(f_\sigma) = \mu(\sigma),
\end{align*}
and each face in $X$ obtains an induced weight from~\eqref{eq-induced-weight}. Hence, $(X,\Pi)$ is a weighted pure $n$-dimensional simplicial complex.
The following observation is straightforward to verify.
One way to understand it is to view the state space $[q]^V$ as the set of bases of a partition matroid.
\begin{observation}
\label{observation-spin-complexes}
The Glauber dynamics on $\mu$ is precisely the down-up random walk on $X(n)$.	
\end{observation}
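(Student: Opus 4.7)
The plan is to exhibit an explicit bijection between the state space $\Omega$ of the Glauber dynamics and the top-dimensional faces $X(n)$, and then check that the two transition rules agree step-by-step under this identification.

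First I would note that, by construction, every maximal face of $X$ has the form $f_\sigma = \{(u, \sigma_u) : u \in V\}$ for some $\sigma \in \Omega$, so $\sigma \mapsto f_\sigma$ is a bijection $\Omega \to X(n)$. The key structural fact is that each $f_\sigma$ contains exactly one pair $(u, c)$ for each $u \in V$ (this is the ``bases of a partition matroid'' viewpoint mentioned in the hint), so $|f_\sigma| = n$ and the map $u \mapsto (u, \sigma_u)$ is itself a bijection from $V$ to $f_\sigma$.

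Next I would match the \emph{down step}. Picking an element $x \in f_\sigma$ uniformly at random is, via the above bijection $V \to f_\sigma$, equivalent to picking a vertex $v \in V$ uniformly and setting $x = (v, \sigma_v)$. The resulting face $\tau := f_\sigma \setminus \{x\}$ then equals $f_{\sigma_{V \setminus \{v\}}}$, i.e.\ it encodes precisely the partial configuration of $\sigma$ on $V \setminus \{v\}$. This already matches the first half of Glauber dynamics, which also just selects $v$ uniformly at random.

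For the \emph{up step}, a maximal face $f_{\sigma'} \in X(n)$ contains $\tau$ if and only if $\sigma' \in \Omega$ agrees with $\sigma$ on $V \setminus \{v\}$, i.e.\ $\sigma'$ is obtained from $\sigma$ by (possibly) changing the value at $v$. The up step selects such $f_{\sigma'}$ with probability $\Pi(f_{\sigma'})/\Pi(\tau) = \mu(\sigma')/\Pi(\tau)$. By the induced weight definition~\eqref{eq-induced-weight}, $\Pi(\tau) = \sum_{c \in [q]} \mu(\sigma_{V \setminus \{v\}}, v \gets c)$, which is exactly the normalising constant of the conditional distribution $\mu_v^{\sigma_{V \setminus \{v\}}}$. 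Hence the up step samples $\sigma'_v$ from $\mu_v^{\sigma_{V \setminus \{v\}}}$, which is precisely the second half of Glauber dynamics.

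Combining the two steps, the transition kernels coincide under $\sigma \leftrightarrow f_\sigma$. There is no real obstacle here: the whole statement is a verification, and the only mildly subtle point is to confirm that the recursion~\eqref{eq-induced-weight} makes $\Pi(\tau)$ equal the correct normaliser for the conditional marginal at $v$, which follows from the observation that the maximal faces extending $\tau$ are exactly the $f_{\sigma'}$ obtained by setting $v$ to some feasible colour.
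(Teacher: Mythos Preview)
Your proposal is correct and is exactly the direct verification the paper has in mind; the paper does not spell out a proof at all, merely remarking that the observation is ``straightforward to verify'' via the partition-matroid viewpoint, and your argument is precisely that verification carried out in full.
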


Let $\Lambda \subseteq V$ be a subset of variables. 
For every feasible partial configuration $\sigma =\sigma_\Lambda \in [q]^{\Lambda}$, there exists a face $f_{\sigma} = \{(u,\sigma_u)\mid u \in \Lambda\}$ in $X$, and vice versa. 

To simplify the notation, we use $P_{\sigma}$ to denote the simple (non-lazy) random walk $P_{f_{\sigma}}$ (defined in~\eqref{eq-def-one-skeleton-walk}) on one-skeleton of link $X_{f_{\sigma}}$. By definition, $P_{\sigma}$ is a random walk on $U_{\sigma}=\{(u,c) \in \overline{\Lambda} \times [q] \mid \mu_u^{\sigma}(c) > 0\}$, where $\overline{\Lambda} = V \setminus \Lambda$. 
Fix $\*x = (u,i) \in U_\sigma$ and $\*y = (v,j) \in U_\sigma$. 
The weight of  edge $\{\*x,\*y\}$ in one-skeleton of link $X_{f_{\sigma}}$ is given by
\begin{align*}
\Phi_{f_\sigma}(\*x,\*y) = 
\sum_{\substack{\tau \in \Omega\\\tau_\Lambda=\sigma,\tau_u=i,\tau_v=j}}\mu(\tau) = \Pr[X \sim \mu]{X_\Lambda = \sigma \land X_u =i \land X_v =j}. 
\end{align*}
Thus,
\begin{align*}
\forall (u,i),(v,j) \in U_{\sigma},\quad P_{\sigma}((u,i), (v,j)) =  \frac{\one{u\neq v}}{\abs{V}-\abs{\Lambda}-1} \mu^{\sigma, u \gets i}_v(j),
\end{align*}
where $ \mu^{\sigma, u \gets i}_v$ is the marginal distribution on $v$ induced from $\mu$ conditional on the configurations on $\Lambda$ fixed as $\sigma$ and that $u$ is fixed to $i$. This is precisely the local random walk in \Cref{definition-local-walk}.
Note that $P_{\sigma}$ is reversible because the random walk on one-skeleton is reversible.

Note that if $\mu$ satisfies \Cref{condition-local-expander} with $\alpha_0,\alpha_1,\ldots,\alpha_{n-2}$ , then the $n$-dimensional weighted simplicial complex $(X,\Pi)$ defined above is a $(\gamma_0,\gamma_1,\ldots,\gamma_{n-2})$-local-spectral expander with $\gamma_k = \alpha_k$ for all $0\leq k \leq n- 2$.
Hence, \Cref{corollary-AL} is a corollary of \Cref{theorem-AL}.

\section{Analysis of local random walks}
\label{section-coupling}
In this section, we prove \Cref{lemma-decay-imples-lambda2}, which states that the second largest eigenvalues of local random walks are  small if the distribution $\mu$ is spectrally independent. 
The new ingredient of this part is a coupling based argument for the above implication for distributions with general domain size $q$.

\subsection{Proof of \texorpdfstring{\Cref{lemma-decay-imples-lambda2}}{4.1}}
Fix a subset $\Lambda \subseteq V$ with $0\leq |\Lambda|\leq n-2$, and a feasible partial configuration $\sigma_\Lambda \in [q]^{\Lambda}$. To simplify the notation, we use $\sigma$ to denote $\sigma_{\Lambda}$. Consider the random walk $P_\sigma$ defined in~\eqref{eq-def-one-skeleton-walk-spin}. Recall the state space of $P_{\sigma}$ is defined by
\begin{align}
\label{eq-def-U}
U_{\sigma} \triangleq \left\{(u,i) \in \overline{\Lambda} \times [q] \mid \mu^{\sigma}_u(i) > 0 \right\}.
\end{align}
Thus $\abs{U_\sigma} \geq 2$ because $|\overline{\Lambda}| = \abs{V \setminus \Lambda} \geq 2$ and $\sigma$ is feasible.
Consider a random walk $Q_{\sigma}$:
\begin{align}
\label{eq-def-Q}
Q_\sigma \defeq \frac{n -|\Lambda| - 1}{n - |\Lambda|}P_{\sigma} + \frac{1}{n - |\Lambda|} I_{\sigma},
\end{align}
where $I_\sigma \in \mathbb{R}_{\geq 0}^{U_{\sigma} \times U_{\sigma}}$ is the identity matrix. 
In other words, in each step, with probability $\frac{1}{n-\abs{\Lambda}}$, the random walk $Q_{\sigma}$ stays at the current state;
otherwise, $Q_{\sigma}$ evolves in the same way as $P_{\sigma}$. 

Define a distribution $\pi$ over $U_{\sigma}$ as 
\begin{align}
\label{eq-def-pi}
\forall (u,i) \in U_{\sigma},\quad \pi(u,i) \triangleq \frac{1}{n - \abs{\Lambda}}\mu^\sigma_u(i).
\end{align}
Note that $\sum_{i\in \Omega^\sigma_u}\mu^\sigma_u(i)=1$,
where $\Omega^\sigma_u \triangleq \{i \in [q] \mid \mu^\sigma_u(i) > 0\}$.
Thus $\sum_{(u,i)\in U_{\sigma}}\pi(u,i) = 1$ and $\pi$ is well-defined.
We claim that both $P_{\sigma}$ and $Q_{\sigma}$ are reversible with respect to $\pi$. 
For any $(u,i),(v,j) \in U_{\sigma}$, we verify the detailed balance equation.
If $u = v$, then it is straightforward to verify
\begin{align*}
\pi(u,i)P_{\sigma}((u,i),(v,j))  = 0 = \pi(v,j) P_{\sigma}((v,j),(u,i));	
\end{align*}
otherwise  $u \neq v$, then
\begin{align*}
\pi(u,i)P_{\sigma}((u,i),(v,j))	&= \frac{\mu^\sigma_u(i)\cdot \mu^{\sigma, u \gets i}_v(j) }{(n-\abs{\Lambda})(n-\abs{\Lambda}-1)} = \frac{\Pr[X \sim \mu]{X_u=i \land X_v = j \mid X_{\Lambda}=\sigma}}{(n-\abs{\Lambda})(n-\abs{\Lambda}-1)}\\
&=\frac{\mu^\sigma_v(j) \cdot \mu^{\sigma, v \gets j}_u(i) }{(n-\abs{\Lambda})(n-\abs{\Lambda}-1)} = \pi(v,j)P_{\sigma}((v,j),(u,i)).
\end{align*}
Since $Q_{\sigma}$ is a lazy version of $P_{\sigma}$, $Q_{\sigma}$ is also reversible to $\pi$.
By~\eqref{eq-def-U} and~\eqref{eq-def-pi}, the support of $\pi$ is $U_{\sigma}$. 
By \Cref{proposition-rever}, $P_{\sigma}$ and $Q_{\sigma}$ both have $\abs{U_\sigma}$ real eigenvalues. Let $\lambda_2(P_{\sigma})$ and $\lambda_2(Q_{\sigma})$ denote the second largest eigenvalues of $P_{\sigma}$ and $Q_{\sigma}$.
By the definition of $Q_\sigma$ in~\eqref{eq-def-Q}, we have the following proposition.
\begin{proposition}
\label{proposition-lambda-relation}
$\lambda_2(Q_\sigma) =  \frac{n -|\Lambda| - 1}{n - |\Lambda|}\lambda_2(P_{\sigma}) + \frac{1}{n-|\Lambda|}$.
\end{proposition}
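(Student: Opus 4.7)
The plan is to observe that $Q_\sigma$ is an affine combination of $P_\sigma$ and the identity operator, which immediately determines its spectrum from that of $P_\sigma$.

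First, write $Q_\sigma = a P_\sigma + b I_\sigma$ where $a = \frac{n-|\Lambda|-1}{n-|\Lambda|}$ and $b = \frac{1}{n-|\Lambda|}$. Since $I_\sigma$ commutes with $P_\sigma$, any eigenvector of $P_\sigma$ is also an eigenvector of $Q_\sigma$: if $P_\sigma v = \lambda v$, then
\begin{align*}
Q_\sigma v = a P_\sigma v + b I_\sigma v = (a\lambda + b)\, v.
\end{align*}
Recall from the reversibility of $P_\sigma$ with respect to $\pi$ and \Cref{proposition-rever} that $P_\sigma$ has $|U_\sigma|$ real eigenvalues together with a basis of eigenvectors, so this correspondence enumerates all the eigenvalues of $Q_\sigma$. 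Explicitly, if $1 = \lambda_1(P_\sigma) \geq \lambda_2(P_\sigma) \geq \cdots \geq \lambda_{|U_\sigma|}(P_\sigma)$ are the eigenvalues of $P_\sigma$, then the eigenvalues of $Q_\sigma$ are exactly $\{a\lambda_i(P_\sigma) + b\}_{i=1}^{|U_\sigma|}$.

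Next, observe that $a = \frac{n-|\Lambda|-1}{n-|\Lambda|} > 0$ since the hypothesis $|\Lambda| \leq n-2$ guarantees $n - |\Lambda| - 1 \geq 1$. The affine map $\lambda \mapsto a\lambda + b$ is therefore strictly increasing, so it preserves the order of the eigenvalues. In particular, the second largest eigenvalue of $Q_\sigma$ is obtained by applying this map to the second largest eigenvalue of $P_\sigma$, yielding
\begin{align*}
\lambda_2(Q_\sigma) = a\, \lambda_2(P_\sigma) + b = \frac{n-|\Lambda|-1}{n-|\Lambda|}\,\lambda_2(P_\sigma) + \frac{1}{n-|\Lambda|},
\end{align*}
as claimed.

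There is no real obstacle here: the statement is a purely algebraic consequence of the identity $Q_\sigma = aP_\sigma + bI_\sigma$ combined with the fact that $P_\sigma$ is diagonalizable (via reversibility). The only thing to check carefully is that $a > 0$ so that the affine map preserves order; this is why the second largest eigenvalue of $P_\sigma$ maps to the second largest eigenvalue of $Q_\sigma$ rather than some other one.
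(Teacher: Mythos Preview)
Your proof is correct and is exactly the basic linear-algebra argument the paper has in mind; the paper itself does not spell out any details, merely remarking that the proposition ``is a basic result in linear algebra.'' Your care in noting that $a>0$ (so the affine map preserves the eigenvalue ordering) is the only nontrivial point, and you handled it.
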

\Cref{proposition-lambda-relation} is a basic result in linear algebra.
We claim the following result about $\lambda_2(Q_{\sigma})$.
\begin{lemma}
\label{lemma-lambda2-Q}	
$\lambda_2(Q_\sigma) \leq \frac{\rho(\Psi^\sigma_\mu)+1}{n-\abs{\Lambda}}.$
\end{lemma}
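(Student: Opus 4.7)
The plan is to bound $\lambda_2(Q_\sigma)$ via a Markovian coupling of two copies of $Q_\sigma$, combined with \Cref{proposition-markov-mat} (which relates $\abs{\lambda_2}^t$ to the worst-case total variation distance at time $t$) and \Cref{proposition-radius} (which recovers $\rho(M)$ from $\norm{M^t}_\infty^{1/t}$). Rewrite one step of $Q_\sigma$ from state $(u,i)$ as: pick $v\in\overline{\Lambda}$ uniformly at random; stay at $(u,i)$ if $v=u$, otherwise resample $j\sim\mu_v^{\sigma,u\gets i}$ and move to $(v,j)$. Given two chains starting at $(u_1,i_1)$ and $(u_2,i_2)$, I couple them by using the same uniform vertex $v$ and by optimally coupling the resampled spins whenever both chains must resample. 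The crucial observation is that after exactly one step both chains occupy the \emph{same} vertex: whichever $v$ is drawn, at least one of the two chains moves to $v$ and the other is already at $v$ or moves there.

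For $t\ge 1$, define the nonnegative row vector $D_t\in\mathbb{R}_{\ge 0}^{\overline{\Lambda}}$ by $D_t(v)\triangleq\Pr{X_t\neq Y_t \text{ and both chains are located at vertex } v}$, so that $\norm{D_t}_1=\Pr{X_t\neq Y_t}\le 1$. Starting from a configuration $(v,j_1),(v,j_2)$ with $j_1\ne j_2$, drawing $w=v$ (probability $\tfrac{1}{n-\abs{\Lambda}}$) preserves the disagreement at $v$, while $w\ne v$ sends both chains to $w$ with disagreement probability at most $\DTV{\mu_w^{\sigma,v\gets j_1}}{\mu_w^{\sigma,v\gets j_2}}\le \Psi_\mu^\sigma(v,w)$ under the optimal spin coupling. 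Setting $M\triangleq I+\Psi_\mu^\sigma$ on $\overline{\Lambda}\times\overline{\Lambda}$, this yields the componentwise recurrence
\begin{align*}
D_{t+1}(w)\le \frac{1}{n-\abs{\Lambda}}\sum_{v\in\overline{\Lambda}} D_t(v)\,M(v,w)\qquad\text{for all }t\ge 1.
\end{align*}
Iterating and using $\norm{D_1}_1\le 1$ gives $\Pr{X_t\ne Y_t}=\norm{D_t}_1\le \norm{M^{t-1}}_\infty\,(n-\abs{\Lambda})^{-(t-1)}$.

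Combining \Cref{proposition-coupling,proposition-Markov-coupling,proposition-markov-mat} now yields $\abs{\lambda_2(Q_\sigma)}^t\le \norm{M^{t-1}}_\infty\,(n-\abs{\Lambda})^{-(t-1)}$ for every $t\ge 1$. Taking $t$-th roots and letting $t\to\infty$, \Cref{proposition-radius} gives $\norm{M^{t-1}}_\infty^{1/t}\to\rho(M)$, whence $\abs{\lambda_2(Q_\sigma)}\le \rho(M)/(n-\abs{\Lambda})$. Since the eigenvalues of $M=I+\Psi_\mu^\sigma$ are exactly $\{1+\lambda:\lambda\text{ eigenvalue of }\Psi_\mu^\sigma\}$, the triangle inequality gives $\rho(M)\le 1+\rho(\Psi_\mu^\sigma)$, yielding the claimed bound. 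The main obstacle is designing a single coupling that (i) handles arbitrary starting pairs, not just pairs at a common vertex, and (ii) feeds cleanly into a vertex-indexed recurrence; the ``shared uniform $v$'' rule resolves both, forcing the two chains to share a vertex from time $1$ onward and making the matrix $I+\Psi_\mu^\sigma$ the natural contraction operator.
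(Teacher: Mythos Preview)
Your proof is correct and follows essentially the same approach as the paper. Both arguments couple two copies of $Q_\sigma$ by sharing the uniform vertex choice and optimally coupling the spin updates, track the per-vertex disagreement probabilities via a recurrence governed by $I+\Psi_\mu^\sigma$, and finish by combining \Cref{proposition-markov-mat} with \Cref{proposition-radius}; the only difference is cosmetic---the paper works with column vectors and $\norm{\cdot}_1$ applied to $A=\tfrac{1}{n-\abs{\Lambda}}((\Psi_\mu^\sigma)^T+I)$, while you use the dual formulation with row vectors and $\norm{\cdot}_\infty$ applied to $M=I+\Psi_\mu^\sigma$.
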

The proof of \Cref{lemma-lambda2-Q} is deferred to the next subsection. We now use \Cref{proposition-lambda-relation} and \Cref{lemma-lambda2-Q} to prove \Cref{lemma-decay-imples-lambda2}. 
Suppose $\mu$ is $(\eta_0,\eta_1,\ldots,\eta_{n-2})$-spectrally independent (\Cref{def:spectral-independence-gen}).
By \Cref{proposition-lambda-relation}, it holds that
\begin{align*}
\lambda_2(P_{\sigma}) = \frac{n -\abs{\Lambda}}{n - \abs{\Lambda}-1}\left(\lambda_2(Q_\sigma)-\frac{1}{n - \abs{\Lambda}}\right) \leq \frac{\rho(\Psi^\sigma_\mu)}{n -\abs{\Lambda}-1} \leq \frac{\eta_k}{n-k-1}, \text{ where } k = \abs{\Lambda}.
\end{align*}
The above inequality holds for any $\Lambda \subseteq V$ with $0\leq \abs{\Lambda} \leq n-2$ and any feasible $\sigma \in [q]^{\Lambda}$. This implies $\mu$ satisfies \Cref{condition-local-expander} with $\alpha_0,\alpha_1,\ldots,\alpha_{n-2}$ such that $\alpha_k = \frac{\eta_k}{n-k-1}$.

\subsection{A coupling based analysis} We now prove \Cref{lemma-lambda2-Q}. 
We will use coupling to give an upper bound of $\lambda_2(Q_{\sigma})$.
First we define a matrix $A$:
\begin{align}
\label{eq-def-A}
A \triangleq \frac{1}{n-\abs{\Lambda}} \tp{\tp{\Psi_{\mu}^\sigma}^T + I},
\end{align}
where $I$ is the identity matrix.
For any $t \geq 1$, define 
\begin{align}
\label{eq-def-dt}
d(t) \triangleq \max_{x_0,y_0 \in U_{\sigma}}\DTV{Q_\sigma^t(x_0,\cdot)}{Q_\sigma^t(y_0,\cdot)}.
\end{align}
\begin{lemma}
\label{lemma-coupling}
For any $t \geq 1$,
$d(t) \leq \norm{A^{t-1}}_1$. 	
\end{lemma}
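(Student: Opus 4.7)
The plan is to construct a one-step coupling of the lazy chain $Q_\sigma$, track disagreement via a nonnegative vector-valued potential $q_t \in \mathbb{R}_{\geq 0}^{V \setminus \Lambda}$, and establish the one-step inequality $q_{t+1} \leq A q_t$ coordinatewise for $t \geq 1$. Iterating this and using the coupling inequality (\Cref{proposition-Markov-coupling}) together with the induced-norm bound $\|A^{t-1}v\|_1 \leq \|A^{t-1}\|_1 \cdot \|v\|_1$ then delivers the claim.

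The coupling I would use is the natural \emph{common-vertex} coupling: at each step, sample a single $v \in V \setminus \Lambda$ uniformly at random and use the same $v$ for both chains. For the $X$-chain currently at $(u_X, i_X)$, if $v = u_X$ then $X$ stays; otherwise $X$ resamples to $(v, j_X)$ with $j_X \sim \mu_v^{\sigma, u_X \gets i_X}$, and symmetrically for $Y$. Whenever both chains actually resample, couple their new colors via the optimal coupling of $\mu_v^{\sigma, u_X \gets i_X}$ and $\mu_v^{\sigma, u_Y \gets i_Y}$; the individual marginals remain correct so this is a valid coupling of $Q_\sigma$. The crucial observation is that after any single step both chains land at the sampled vertex $v$, so for every $t \geq 1$ the two chains share a common current vertex and any surviving disagreement is a pure color discrepancy at that common vertex.

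Exploiting this, define $q_t(w) \triangleq \Pr[X_t = (w,i),\ Y_t = (w,i') \text{ for some } i \neq i']$ for each $w \in V \setminus \Lambda$, so that $\Pr[X_t \neq Y_t] = \|q_t\|_1$ for all $t \geq 1$. To derive the recursion, condition on $(X_t, Y_t) = ((u,i),(u,i'))$ with $i \neq i'$: the new state lies at vertex $w$ with probability $\frac{1}{n-|\Lambda|}$, the chance of drawing $v = w$. If $w = u$ both chains stay and continue to disagree with probability $1$; if $w \neq u$ both resample and, by the definition of the influence matrix, disagree with probability at most $\DTV{\mu_w^{\sigma, u \gets i}}{\mu_w^{\sigma, u \gets i'}} \leq \Psi^\sigma_\mu(u,w)$. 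Since $\Psi^\sigma_\mu(u,u) = 0$, the per-$w$ contribution collapses to $\frac{1}{n-|\Lambda|}\bigl(\mathbf{1}[w=u] + \Psi^\sigma_\mu(u,w)\bigr) = A(w,u)$. Summing over all source states $(u,i,i')$ gives $q_{t+1}(w) \leq \sum_u A(w,u)\, q_t(u) = (A q_t)(w)$.

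Iterating for $t \geq 1$ yields $q_t \leq A^{t-1} q_1$ coordinatewise (preserved because $A$ has nonnegative entries), hence
\[
\Pr[X_t \neq Y_t] = \|q_t\|_1 \leq \|A^{t-1} q_1\|_1 \leq \|A^{t-1}\|_1 \cdot \|q_1\|_1 \leq \|A^{t-1}\|_1,
\]
using $\|q_1\|_1 = \Pr[X_1 \neq Y_1] \leq 1$. The coupling inequality then gives $\DTV{Q_\sigma^t(x_0,\cdot)}{Q_\sigma^t(y_0,\cdot)} \leq \|A^{t-1}\|_1$ for all $x_0, y_0$, and taking the max over initial states yields $d(t) \leq \|A^{t-1}\|_1$. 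The only delicate point is that $X_0$ and $Y_0$ need not share a vertex, so the clean "both at the same vertex" invariant only kicks in at $t = 1$; this costs nothing because we only require $\|q_1\|_1 \leq 1$, which is trivial. The main conceptual obstacle is choosing a potential that precisely reproduces the matrix $A$: once one commits to the common-vertex coupling, reading off the contribution vertex by vertex forces exactly the $\frac{1}{n-|\Lambda|}(\Psi^T + I)$ structure.
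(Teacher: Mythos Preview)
Your proposal is correct and follows essentially the same approach as the paper: the common-vertex coupling, the vertex-indexed disagreement vector (the paper's $e_t$ is your $q_t$), the one-step recursion $e_t \le A\,e_{t-1}$ via the optimal color coupling, and the final $\ell_1$-norm bound are all identical. The only cosmetic difference is that the paper records the slightly sharper pointwise bound $e_1(u)\le \frac{1}{n-|\Lambda|}$ before collapsing to $\|e_1\|_1\le 1$, whereas you invoke $\|q_1\|_1\le 1$ directly; either suffices.
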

\begin{proof}
By the definitions of $Q_\sigma$ in~\eqref{eq-def-Q} and $P_{\sigma}$ in~\eqref{eq-def-one-skeleton-walk-spin}, we have
\begin{align*}
\forall (u,i),(v,j) \in U_\sigma,\quad
Q_{\sigma}((u,i), (v,j)) = \frac{\mu^{\sigma, u \gets i}_v(j)}{n-\abs{\Lambda}} ,
\end{align*}
where if $u=v$, the distribution $\mu_u^{\sigma, u \gets i}(j) = \one{i=j}$ for all $j \in [q]$.
Let $X_0,X_1,X_2,\ldots \in U_{\sigma}$ be the sequence of random states generated by $Q_{\sigma}$, where $X_t = (\Xvtx_t , \Xcol_t)$, $\Xvtx_t \in V \setminus \Lambda$ and $\Xcol_t \in [q]$. 
By definition of $Q_\sigma$ in~\eqref{eq-def-Q},
given $X_{t-1} = (u, i)$, the random pair $X_{t} = (v, j)$ can be generated by the following procedure
\begin{itemize}
\item sample $v \in V \setminus \Lambda$ uniformly at random;
\item sample $j \in [q]$ from the distribution $\mu^{ \sigma, u \gets i}_v(\cdot)$.
\end{itemize}

Next, we define a coupling procedure $\+C$.
Let $(X_t)_{t \geq 0}$ be the random walk $Q_{\sigma}$ starting from $X_0 = \*x_0 \in U_{\sigma}$, and $(Y_t)_{t \geq 0}$ be the random walk $Q_{\sigma}$ starting from  $Y_0 = \*y_0 \in U_{\sigma}$, where $\*x_0$ and $\*y_0$ achieve the maximum in~\eqref{eq-def-dt}. 
Consider each transition step $(X, Y) \rightarrow (X', Y')$. 
Suppose $X = (u_x,i_x)$ and $Y = (u_y,i_y)$.
Then $X' = (u'_x,i'_x)$ and $Y' = (u'_y,i'_y)$ are generated as follows:
\begin{itemize}
\item sample $v \in V \setminus \Lambda$ uniformly at random, set $u'_x = u'_y = v$;	
\item sample $(i'_x,i'_y)$ from the optimal coupling of $\mu_{v}^{\sigma, u_x \gets i_x}$ and $\mu_{v}^{\sigma, u_y \gets i_y}$, where $v = u'_x = u'_y$.
\end{itemize}
It is easy to verify that $\+C$ is a coupling of Markov chain $Q_\sigma$. 
By \Cref{proposition-Markov-coupling}, we have
\begin{align}
\label{eq-bound-dtv-by-coupling}
\forall t \geq 1, \quad
d(t) = \max_{\*x_0,\*y_0 \in U_{\sigma}}\DTV{Q_\sigma^t(\*x_0,\cdot)}{Q_\sigma^t(\*y_0,\cdot)} \leq \Pr[\+C]{X_t \neq Y_t}.
\end{align}
Hence, we only need to bound the right-hand-side of~\eqref{eq-bound-dtv-by-coupling}.

Denote $X_t = (\Xvtx_t,\Xcol_t)$ and $Y_t = (\Yvtx_t,\Ycol_t)$. By the definition of the coupling procedure $\+C$, it holds that $\Xvtx_t=\Yvtx_t$ for all $t \geq 1$,  and
\begin{align}
\label{eq-same-vtx}
\forall t \geq 1, u \in V \setminus \Lambda,\quad  \Pr[\+C]{\Xvtx_t = \Yvtx_t = u} = \frac{1}{n-\abs{\Lambda}}.
\end{align}
For any $t \geq 1$, we define a column vector $e_t \in \mathbb{R}_{\geq 0}^{V \setminus \Lambda}$ such that
\begin{align*}
\forall u \in V \setminus \Lambda, \quad e_t(u) \triangleq \Pr[\+C]{\Xvtx_t = \Yvtx_t = u \land \Xcol_t \neq \Ycol_t}.	
\end{align*}
Then $d(t) \leq \Pr[\+C]{X_t \neq Y_t} = \sum_{u\in V \setminus \Lambda}e_t(u) = \norm{e_t}_1$ for all $t\geq 1$.
By~\eqref{eq-same-vtx}, we have 
\begin{align}
\label{eq-base}
\forall u \in V \setminus \Lambda, \quad e_1(u) \leq \Pr[\+C]{\Xvtx_t = \Yvtx_t = u} = \frac{1}{n-|\Lambda|}.
\end{align}

Recall $\Omega^\sigma_u \triangleq \{i \in [q] \mid \mu^\sigma_u(i) > 0\}$ for each $u \in V \setminus \Lambda$, 
and the state space of the random walk $Q_{\sigma}$ is $U_\sigma = \left\{(u,i) \mid u \in \overline{\Lambda} \land i \in \Omega^\sigma_u \right\}$. For any $t\geq 2$, we have for all $u \in V \setminus \Lambda$,
\begin{align*}
e_t(u) &= \Pr[\+C]{\Xvtx_t = \Yvtx_t = u \land \Xcol_t \neq \Ycol_t}\notag\\
&=\sum_{v \in V \setminus \Lambda}\sum_{\substack{i,j \in \Omega^\sigma_v\\ i \neq j}} \Big(\Pr[\+C]{\Xvtx_t = \Yvtx_t = u \land \Xcol_t \neq \Ycol_t \mid \Xvtx_{t-1}= \Yvtx_{t-1} = v \land \Xcol_{t-1} = i \land \Ycol_{t-1}=j }\\
&\qquad\qquad\qquad\quad \times \Pr[\+C]{X_{t-1}=(v,i) \land Y_{t-1}=(v,j)}\Big)\\
&=\sum_{v \in V \setminus \Lambda}\sum_{\substack{i,j \in \Omega^\sigma_v\\i \neq j}}\frac{1}{n-|\Lambda|}\DTV{\mu_u^{\sigma, v \gets i}}{\mu_u^{\sigma, v \gets j} }\Pr[\+C]{X_{t-1}=(v,i) \land Y_{t-1}=(v,j)}.
\end{align*}
The first equality is obtained from the chain rule, 
together with the facts that for $t\ge 2$, (1) $\Pr[\+C]{\Xvtx_{t-1}=\Yvtx_{t-1}} = 1$; 
(2) $\Xcol_t\neq \Ycol_t$ only if $\Xcol_{t-1} \neq \Ycol_{t-1}$; 
(3) $\Xcol_{t-1},\Ycol_{t-1} \in \Omega^\sigma_v$ if $\Xvtx_{t=1}=\Yvtx_{t-1}=v$ since $Q_\sigma$ is a random walk over $U_\sigma$. 
The last equality is obtained using the definition of the coupling $\+C$.
It holds because $\Xvtx_t=\Yvtx_t$ are sampled from $V \setminus \Lambda$ uniformly at random 
and $\Xcol_t,\Ycol_t$ are sampled from the optimal coupling between $\mu_u^{\sigma, v \gets i}$ and $\mu_u^{\sigma, v \gets j}$. 
By the definition of the influence matrix $\Psi^\sigma_\mu$ in~\eqref{eq-def-psi} and the definition of the matrix $A$ in~\eqref{eq-def-A}, 
we have that for any $u,v \in V \setminus \Lambda$, any $i,j \in \Omega^\sigma_u$ such that $i \neq j$, it holds that
\begin{align*}
\frac{1}{n-|\Lambda|}\DTV{\mu_u^{\sigma, v \gets i}}{\mu_u^{\sigma, v \gets j} } \leq A(u,v)	.
\end{align*}
Hence, for any $t\geq 2$, we have that for all $u \in V \setminus \Lambda$, $e_t(u)$ can be bounded by
\begin{align}
e_t(u) &\leq \sum_{v \in V \setminus \Lambda}\sum_{\substack{i,j \in \Omega^\sigma_v\\i \neq j}}A(u,v)\Pr[\+C]{X_{t-1}=(v,i) \land Y_{t-1}=(v,j)}
=
\sum_{v \in V \setminus \Lambda}A(u,v)e_{t-1}(v) = (Ae_{t-1})(u). \label{eq-induction}
\end{align}
All $e_t$ are non-negative vectors and $A$ is a non-negative matrix. 
Combining~\eqref{eq-bound-dtv-by-coupling}, ~\eqref{eq-base} and~\eqref{eq-induction}, we have that for any $t\ge 1$,
\begin{align*}
d(t) &\leq \norm{e_t}_1 \leq \norm{A^{t-1}e_1}_1	\leq \norm{A^{t-1}}_1 \norm{e_1}_1 \leq \norm{A^{t-1}}_1.\qedhere
\end{align*}
\end{proof}

Now, we are ready to prove \Cref{lemma-lambda2-Q}.
\begin{proof}[Proof of \Cref{lemma-lambda2-Q}]
Recall $Q_\sigma$ is a random walk over $U_\sigma$, $\pi$ is defined in~\eqref{eq-def-pi}.
Since $Q_\sigma$ is reversible with respect to $\pi$ and the support of $\pi$ is $U_\sigma$, by \Cref{proposition-markov-mat} and \Cref{lemma-coupling}, for any $t\geq 1$,
\begin{align*}
\abs{\lambda_2(Q_\sigma)}^t \leq d(t) \leq \norm{A^{t-1}}_1.	
\end{align*}
We may assume that $\lambda_2(Q_\sigma) > 0$, as otherwise \Cref{lemma-lambda2-Q} holds trivially. We have
\begin{align*}
\forall t \geq 1, \quad \lambda_2(Q_\sigma)^{\frac{t}{t-1}} \leq \norm{A^{t-1}}_1^{\frac{1}{t-1}}.	
\end{align*}
Let $t \rightarrow \infty$ in both sides, we have
\begin{align*}
\lambda_2(Q_\sigma) &= \lim_{t \to \infty }\lambda_2(Q_\sigma)^{\frac{t}{t-1}} \leq \lim_{t \to \infty}\norm{A^{t-1}}_1^{\frac{1}{t-1}} = \rho(A),
\end{align*}
where the last equality holds due to \Cref{proposition-radius}.
Note that  if $\lambda \in \mathbb{C}$ is an eigenvalue of $(\Psi_\mu^{\sigma})^T$, then $\lambda+1$ is an eigenvalue of $(\Psi_\mu^{\sigma})^T+I$, and $\abs{\lambda+1}\leq \abs{\lambda} + 1$.
By the definition of $A$, we have
\begin{align*}
\lambda_2(Q_\sigma) \leq \rho(A) &= \frac{1}{n-\abs{\Lambda}} \rho \tp{\tp{\Psi_\mu^{\sigma}}^T + I} \leq \frac{\rho\tp{\tp{\Psi_\mu^{\sigma}}^T}+1}{n-\abs{\Lambda}} = \frac{\rho(\Psi_\mu^\sigma)+1}{n-\abs{\Lambda}}.\qedhere
\end{align*}
\end{proof}
  

\section{Rapid mixing for list colourings}

An instance of the list colouring is a pair $(G,\*L)$ where $G=(V,E)$ is a simple undirected
graph and $\*L = \set{L(v)\mid v\in V}$ is a collection of \emph{colour lists} associated to each
vertex $v\in V$. 
A proper list colouring $X$ assigns each vertex $v \in V$ a colour $X_v \in L(v)$ such that $X_u \neq X_v$ for all $\{u,v\} \in E$.
Let $\Omega_{G,\*L}$ denote the set of proper list colourings and $\mu_{G,\*L}$ denote the uniform distribution over $\Omega_{G,\*L}$.
 
The Glauber dynamics on $(G,\*L)$ is defined as follows. The chain starts from an arbitrary proper list colouring $X \in \Omega_{G,\*L}$. In each step, the chain does the following:
\begin{itemize}
\item pick a vertex $v \in V$ uniformly at random;
\item update $X_v$ by a uniformly at random colour from $L(v) \setminus \{X_u \mid \{v,u\} \in E \}$.	
\end{itemize}

We prove the following rapid mixing result for list colourings.
\begin{theorem}\label{thm-2D+1.76D}
	Let $(G=(V,E),\*L)$ be an instance of list colouring where $\*L=\set{L(v)\mid v\in V}$. Let $\Delta \geq 3$ be the maximum degree of $G$ and $\delta>0$ be a constant. If $G$ is triangle-free and for every $v\in V$, it holds that 
		\begin{align}
		\label{eq-proof-1.7D}
			\abs{L(v)}-\deg_G(v)\ge (\alpha^*+\delta-1)\Delta,
		\end{align}
		then the Glauber dynamics on $(G,\*L)$ satisfies
		\begin{align*}
		\tmix(\eps) \leq \tp{9\e^5 n}^{1 + 9/\delta}\cdot\log\tp{\frac{M}{\eps}}.	
		\end{align*}
where $M \defeq \prod_{v \in V}\abs{L(v)}$.
\end{theorem}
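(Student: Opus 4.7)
The plan is to verify the bounded one-to-all influence hypothesis of \Cref{corollary-main} for the uniform distribution $\mu_{G,\*L}$ over proper list colourings, and then invoke that corollary. Fix $\Lambda\subseteq V$ and a feasible $\sigma_\Lambda$; the conditional distribution $\mu^{\sigma_\Lambda}$ is again a uniform list-colouring measure, on the induced subgraph $G[V\setminus\Lambda]$ with each list $L(v)$ reduced by the pinned colours on $\Lambda\cap N_G(v)$. Crucially, this conditioning preserves triangle-freeness and preserves the slack condition~\eqref{eq-proof-1.7D}, since for every unpinned vertex $v$ the quantity $|L(v)|-|\{u\in N_G(v):u\text{ unpinned}\}|$ can only increase when we delete pinned neighbours. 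Hence it suffices to bound, for every root $u$,
\begin{align*}
\sum_{v\in V\setminus\Lambda} \Psi^{\sigma_\Lambda}_{\mu_{G,\*L}}(u,v)\le \min\{C,\eta(n-|\Lambda|-1)\}
\end{align*}
with $C=O(1/\delta)$ and some $\eta<1$; substituting into \Cref{corollary-main} will give the claimed mixing time.

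The main technical step is a \emph{one-to-all} refinement of the recursive coupling of Goldberg--Martin--Paterson~\cite{GMP05}. For fixed $u$ and any two colours $i,j\in\Omega^{\sigma_\Lambda}_u$, I would construct a grand coupling $(X,Y)$ of $\mu^{\sigma_\Lambda,u\gets i}$ and $\mu^{\sigma_\Lambda,u\gets j}$ by exposing colours in a BFS-like order away from $u$, using at each vertex the optimal coupling of its conditional marginal distributions under the two chains, conditioned on the already-exposed colours. Let $p_v$ denote the probability that $X_v\ne Y_v$ under this coupling; then $\Psi^{\sigma_\Lambda}_{\mu_{G,\*L}}(u,v)\le p_v$ by \Cref{proposition-coupling}. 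For the recursion, a disagreement at a vertex $w$ propagates to a neighbour $x$ only when the newly revealed colour at $w$ lies in the symmetric difference of the allowed palettes at $x$ under the two couplings; a standard computation (using that $|L(x)|-\deg_G(x)\ge(\alpha^*+\delta-1)\Delta$ and triangle-freeness, so neighbours of $w$ see disjoint local structure and the palette sizes at $x$ are at least $(\alpha^*+\delta-1)\Delta$) shows that the expected contribution from $w$ to each neighbour is at most $\tfrac{1}{(\alpha^*+\delta-1)\Delta}$ times a factor that aggregates to at most $\alpha^*$ across all neighbours. The triangle-free condition is what makes the aggregated contribution bound by $\tfrac{\alpha^*}{\alpha^*+\delta-1}<1$ instead of a worse constant.

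Summing the one-to-one bound over all $v$ at graph distance $d$ from $u$ and using the crude count $|\{v:\dist_G(u,v)=d\}|\le\Delta(\Delta-1)^{d-1}$, the propagation analysis yields
\begin{align*}
\sum_{v\in V\setminus\Lambda}\Psi^{\sigma_\Lambda}_{\mu_{G,\*L}}(u,v)\le \sum_{d\ge 1}\Delta(\Delta-1)^{d-1}\cdot\gamma^d,
\end{align*}
where $\gamma$ is the per-step contraction factor from the recursive coupling, of the form $\gamma=\tfrac{1}{\alpha^*+\delta-1}\cdot\tfrac{1}{\Delta}\cdot(\alpha^*+O(1))$. For $q\ge(\alpha^*+\delta)\Delta$ the product $(\Delta-1)\gamma$ is strictly below $1-\Theta(\delta)$, so the geometric series converges to a bound of the form $C=O(1/\delta)$, uniformly in $n$ and $|\Lambda|$. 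Taking $\eta$ any fixed constant in $(0,1)$ (say $\eta=(\Delta-1)\gamma$) then yields the hypothesis of \Cref{corollary-main} with $C=O(1/\delta)$; plugging into the mixing-time formula and using $\mu_{\min}\ge 1/M$ produces the stated bound $(9\e^5 n)^{1+9/\delta}\log(M/\eps)$ after absorbing the explicit constants.

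The main obstacle I expect is the one-to-all refinement of the recursive coupling: the original GMP argument tracks a single target vertex and a union bound over targets loses the factor of $n$ one cannot afford. Making the sum over $v$ telescope properly requires carefully choosing the coupling so that the ``disagreement flow'' out of each vertex respects a linear recursion whose transfer operator has $\|\cdot\|_\infty$-norm bounded by $(\Delta-1)\gamma<1$; triangle-freeness is exactly what decouples the siblings of a disagreed vertex and lets this step go through. Pinning by $\sigma_\Lambda$ only reduces the effective degree and enlarges slack, so the inductive bound on $C$ holds uniformly over all $(\Lambda,\sigma_\Lambda)$, as required by \Cref{corollary-main}.
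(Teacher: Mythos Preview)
Your overall strategy---verify the one-to-all influence bound of \Cref{corollary-main} via a recursive coupling, uniformly over pinnings---is correct and matches the paper. But the quantitative analysis you sketch does not close, and the gap is exactly the hard part.

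With only the crude palette bound you invoke, the single-step propagation from a disagreed vertex $w$ to a neighbour $x$ is at most $\tfrac{1}{(\alpha^*+\delta-1)\Delta}$; summing over the at most $\Delta$ neighbours of $w$ gives a branching factor $\approx \tfrac{1}{\alpha^*+\delta-1}$, which is $>1$ for every $\delta<1$. Your claimed contraction $(\Delta-1)\gamma<1-\Theta(\delta)$ therefore fails in precisely the regime the theorem is about. The paper circumvents this with two coupled ideas you are missing. First, a \emph{refined} marginal bound: using triangle-freeness (so the neighbours of $v$ form an independent set after deleting $u$) together with an AM--GM argument on the exact recursion, one gets $\mu_v(c)\le \tfrac{\eps_1}{\deg(v)}$ with $\eps_1=\tfrac{\alpha^*}{\alpha^*+\delta}<1$; the point is the denominator is the \emph{local} degree of $v$, not $\Delta$. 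Second, the recursion is unrolled along self-avoiding walks with weight $\tfrac{\eps_1}{|\Gamma_G(v_k)\setminus\{v_i:i<k\}|}$ at step $k$; summing over all one-step extensions of a SAW then telescopes to exactly $\eps_1$, so the sum over all SAWs of length $\ell$ from $u$ is at most $\eps_1^{\ell-1}$, independent of the degree sequence. Your distance-shell count $\Delta(\Delta-1)^{d-1}$ cannot exploit this telescoping and would still diverge even given the refined marginal bound.

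Separately, the $\eta$ bound in \Cref{corollary-main} is not automatic from the $C$ bound when $n-|\Lambda|$ is small; the paper handles this with an independent ``easy'' coupling (\Cref{lemma-decay-1}) based on marginal \emph{lower} bounds, which you would also need to supply.
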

Note that \Cref{theorem-colouring} is a corollary of \Cref{thm-2D+1.76D}, in which $M = q^n$.

In order to prove~\Cref{thm-2D+1.76D}, we define a partial order $\preceq$ among list-colouring instances. Let $(G'=(V',E'),\*L')$ and $(G=(V,E),\*L)$ be two list colouring instances where $\*L'=\set{L'(v)\mid v\in V'}$ and $\*L=\set{L(v)\mid v\in V}$. We say $(G',\*L')\preceq (G,\*L)$ if there exists a vertex $v\in V$ satisfying
\begin{itemize}
	\item $G'=G[V\setminus\set{v}]$;
	\item for every $u\in \Gamma_G(v)$, it holds that $L'(u)\subseteq L(u)$ and $\abs{L(u)\setminus L'(u)}\le 1$;
	\item for every $u\in V'\setminus \Gamma_G(v)$, it holds that $L'(u)=L(u)$.
\end{itemize}

Here, $\Gamma_G(v)$ denotes the neighbourhood of $v$ in graph $G$.
We remark that in the definition above, for each $u\in \Gamma_G(v)$, 
we can rewrite the requirement as $L'(u)=L(u)\setminus\set{c}$ for some colour $c$. 
This colour $c$ is not necessarily in $L(u)$ (in which case $L'(u)=L(u)$ and can be distinct for different $u\in \Gamma_G(v))$. 

Intuitively, $(G',\*L')\preceq (G,\*L)$ means that one can obtain $(G',\*L')$ from $(G,\*L)$ by removing one vertex $v$ and change the colour lists of the neighbours of $v$ by removing at most one color. We call a family of list-colouring instances $\+L$ \emph{downward closed} if for every $(G,\*L)\in \mathscr{L}$ and every $(G',\*L')$ such that $(G',\*L')\preceq (G,\*L)$, we have $(G',\*L')\in \mathscr{L}$.

The \emph{downward closure} of an instance $(G,\*L)$ is the minimum downward closed family of instances containing $(G,\*L)$.

Consider the following condition for a family of list colouring instances $\^L$.

\begin{condition}\label{cond:main}
Let $\chi > 0$, $0<\eps_1<1$ and $\eps_2>0$. It holds that
\begin{itemize}
	\item the maximum degree of instances in $\^L$ is at most $\chi$;
	\item for any $(G=(V, E),\*L) \in \^L$, a proper list colouring exists, and for any vertex $v \in V$ satisfying $\deg_G(v) \leq \chi - 1$, it holds that
\begin{align}
\label{eq-marginal-up-2}
\forall c \in L(v):\quad
\mu_{v,(G,\*L)}(c) \leq \frac{\eps_1}{\deg_G(v)};
\end{align}
for any vertex $v \in V$, it holds that
\begin{align}
\label{eq-marginal-up-1}
\forall c \in L(v):\quad
\mu_{v,(G,\*L)}(c) \leq \frac{1}{\eps_2\chi+1}.
\end{align}	
\end{itemize}
\end{condition}

We have the following theorem.

\begin{theorem}\label{thm-family-main}
Let $0<\eps_1< 1$ and $\eps_2>0$ be two constants.
The following result holds for any $\chi > 0$.
Let $\^L$ be a downward closed family of list-colouring instances satisfying \Cref{cond:main} with parameters $\chi$, $\eps_1$ and $\eps_2$. For any $(G=(V,E),\*L)\in\^L$, the mixing time of Glauber dynamics satisfies
	\[
	\tmix(\eps)\le  \tp{9\e^{\frac{2}{\eps_2}}}^{\tp{1+\frac{1}{(1-\eps_1)\eps_2}}} n^{1+\frac{2}{(1-\eps_1)\eps_2}}	  \cdot\log\tp{\frac{M}{\eps}},
	\]
	where $M =\prod_{v \in V}\abs{L(v)}$.
\end{theorem}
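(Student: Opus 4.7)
The plan is to verify the bounded one-to-all influence hypothesis of \Cref{corollary-main} for every instance in $\^L$. Specifically, I would establish that there are constants $C=C(\eps_1,\eps_2)$ and $\eta=\eta(\eps_1,\eps_2)<1$ such that
\[
\norm{\Psi_{\mu_{G,\*L}}^{\sigma_\Lambda}}_\infty\le \min\set{C,\,\eta(|V|-|\Lambda|-1)}
\]
for every $(G=(V,E),\*L)\in\^L$, every $\Lambda\subseteq V$, and every feasible $\sigma_\Lambda$, aiming for $C=1/((1-\eps_1)\eps_2)$ together with an $\eta$ of order $\e^{-\Theta(1/\eps_2)}$; coupled with $\mu_{\min}\ge 1/M$ this yields the claimed mixing-time estimate via \Cref{corollary-main}.

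The first reduction is to the case $\Lambda=\emptyset$. Pinning a vertex $w$ to a colour $c$ inside a list-colouring instance is exactly the operation defining the relation $\preceq$: $w$ is removed from the graph and $c$ is deleted from the list of every neighbour of $w$. Iterating this $|\Lambda|$ times and invoking downward closure of $\^L$, the instance obtained by conditioning on $\sigma_\Lambda$ still lies in $\^L$ and hence still satisfies \Cref{cond:main}; its uniform distribution coincides with the conditional distribution $\mu_{G,\*L}(\,\cdot\mid \sigma_\Lambda)$. Therefore it suffices to bound $\sum_{v\in V}\Psi_{\mu_{G,\*L}}(u,v)$ uniformly over $u\in V$ and $(G,\*L)\in\^L$.

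To achieve this row-sum bound I would deploy the recursive coupling technique of Goldberg, Martin, and Paterson \cite{GMP05}. Rooted at $u$, the recursive coupling builds, layer by layer along a BFS-style exploration of $G$, a joint sample $(X,Y)$ with marginals $\mu_{G,\*L}^{u\gets i}$ and $\mu_{G,\*L}^{u\gets j}$; each revealed vertex is coupled via the optimal single-site coupling and a disagreement may propagate to its yet-unrevealed neighbours. A careful design of the coupling produces a vertex-wise disagreement bound $\DTV{\mu_v^{u\gets i}}{\mu_v^{u\gets j}}\le f(u,v)$ that is \emph{uniform in the choice of the disagreement pair} $(i,j)\in L(u)^2$, so that the same bound controls $\Psi_{\mu_{G,\*L}}(u,v)$ itself rather than merely a single pair's total variation distance. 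Downward closure is essential here to guarantee that every conditional sub-instance encountered during the recursion still lies in $\^L$, so that the marginal bounds of \Cref{cond:main} remain valid throughout the analysis.

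The contraction of $f(u,\cdot)$ is then analyzed layer by layer. At a vertex $w$ with $\deg_G(w)<\chi$, the bound \eqref{eq-marginal-up-2} controls the probability that the optimal single-site coupling fails and that the disagreement propagates; summing over the at most $\deg_G(w)$ newly exposed neighbours gives a contraction factor of $\eps_1$ per non-maximum-degree layer. At a layer of maximum-degree vertices, \eqref{eq-marginal-up-1} caps the contribution of that layer by at most $\chi/(\eps_2\chi+1)\le 1/\eps_2$. Summing the resulting geometric series yields $\sum_v f(u,v)=O(1/((1-\eps_1)\eps_2))$, which is the required $C$. The main technical obstacle is the third step: designing the recursive coupling so that the disagreement bound is uniform over all pairs $(i,j)$ and so that the per-layer contraction factor is precisely captured by the two marginal bounds in \Cref{cond:main}. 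This is the refinement beyond \cite{GMP05}, which only bounds one-to-one influence, and is why downward closure appears explicitly as a hypothesis on $\^L$.
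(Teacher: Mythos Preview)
Your high-level architecture is exactly the paper's: reduce to $\Lambda=\emptyset$ by observing that pinning is the $\preceq$ operation and invoking downward closure, then bound $\norm{\Psi}_\infty$ via the Goldberg--Martin--Paterson recursive coupling to verify the one-to-all condition of \Cref{corollary-main}, and finish with $\mu_{\min}\ge 1/M$.

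There is, however, one genuine gap. The recursive coupling gives the constant bound $C=1/((1-\eps_1)\eps_2)$, but it does \emph{not} give the $\eta(n-|\Lambda|-1)$ bound. Your sketch simply asserts that an $\eta$ ``of order $\e^{-\Theta(1/\eps_2)}$'' will emerge, but nothing in the recursive-coupling branch produces it: that argument bounds the \emph{sum} of influences, not each individual entry. The paper obtains the $\eta$ part by a completely separate, short argument (\Cref{lemma-decay-1}): from the marginal \emph{upper} bound in \Cref{cond:main} one derives via the colouring recursion a marginal \emph{lower} bound $\mu_v(c)\ge 1/(\e^{1/\eps_2}|L(v)|)$ (\Cref{lemma-up-low}), and then a direct one-vertex coupling of $\mu_v^{u\gets c_1}$ and $\mu_v^{u\gets c_2}$ succeeds with probability at least $1/(3\e^{1/\eps_2})$, giving $R_{G,\*L}(u,v)\le 1-\tfrac{1}{3\e^{1/\eps_2}}$ entrywise and hence $\eta=1-\tfrac{1}{3\e^{1/\eps_2}}$. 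Your proposal is missing this ingredient entirely.

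Two smaller corrections. First, the paper's recursion is not a BFS layering but a SAW unrolling: one removes $u$, recurses on each neighbour $w_k$ in a modified instance, and obtains a weight $\min(\eps_1/\deg_{G_u}(w_k),\,1/(\eps_2\chi+1))$ per edge of the SAW (\Cref{lemma-recursive}); the point is that after removing $u$ every neighbour has degree at most $\chi-1$, so \eqref{eq-marginal-up-2} \emph{always} applies at the next step, not only at ``non-maximum-degree layers''. The single factor $1/(\eps_2\chi+1)\le 1/(\eps_2\chi)$ enters via an index-shift trick at the terminal vertex (proof of \Cref{lemma-path}), not from a distinguished max-degree layer. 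Second, your identification of the ``main technical obstacle'' is off: uniformity over pairs $(i,j)$ is automatic once you define $R_{G,\*L}$ as the maximum over colours and observe that the recursion bounds this maximum by the same quantity on sub-instances. The actual refinement over \cite{GMP05} is the summation over \emph{all} target vertices $v$, which is handled by the SAW-counting inequality~\eqref{eq-claim-path}.
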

\Cref{thm-2D+1.76D} is actually a corollary of \Cref{thm-family-main} via verifying \Cref{cond:main}.
We will prove~\Cref{thm-family-main} first. The proof of \Cref{thm-2D+1.76D} is deferred to  \Cref{section-proof-colouring}.

\subsection{Analysis of mixing time}
In the following, we assume $\mathscr{L}$ is downward closed and satisfies \Cref{cond:main}. Let $\chi>0$, $0<\eps_1<1$ and $\eps_2>0$ be the parameters promised by \Cref{cond:main}. 

For any list colouring instance $(G,\*L)$ where $G=(V,E)$, recall $\mu_{G,\*L}$ is the uniform distribution over all proper list colourings. 
Define the matrix $R_{G,\*L} \in \mathbb{R}_{\geq 0}^{V \times V}$ by
\begin{align}
\label{eq-def-recur}
\forall u,v \in V,\quad
R_{G,\*L}(u, v) = \max_{c_1,c_2 \in L(u)} \DTV{\mu_{v,(G,\*L)}^{u \gets c_1}}	{\mu_{v,(G,\*L)}^{u \gets c_2}},
\end{align}
where for $c=c_1$ or $c_2$, $\mu_{v,(G,\*L)}^{u \gets c}$ denotes the marginal distribution on $v$ projected from $\mu_{G,\*L}$ conditional on the colour of $u$ is fixed as $c$.
The matrix $R$ is essentially the same as the influence matrix $\Psi^{\sigma_\Lambda}_\mu$ in~\eqref{eq-def-psi}, 
except that in the case of $u=v$, $R_{G,\*L}(v, v) = 0$ if and only if $\abs{L(v)} = 1$ (thus $c_1 = c_2$).
Namely, 
\begin{align*}
R_{G,\*L}(v, v)	= \max_{c_1,c_2 \in L(v)} \DTV{\mu_{v,(G,\*L)}^{v \gets c_1}}	{\mu_{v,(G,\*L)}^{v \gets c_2}} = \one{\abs{L(v)} > 1}.
\end{align*}

Roughly speaking, each entry $R_{G,\*L}(u,v)$ is the influence of $u$ on $v$ given two different colours of $u$.
The key to apply \Cref{theorem-main} is to bound the total influence of $u$ on all other vertices.

\begin{lemma}
\label{lemma-decay}
For any instance $(G=(V,E),\*L) \in \mathscr{L}$,
\begin{align*}
      \forall u\in V,\quad \sum_{v\in V:v\neq u} R_{G,\*L}(u,v) \leq \min\left\{\tp{1-\frac{1}{3\e^{1/\eps_2}}}(\abs{V}-1), \frac{1}{(1-\eps_1)\eps_2}\right\}.
\end{align*}
\end{lemma}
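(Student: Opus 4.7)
The plan is to prove \Cref{lemma-decay} by induction on $\abs{V}$ using the recursive coupling technique of Goldberg, Martin and Paterson~\cite{GMP05}, adapted from the ``one-to-one'' to the ``one-to-all'' influence (this adaptation being the key technical contribution flagged earlier in the paper). The base case $\abs{V}=1$ is vacuous. For the inductive step, I fix $u\in V$ and the two colours $c_1,c_2\in L(u)$ witnessing $R_{G,\*L}(u,v)$, and build a coupling of $\mu_{G,\*L}^{u\gets c_1}$ and $\mu_{G,\*L}^{u\gets c_2}$ that sweeps through the neighbourhood $\Gamma_G(u)$ first: for each $w_i\in\Gamma_G(u)$ (in some fixed order), sample the colours at $w_i$ in the two chains from the optimal coupling of their conditional marginals given the pinning at $u$ and the already-revealed coordinates at $w_1,\dots,w_{i-1}$. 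Once $\Gamma_G(u)$ is exhausted, any surviving disagreement lives on a subset of $\Gamma_G(u)$, and the remainder of the coupling reduces to a list-colouring problem on the reduced instance $(G',\*L')=(G-u,\*L')$, which still lies in $\^L$ by downward closure.

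The quantitative heart of the argument is the bound
\begin{align*}
p_i \;\le\; \mu_{w_i,(G',\*L')}(c_1) + \mu_{w_i,(G',\*L')}(c_2)
\end{align*}
on the per-neighbour coupling failure probability $p_i$. Applying \Cref{cond:main} bounds each marginal above by either $\eps_1/\deg_G(w_i)$ via \eqref{eq-marginal-up-2} or $1/(\eps_2\chi+1)$ via \eqref{eq-marginal-up-1}. Plugging the inductive hypothesis on $(G',\*L')\in\^L$ into the coupling analysis then yields the schematic recursion
\begin{align*}
\sum_{v\ne u}R_{G,\*L}(u,v) \;\le\; \sum_{i=1}^{\deg_G(u)} p_i\,\tp{1+\sum_{v\ne w_i}R_{G',\*L'}(w_i,v)}.
\end{align*}

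Unrolling this recursion in two different ways should give the two terms in the minimum. The bound $\frac{1}{(1-\eps_1)\eps_2}$ emerges as a geometric sum whose contraction factor $\eps_1<1$ comes from $\sum_i p_i\le \eps_1$ (from \eqref{eq-marginal-up-2} summed over the $\le \chi$ neighbours of $u$), while the base term $\frac{1}{\eps_2}$ comes from \eqref{eq-marginal-up-1}, so the unrolling gives $\sum_{k\ge 0}\eps_1^k/\eps_2=\frac{1}{(1-\eps_1)\eps_2}$. The bound $\tp{1-\frac{1}{3\e^{1/\eps_2}}}(\abs{V}-1)$, relevant when $\abs{V}$ is small enough that the geometric bound is worse than a per-vertex estimate, follows from the trivial $R\le 1$ sharpened by a guaranteed ``coupling success'' rate at each vertex from \eqref{eq-marginal-up-1}; the prefactor $\e^{1/\eps_2}$ is the natural constant associated to the recursion, analogous to the $\alpha^*$ analysis of~\cite{GMP05}. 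The main obstacle I anticipate is the bookkeeping of degrees: \eqref{eq-marginal-up-2} is only valid for vertices of degree strictly less than $\chi$, while the recursion naturally visits vertices of degree exactly $\chi$ where only the weaker bound \eqref{eq-marginal-up-1} applies, so one must split cases carefully without losing the right constants, and verify at each step that the reduced instance continues to satisfy \Cref{cond:main} with the same parameters $(\chi,\eps_1,\eps_2)$.
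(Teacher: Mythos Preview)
Your high-level approach---recursive coupling \`a la~\cite{GMP05} applied to the ``one-to-all'' sum---is indeed what the paper does for the second bound, but two key steps in your plan do not work as written.

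\textbf{The contraction $\sum_i p_i\le\eps_1$ is not justified.} The marginal bound~\eqref{eq-marginal-up-2} gives $\mu_{w_i}(c)\le \eps_1/\deg_{G'}(w_i)$, where the degree in the denominator is that of the \emph{neighbour} $w_i$, not of $u$. Summing over the $\deg_G(u)$ neighbours yields $\sum_i p_i\le \sum_i \eps_1/\deg_{G'}(w_i)$, which has no reason to be at most $\eps_1$. The paper handles this by unrolling the recursion along self-avoiding walks: the factor $\eps_1/\deg_{G'}(w_i)$ at the step into $w_i$ is precisely cancelled by the $\deg_{G'}(w_i)$ choices when one \emph{leaves} $w_i$, so that the total weight of length-$\ell$ SAWs is $\eps_1^{\ell-1}$. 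There is a mismatch at the two endpoints of the walk (the first step has no denominator to cancel, the last step contributes only the $\frac{1}{\eps_2\chi+1}$ bound), and repairing it costs the prefactor $\frac{1}{\eps_1\eps_2}$; summing the geometric series then gives $\frac{1}{(1-\eps_1)\eps_2}$. Your proposed direct induction on $|V|$ would need this re-indexing too, and it is the non-obvious step. Also note that the paper uses a telescoping argument (splitting $u$ into $\deg_G(u)$ copies and interpolating between $\sigma_0,\dots,\sigma_m$) rather than sequential conditional coupling, which yields the sharper bound $p_i\le\max\{\mu(c_1),\mu(c_2)\}$ instead of your $\mu(c_1)+\mu(c_2)$; the extra factor of $2$ would break the constants.

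\textbf{The first bound needs marginal \emph{lower} bounds.} The bound $(1-\tfrac{1}{3\e^{1/\eps_2}})(|V|-1)$ is proved by showing $R_{G,\*L}(u,v)\le 1-\tfrac{1}{3\e^{1/\eps_2}}$ for each $v$ individually. This requires a positive lower bound on the overlap of the two conditional marginals at $v$, i.e.\ a \emph{lower} bound on $\mu_v(c)$, whereas~\eqref{eq-marginal-up-1} is an upper bound. The paper derives the needed lower bound $\mu_v(c)\ge \tfrac{1}{\e^{1/\eps_2}|L(v)|}$ by one application of the tree recursion for colourings combined with the upper bound from \Cref{cond:main}; this step is missing from your plan.
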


We first use \Cref{lemma-decay} to prove the main theorem for list colouring (\Cref{thm-family-main}).
Then we prove \Cref{lemma-decay} in \Cref{section-easy-coupling} and \Cref{section-recursive-coupling}. 

To prove \Cref{thm-family-main}, we will also need the following notion of pinning.
\begin{definition}[instance induced by pinning]
\label{definition-pin}
Let $(G=(V,E),\*L)$ be a list colouring instance. Let $\Lambda \subseteq V$ be a subset of vertices and $\sigma \in \otimes_{v \in \Lambda}L(v)$	a partial colouring on $\Lambda$. Define $\pin_{G,\*L}(\Lambda,\sigma) = (\widetilde{G},\widetilde{\*L})$ as the induced list colouring instance after the pinning $\sigma$, where $\widetilde{G}=G[V \setminus \Lambda ]$ is the subgraph of $G$ induced by $V \setminus \Lambda$, and $\widetilde{\*L} = \{\widetilde{L}(v) \mid v \in V \setminus \Lambda\}$ is defined by for all $v \in V \setminus \Lambda$,
\begin{align*}
\widetilde{L}(v) = L(v) \setminus \left\{ \sigma_u \mid u \in \Lambda \land \{u,v\} \in E \right\}.
\end{align*}
\end{definition}

It is clear that for any $\Lambda$ and $\sigma$, $\pin_{G,\*L}(\Lambda,\sigma)$ is in the downward closure of  $(G,\*L)$.

Now, we are ready to prove \Cref{thm-family-main}.
\begin{proof}[Proof of \Cref{thm-family-main}]
%

It suffices to verify that every $(G,\*L) \in \^L$ is $(C,\eta)$-spectrally independent, which implies the theorem by \Cref{theorem-main}. 
Fix a list colouring instance $(G= (V,E),\*L) \in \^L$. Fix a subset $\Lambda \subseteq V$ with $\abs{\Lambda} \leq n - 2$ and a feasible partial colouring $\sigma_\Lambda \in \otimes_{v \in \Lambda}L(v)$. Let $(\widetilde{G},\widetilde{\*L}) = \pin_{G,\*L}(\Lambda,\sigma_\Lambda)$, where $\widetilde{G}=G[V \setminus \Lambda]$ and $\widetilde{\*L} = \{\widetilde{L}(v) \mid v \in V \setminus \Lambda\}$.
Note that for any $u \in V \setminus \Lambda$, $\widetilde{L}(u)$ contains precisely the feasible colours for $u$ conditional on $\sigma_\Lambda$.
Then, by the definition of $\Psi^{\sigma_\Lambda}_\mu$ in \eqref{eq-def-psi}, 
\begin{align*}
\forall u,v \in V \setminus \Lambda \text{ with } u\neq v,\quad
\Psi^{\sigma_\Lambda}_\mu(u,v) 	&= \max_{c_1,c_2 \in \widetilde{L}(u) }\DTV{\mu_{v,(G,\*L)}^{\sigma_\Lambda, u \gets c_1} }{ \mu_{v,(G,\*L)}^{\sigma_\Lambda, u \gets c_2}}\\
\tag*{(by \Cref{definition-pin})}\quad&= \max_{c_1,c_2 \in \widetilde{L}(u) }\DTV{\mu_{v,(\widetilde{G},\widetilde{\*L})}^{ u \gets c_1} }{ \mu_{v,(\widetilde{G},\widetilde{\*L})}^{u \gets c_2}}\\
&= R_{\widetilde{G},\widetilde{\*L}}(u, v).
\end{align*}
Also by the definition of $\Psi^{\sigma_\Lambda}_\mu$, for any $v \in V \setminus \Lambda$, it holds that $\Psi^{\sigma_\Lambda}_\mu(v,v) = 0$. Since $\^L$ is downward closed, $(\widetilde{G},\widetilde{\*L}) \in \^L$. By \Cref{lemma-decay},
\begin{align*}
\norm{\Psi_\mu^{\sigma_\Lambda}}_{\infty} &= \max_{u \in V \setminus \Lambda}\sum_{v\in V \setminus \Lambda} \Psi_{\mu}^{\sigma_\Lambda}(u,v)  = \max_{u \in V \setminus \Lambda}\sum_{v\in V \setminus \Lambda:v \neq u} R_{\widetilde{G},\widetilde{\*L}}(u,v)\\
&\leq \min\left\{\tp{1-\frac{1}{3\e^{1/\eps_2}}}(n - \abs{\Lambda}-1), \frac{1}{(1-\eps_1)\eps_2}\right\}.
\end{align*}
Hence, the list colouring instance $(G,\*L) \in \^L$ satisfies bound one-to-all influence condition in \Cref{corollary-main} with $C= \frac{1}{(1-\eps_1)\eps_2}$ and $\eta = 1-\frac{1}{3\e^{1/\eps_2}}$. By \Cref{corollary-main}, Glauber dynamics on $(G,\*L)$ has mixing time
\begin{align*}
\tmix(\epsilon) \leq  \frac{n^{1+\frac{2}{(1-\eps_1)\eps_2}}}{\tp{\frac{1}{3\e^{1/\eps_2}}}^{2+\frac{2}{(1-\eps_1)\eps_2}}}\cdot \log\tp{\frac{1}{\epsilon \mu_{\min}}}
 \leq
 	 \tp{9\e^{\frac{2}{\eps_2}}}^{\tp{1+\frac{1}{(1-\eps_1)\eps_2}}} n^{1+\frac{2}{(1-\eps_1)\eps_2}}	  \cdot\log\tp{\frac{M}{\eps}},
\end{align*}
where the last inequality holds because $\frac{1}{\mu_{\min}} \leq M =\prod_{v \in V}\abs{L(v)}$.
\end{proof}	

The two upper bounds in \Cref{lemma-decay} are proved in \Cref{section-easy-coupling} and \Cref{section-recursive-coupling} respectively.

\subsection{An easy coupling analysis}
\label{section-easy-coupling}
We now prove the first part of \Cref{lemma-decay}, namely,
\begin{lemma}
\label{lemma-decay-1}
Let $\mathscr{L}$ be a downward closed family of list colouring instances satisfying \Cref{cond:main} with parameters $\chi>0$, $0<\eps_1<1$ and $\eps_2>0$. For any instance $(G=(V,E),\*L)\in\mathscr{L}$, it holds that
\begin{align*}
      \forall u\in V,\quad \sum_{v\in V:v\neq u} R_{G,\*L}(u,v) \leq \tp{1-\frac{1}{3\e^{1/\eps_2}}}(\abs{V}-1).
\end{align*}
\end{lemma}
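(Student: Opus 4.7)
Plan. The strategy is to construct, for each pair of distinct colours $c_1,c_2\in L(u)$, a coupling $(X,Y)$ of $\mu_{(G,\*L)}^{u\gets c_1}$ and $\mu_{(G,\*L)}^{u\gets c_2}$ satisfying $\Pr{X_w=Y_w\text{ for every }w\ne u}\ge\tfrac{1}{3\e^{1/\eps_2}}$. The coupling inequality at the marginal on $v$ then gives $\DTV{\mu_v^{u\gets c_1}}{\mu_v^{u\gets c_2}}\le 1-\tfrac{1}{3\e^{1/\eps_2}}$ uniformly in $v$, and hence $R_{G,\*L}(u,v)\le 1-\tfrac{1}{3\e^{1/\eps_2}}$ after maximising over $(c_1,c_2)$; summing over $v\ne u$ then yields \Cref{lemma-decay-1}.

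The coupling will be the \emph{identity-swap coupling}. Let $\Omega_i$ denote the set of proper colourings of $(G,\*L)$ with $u$ coloured $c_i$, and set
\[
S_i\defeq\set{\sigma\in\Omega_i\cmid\sigma_w\ne c_{3-i}\text{ for every }w\in N(u)}.
\]
Flipping only $u$'s colour from $c_i$ to $c_{3-i}$ is a bijection $S_1\leftrightarrow S_2$, because the restriction defining $S_i$ is precisely what ensures that swapping $u$'s colour preserves properness. Writing $q_i\defeq\mu_{(G,\*L)}^{u\gets c_i}(S_i)$, a maximal coupling along this bijection places mass exactly $\min(q_1,q_2)$ on the diagonal event $\set{X_w=Y_w\text{ for all }w\ne u}$, so it suffices to establish $\min(q_1,q_2)\ge\tfrac{1}{3\e^{1/\eps_2}}$.

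To lower bound $q_i$, note that since $\^L$ is downward closed the pinned instance $\pin_{G,\*L}(\set{u},c_i)$ lies in $\^L$, so \Cref{cond:main} applied to it gives $\mu_w^{u\gets c_i}(c_{3-i})\le\tfrac{1}{\eps_2\chi+1}$ for every $w\in N(u)$. Combined with $\abs{N(u)}\le\chi$, this controls the first moment: the expected number of $w\in N(u)$ with $X_w=c_{3-i}$ is at most $\tfrac{\chi}{\eps_2\chi+1}\le\tfrac{1}{\eps_2}$.

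The main obstacle is converting this first-moment estimate into the desired lower bound on $q_i=\Pr{X_w\ne c_{3-i}\text{ for every }w\in N(u)}$. The events $\set{X_w=c_{3-i}}_{w\in N(u)}$ are in general negatively correlated under $\mu^{u\gets c_i}$—for instance, two neighbours of $u$ that are themselves adjacent cannot simultaneously take colour $c_{3-i}$—so the naive product estimate $\prod_w(1-\mu_w^{u\gets c_i}(c_{3-i}))$ is not directly a valid lower bound. I expect to circumvent this via a \emph{sequential coupling} along an enumeration of $N(u)$: process the neighbours one at a time, and at each step apply \Cref{cond:main} to the instance that has been further pinned by the previously processed neighbours to bound the conditional probability that the next vertex takes colour $c_{3-i}$ by $\tfrac{1}{\eps_2\chi+1}$. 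Multiplying the at most $\chi$ success probabilities yields $q_i\ge\prod_k(1-p_k)$ with each $p_k\le\tfrac{1}{\eps_2\chi+1}$, which using $\log(1-x)\ge-x/(1-x)$ gives $q_i\ge\e^{-1/\eps_2}$ up to lower-order slack and comfortably exceeds the required $\tfrac{1}{3\e^{1/\eps_2}}$.
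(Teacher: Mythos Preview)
Your approach is correct and takes a genuinely different route from the paper's.

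The paper bounds each $R_{G,\*L}(u,v)$ individually by coupling the \emph{marginals} $\mu_v^{u\gets c_1}$ and $\mu_v^{u\gets c_2}$. To do this it first establishes a marginal \emph{lower} bound (\Cref{lemma-up-low}): $\mu_{v,(G,\*L)}(c)\ge \tfrac{1}{\e^{1/\eps_2}|L(v)|}$, derived from the tree recursion (\Cref{proposition-recursion}) together with the upper bound in \Cref{cond:main}. Then, noting that the two pinned lists $L_1(v),L_2(v)$ share all but at most one colour and each has size at least $2$, the paper couples the two marginals on their common colours, obtaining success probability at least $\tfrac{1}{3\e^{1/\eps_2}}$; the factor $3$ comes from the ratio $\tfrac{\min|L_i(v)|-1}{\max|L_i(v)|}\ge\tfrac13$.

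Your identity-swap coupling instead couples the \emph{full} conditional distributions at once. The sequential-reveal bound on $q_i$ works exactly as you outline: pinning $u$ and then $w_1,\dots,w_{k-1}$ to their realised colours keeps the instance inside $\^L$ by downward closure, so the upper bound of \Cref{cond:main} gives $\Pr{X_{w_k}=c_{3-i}\mid X_{w_1},\dots,X_{w_{k-1}}}\le\tfrac{1}{\eps_2\chi+1}$ almost surely, and the tower property then yields $q_i\ge\bigl(1-\tfrac{1}{\eps_2\chi+1}\bigr)^\chi\ge\e^{-1/\eps_2}$ (no slack needed). This bypasses both \Cref{proposition-recursion} and the auxiliary marginal lower bound \Cref{lemma-up-low}, and in fact delivers the stronger uniform estimate $R_{G,\*L}(u,v)\le 1-\e^{-1/\eps_2}$ rather than $1-\tfrac{1}{3\e^{1/\eps_2}}$. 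The paper's approach, on the other hand, is perhaps more modular in that \Cref{lemma-up-low} isolates a marginal lower bound that could be reused.
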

To prove \Cref{lemma-decay-1}, we need the following well-known recursion of list colouring.
\begin{proposition}[\cite{GK12,lu2013improved,GKM15}]
\label{proposition-recursion}
Let $(G=(V,E),\*L) \in \^L$ be a list colouring instance. Let $v_1,v_2,\ldots,v_m$ denote the neighbours of $v$ in $G$. Let $c \in L(v)$ be a colour. Let $G_v$ be the subgraph of $G$ induced by $V \setminus \{v\}$. 
For each $ 1\leq i \leq m$, define a colour list $\*L_{i,c}=\{L_{i,c}(u) \mid u \in V \setminus \{v\} \}$, 
where $L_{i,c}(u) = L(u) \setminus \{c\}$ for all $u=v_j$ and $j < i$, and $L_{i,c}(u) = L(u)$ for other vertices. It holds that for any $c \in L(v)$,
\begin{align*}
\mu_{v,(G,\*L)}(c)	&= \frac{\prod_{i=1}^m\tp{1-\mu_{v_i,(G_v,\*L_{i,c})}(c)}}{\sum_{c' \in L(v)} \prod_{i=1}^m\tp{1-\mu_{v_i,(G_v,\*L_{i,c'})}(c')} }.	
\end{align*}
\end{proposition}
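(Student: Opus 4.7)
The plan is to derive this recursion by a standard counting/chain-rule argument. First I would write the marginal probability as a ratio of partition functions: for each $c\in L(v)$, let $Z_c$ denote the number of proper list colourings of $(G,\*L)$ in which $v$ receives colour $c$. Then $\mu_{v,(G,\*L)}(c)=Z_c/\sum_{c'\in L(v)}Z_{c'}$, so it suffices to understand the numerator $Z_c$ up to a multiplicative constant that does not depend on $c$ (namely $|\Omega_{G_v,\*L}|$, once one extracts it below).

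Next I would reinterpret $Z_c$ in terms of $G_v$. Since $v$ is isolated after removal and interacts with the rest of the graph only through the constraints $X_{v_i}\neq c$ for $1\le i\le m$, we have
\begin{align*}
Z_c = \bigl|\{X\in\Omega_{G_v,\*L}\cmid X_{v_i}\neq c\text{ for all }i\in[m]\}\bigr| = |\Omega_{G_v,\*L}|\cdot\Pr[X\sim \mu_{G_v,\*L}]{\forall i\in[m],\ X_{v_i}\neq c}.
\end{align*}
Here the lists on $V\setminus\{v\}$ appearing in $\mu_{G_v,\*L}$ are the original lists in $\*L$ restricted to $V\setminus\{v\}$, exactly matching the base case $\*L_{1,c}$ in the statement.

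The central step is to expand the joint probability by the chain rule along the enumeration $v_1,\ldots,v_m$:
\begin{align*}
\Pr[X\sim\mu_{G_v,\*L}]{\forall i,\ X_{v_i}\neq c}
=\prod_{i=1}^{m}\Pr[X\sim\mu_{G_v,\*L}]{X_{v_i}\neq c\cmid X_{v_j}\neq c\text{ for all }j<i}.
\end{align*}
The key observation is that conditioning the uniform distribution $\mu_{G_v,\*L}$ on the event $\{X_{v_j}\neq c:\ j<i\}$ is the same as forbidding colour $c$ from each of the lists $L(v_1),\ldots,L(v_{i-1})$ and then sampling a uniformly random proper list colouring; that is, the conditional distribution is exactly $\mu_{G_v,\*L_{i,c}}$ by the definition of $\*L_{i,c}$. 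Therefore each factor equals $1-\mu_{v_i,(G_v,\*L_{i,c})}(c)$, giving $Z_c=|\Omega_{G_v,\*L}|\cdot\prod_{i=1}^m(1-\mu_{v_i,(G_v,\*L_{i,c})}(c))$. Plugging this into $\mu_{v,(G,\*L)}(c)=Z_c/\sum_{c'\in L(v)}Z_{c'}$ cancels the common factor $|\Omega_{G_v,\*L}|$ and yields the stated identity.

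I do not anticipate any serious obstacle here since every step is either a definitional rewriting or the chain rule. The only point requiring a little care is verifying that the conditional distribution of $\mu_{G_v,\*L}$ after forbidding colour $c$ at the prefix $v_1,\ldots,v_{i-1}$ is indeed uniform on the set of proper list colourings with lists $\*L_{i,c}$; this follows immediately from the fact that $\mu_{G_v,\*L}$ is itself uniform and that the event $\{X_{v_j}\neq c:\ j<i\}$ is precisely the set of $X\in\Omega_{G_v,\*L}$ that lie in $\Omega_{G_v,\*L_{i,c}}$. One should also check in passing that the denominator on the right-hand side is nonzero, which holds because $(G,\*L)\in\^L$ admits at least one proper list colouring by \Cref{cond:main}, so $\sum_{c'}Z_{c'}>0$.
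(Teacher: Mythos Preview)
The paper does not give its own proof of this proposition; it is stated as a known recursion with citations to \cite{GK12,lu2013improved,GKM15}. Your argument is correct and is precisely the standard derivation: write the marginal as a ratio of counts, identify $Z_c$ with the probability under $\mu_{G_v,\*L}$ that all neighbours avoid $c$, and expand this probability by the chain rule, recognising each conditional distribution as $\mu_{G_v,\*L_{i,c}}$.
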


We first derive upper and lower bounds for marginal probabilities from \Cref{cond:main}.

\begin{lemma}
\label{lemma-up-low}
Let $\mathscr{L}$ be a downward closed family of list colouring instances satisfying \Cref{cond:main} with parameters $\chi>0$ and $0<\eps_1<1$ and $\eps_2>0$. For any instance $(G=(V,E),\*L)\in\mathscr{L}$, it holds that
\begin{align*}
\forall c \in L(v), \quad \frac{1}{\e^{1/\eps_2}\abs{L(v)}} \leq \mu_{v,(G,\*L)}(c) \leq \frac{1}{\eps_2\chi+1} .
\end{align*}
\end{lemma}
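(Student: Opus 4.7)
The upper bound is immediate: it is precisely the content of \eqref{eq-marginal-up-1} in \Cref{cond:main} applied to $(G,\*L) \in \^L$, so no additional work is needed there.

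For the lower bound, the plan is to invoke the recursion in \Cref{proposition-recursion}. Let $v_1,\ldots,v_m$ be the neighbours of $v$ in $G$, with $m = \deg_G(v) \leq \chi$. Unfolding the recursion, the denominator $\sum_{c'\in L(v)}\prod_{i=1}^m (1-\mu_{v_i,(G_v,\*L_{i,c'})}(c'))$ is a sum of $|L(v)|$ products of quantities in $[0,1]$, and hence is at most $|L(v)|$. The remaining task is to prove
\[
\prod_{i=1}^m \bigl(1-\mu_{v_i,(G_v,\*L_{i,c})}(c)\bigr) \;\geq\; e^{-1/\eps_2},
\]
after which the desired bound $\mu_{v,(G,\*L)}(c)\geq e^{-1/\eps_2}/|L(v)|$ follows directly.

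To obtain this product bound I would apply the upper bound $\mu \leq 1/(\eps_2\chi+1)$ to each individual factor. The subtlety is that each factor is a marginal of a \emph{different} list-colouring instance $(G_v,\*L_{i,c})$ rather than of $(G,\*L)$ itself, so before applying \Cref{cond:main} I must verify $(G_v,\*L_{i,c}) \in \^L$. This instance is obtained from $(G,\*L)$ by deleting the single vertex $v$ and removing at most one colour from the list of each neighbour of $v$, which is exactly the pattern described by the partial order $\preceq$; downward closure of $\^L$ then places $(G_v,\*L_{i,c}) \in \^L$. Applying \eqref{eq-marginal-up-1} to this instance yields $\mu_{v_i,(G_v,\*L_{i,c})}(c) \leq \frac{1}{\eps_2\chi+1}$ (and the marginal is trivially $0$ in the case $c$ has been removed from $L_{i,c}(v_i)$). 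The product bound then reduces to $(1 - \frac{1}{\eps_2\chi+1})^\chi \geq e^{-1/\eps_2}$, which is a one-line consequence of $\log(1+x)\leq x$ applied with $x = \frac{1}{\eps_2\chi}$.

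I do not anticipate any real obstacle; the only genuine check is the membership $(G_v,\*L_{i,c}) \in \^L$, which is pure bookkeeping with the definition of $\preceq$ and downward closure. Combining the denominator upper bound with the numerator lower bound via \Cref{proposition-recursion} then yields $\mu_{v,(G,\*L)}(c) \geq \frac{1}{e^{1/\eps_2}|L(v)|}$, as claimed. Note that this argument uses only \eqref{eq-marginal-up-1} from \Cref{cond:main}; the stronger bound \eqref{eq-marginal-up-2} will be needed elsewhere (in the recursive coupling analysis), not here.
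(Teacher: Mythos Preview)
Your proposal is correct and follows essentially the same approach as the paper: the upper bound is read off directly from \Cref{cond:main}, and the lower bound is obtained by applying the recursion of \Cref{proposition-recursion}, noting that downward closure places each $(G_v,\*L_{i,c})$ in $\^L$, bounding the denominator by $|L(v)|$, and using the marginal upper bound on each factor of the numerator together with $m\le\chi$ and $(1-\tfrac{1}{\eps_2\chi+1})^\chi\ge e^{-1/\eps_2}$.
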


\begin{proof}
The upper bound is directly from \Cref{cond:main}. So we only need to prove the lower bound.

Fix an instance $(G,\*L) \in \^L$.
Since $\mathscr{L}$ is downward closed, each instance $(G_v,\*L_{i,c})\in\mathscr{L}$, where  $(G_v,\*L_{i,c})$ is defined in \Cref{proposition-recursion}. By the recursion in \Cref{proposition-recursion}, we have
\begin{align*}
\mu_{v,(G,\*L)}(c)	&= \frac{\prod_{i=1}^m\tp{1-\mu_{v_i,(G_v,\*L_{i,c})}(c)}}{\sum_{c' \in L(v)} \prod_{i=1}^m\tp{1-\mu_{v_i,(G_v,\*L_{i,c'})}(c')} }
\geq\frac{\tp{1-\frac{1}{\eps_2\chi+1}}^\chi}{\abs{L(v)}} \geq \frac{1}{\e^{1/\eps_2}\abs{L(v)}}.
\end{align*}
This proves the lower bound.
\end{proof}
Now, we are ready to prove~\Cref{lemma-decay-1}. 

\begin{proof}[Proof of \Cref{lemma-decay-1}]
Consider the list colouring instance $(G=(V,E),\*L)$. 
Fix a vertex $u$ and two colours $c_1,c_2 \in L(u)$. 
Define a list colouring instance $\+L_1=(G_u, \*L_1)= \pin_{G,\*L}(\{u\},c_1)$, where $G_u$ is the subgraph of $G$ induced by $V \setminus \{u\}$ and $\*L_1 = \{L_1(w) \mid w \in V \setminus \{u\}\}$.
Define a list colouring instance $\+L_2=(G_u, \*L_2)= \pin_{G,\*L}(\{u\},c_2)$, where $\*L_2 = \{L_2(w) \mid w \in V \setminus \{u\}\}$.
Then
\begin{align*}
\forall v\neq u, \quad \mu_{v,(G,\*L)}^{u \gets c_1}(\cdot) = \mu_{v,\+L_1}(\cdot), \quad 	\mu_{v,(G,\*L)}^{u \gets c_2}(\cdot) = \mu_{v,\+L_2}(\cdot).
\end{align*}
Since $(G,\*L) \in \mathscr{L}$ and $\mathscr{L}$ is downward closed, it holds that both $\+L_1,\+L_2 \in \mathscr{L}$.
By \Cref{lemma-up-low}, for any $v \neq u$,
\begin{align*}
\forall c \in L_1(v): \quad 	 &\mu_{v,\+L_1}(c) \geq \frac{1}{\e^{1/\eps_2}\abs{L_1(v)}}\\
\forall c \in L_2(v): \quad 	 &\mu_{v,\+L_2}(c) \geq \frac{1}{\e^{1/\eps_2}{\abs{L_2(v)}}}.
\end{align*}
On the other hand, since  $\+L_1,\+L_2 \in \mathscr{L}$, for any $v \in V$, it holds that $\abs{L_1(v)} \geq 2$ and $\abs{L_2(v)} \geq 2$ (otherwise, the upper bound for the marginals in \Cref{cond:main} cannot hold).
By the definitions of $\+L_1$ and $\+L_2$, it holds that $\abs{L_1(v) \cap L_2(v)} \geq \min\{\abs{L_1(v), L_2(v)}\} - 1$ and $\big|\abs{L_1(v)} - \abs{L_2(v)}\big|\leq 1$. 
Thus, we can couple $\mu_{v,\+L_1}(\cdot)$ and  $\mu_{v,\+L_2}(\cdot)$ with success probability at least 
\begin{align*}
&\sum_{c \in L_1(v) \cap L_2(v)}	\min\left\{\frac{1}{\e^{1/\eps_2}\abs{L_1(v)}},\frac{1}{\e^{1/\eps_2}\abs{L_2(v)}}\right\}\geq \frac{\min\{\abs{L_1(v), L_2(v)}\} - 1}{\e^{1/\eps_2} \max\{\abs{L_1(v)},\abs{L_2(v)}\}}\\
\geq \,& \frac{1}{e^{1/\eps_2}} \cdot \frac{\min\{\abs{L_1(v)}, \abs{L_2(v)}\} - 1}{ \min\{\abs{L_1(v)},\abs{L_2(v)}\}+1} \geq \frac{1}{3\e^{1/\eps_2}}.
\end{align*}
By the coupling inequality (\Cref{proposition-coupling}), we have for any $c_1,c_2 \in L(u)$ and any $v \neq u$,
\begin{align*}
\DTV{\mu_{v,(G,\*L)}^{u \gets c_1}}	{\mu_{v,(G,\*L)}^{u \gets c_2}}	= \DTV{\mu_{v,\+L_1}}	{\mu_{v,\+L_2}} \leq 1 - \frac{1}{3\e^{1/\eps_2}}.
\end{align*}
By the definition of $R_{G,\*L}$, we have that
\begin{align*}
\sum_{v \in V:v \neq u} R_{G,\*L}(u,v) & \leq \tp{1-\frac{1}{3\e^{1/\eps_2}}}\tp{\abs{V}-1}. \qedhere
\end{align*}
\end{proof}
\subsection{Recursive coupling}
\label{section-recursive-coupling}
We then prove the second part of \Cref{lemma-decay}.
\begin{lemma}
\label{lemma-decay-2}
Let $\mathscr{L}$ be a downward closed family of list colouring instances satisfying \Cref{cond:main} with parameters $\chi>0$, $0<\eps_1<1$ and $\eps_2>0$. For any instance $(G=(V,E),\*L) \in \mathscr{L}$, it holds that
\begin{align*}
      \forall u\in V,\quad \sum_{v\in V: v\neq u} R_{G,\*L}(u,v) \leq \frac{1}{(1-\eps_1)\eps_2}.
\end{align*}
\end{lemma}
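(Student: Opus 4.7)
Fix $(G=(V,E),\*L)\in\^L$, $u\in V$, and $c_1,c_2\in L(u)$. Let $\+L_j:=\pin_{G,\*L}(\{u\},c_j)\in\^L$ for $j=1,2$ (using downward closure); then $\mu_{v,(G,\*L)}^{u\gets c_j}=\mu_{v,\+L_j}$ for every $v\ne u$. It therefore suffices to exhibit a coupling $(X,Y)$ of $\mu_{\+L_1}$ and $\mu_{\+L_2}$ with $\sum_{v\ne u}\Pr{X_v\ne Y_v}\le\frac{1}{(1-\eps_1)\eps_2}$, since then the coupling inequality applied vertex-by-vertex, together with the definition of $R_{G,\*L}$, yields the lemma after taking the maximum over $c_1,c_2$. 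My plan is to implement the recursive coupling of Goldberg--Martin--Paterson~\cite{GMP05} as a BFS exploration from $u$: at each step one vertex $w$ is processed and its pair $(X_w,Y_w)$ is drawn from the optimal coupling of the two conditional marginals of $w$ in $\+L_1$ and $\+L_2$ given the history of the exploration.

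\textbf{Amplification.} The crux is a per-level amplification estimate. Any conditional instance arising along the BFS is again of the form $\pin_{G,\*L}(\Lambda,\sigma)\in\^L$ (by downward closure), so the marginal bounds of Condition~\ref{cond:main} are available uniformly. For the level-one contribution, the global bound~\eqref{eq-marginal-up-1} controls $\DTV{\mu_{v_i,\+L_1}}{\mu_{v_i,\+L_2}}$ for each neighbour $v_i$ of $u$: after cancelling the matched mass on colours outside $\{c_1,c_2\}$ and using $\deg_G(u)\le\chi$, summing over neighbours gives a level-one disagreement mass of at most $\frac{1}{\eps_2}$. At every subsequent level, a disagreeing vertex $w$ propagates a disagreement to each neighbour $w'$ with total-variation mass at most $\eps_1/\deg(w')$ by the finer bound~\eqref{eq-marginal-up-2}; summing over the at most $\deg(w)$ potentially new children on the BFS frontier cancels one factor of degree and leaves a multiplicative amplification of at most $\eps_1$ per level. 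Triangle-freeness is essential here: it ensures that a level-$(k{+}1)$ vertex has only one ``parent'' disagreeing neighbour along the BFS tree, so the attribution of disagreements to parents is unambiguous and the single-path amplification estimate can be applied.

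\textbf{Closing the recursion and the main obstacle.} Iterating the amplification gives $D_k\le\eps_1\cdot D_{k-1}$ for $k\ge 2$, where $D_k$ denotes the expected total disagreement mass on BFS-level $k$, with base case $D_1\le 1/\eps_2$. Summing the geometric series,
$$\sum_{v\ne u}\Pr{X_v\ne Y_v}\;=\;\sum_{k\ge 1}D_k\;\le\;\frac{1}{\eps_2}\sum_{k\ge 0}\eps_1^k\;=\;\frac{1}{(1-\eps_1)\eps_2}.$$
The main technical obstacle is making the amplification rigorous in the presence of arbitrary pinnings accumulated along the BFS: one must verify that the optimal coupling at a frontier vertex $w'$ whose effective list has been truncated by several previously revealed colours still obeys the $\eps_1/\deg(w')$ bound. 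This again rests on Condition~\ref{cond:main} applied to the further-pinned instance (supplied by downward closure of $\^L$). A secondary subtlety is that the per-level disagreement mass is a sum over many vertices whose random sets of disagreeing neighbours are correlated; cleanly handling these correlations is precisely what motivates the BFS ordering together with the parent-based attribution enabled by triangle-freeness.
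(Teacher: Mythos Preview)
Your proposal has two genuine gaps. First, triangle-freeness is not among the hypotheses of the lemma: \Cref{cond:main} says nothing about the graph structure, and the paper's proof works for arbitrary $(G,\*L)\in\^L$. Even if triangle-freeness were available, your claim that it guarantees a unique disagreeing BFS parent is false: in the $4$-cycle $u\!-\!a\!-\!v\!-\!b\!-\!u$ the vertex $v$ sits at BFS level $2$ with two level-$1$ parents $a$ and $b$, and both may disagree. So the ``parent-based attribution'' you rely on to tame correlations cannot be justified this way.

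Second, the per-level amplification $D_k\le\eps_1 D_{k-1}$ does not follow from the bookkeeping you describe. The marginal bound~\eqref{eq-marginal-up-2} yields a contribution of at most $\eps_1/\deg(w')$ for each \emph{child} $w'$; summing this over the children of a fixed disagreeing $w$ gives $\sum_{w'}\eps_1/\deg(w')$, which has the child's degree in the denominator and the number of children (the parent's degree) in the numerator---these do not cancel. The paper confronts exactly this mismatch. Instead of a single global BFS coupling, it establishes a recursion on $R$ (\Cref{lemma-recursive}) by splitting $u$ into $\deg_G(u)$ pendant copies and interpolating via the triangle inequality, which produces a sum over \emph{all} self-avoiding walks from $u$ rather than just BFS-tree paths (\Cref{lemma-path}). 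The natural weights along a SAW involve child degrees; the paper then performs an index shift---trading the final factor, bounded by $1/(\eps_2\chi+1)$, for a leading factor $\eps_1/\deg_G(u)\ge\eps_1/\chi$ at the cost of $1/(\eps_1\eps_2)$---so that each step carries a factor $\eps_1/|\Gamma_G(v_k)\setminus\{v_i:i<k\}|$ with the \emph{parent's} residual degree in the denominator. This is what makes the sum over the next vertex telescope to $\eps_1$ per step (equation~\eqref{eq-claim-path}) and yields the geometric series you were aiming for.
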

We use the following lemma to prove~\Cref{lemma-decay-2}.
\begin{definition}[self-avoiding walk (SAW)]
  A path $P = (v_1,v_2,\ldots,v_\ell)$ in a graph $G$ is called a \emph{self-avoiding walk} (SAW) if each $v_i$ and $v_{i+1}$ are adjacent and $v_i \neq v_j$ for all $i \neq j$. 
\end{definition}

\begin{lemma}
  \label{lemma-path}
  Let $\mathscr{L}$ be a downward closed family of list colouring instances satisfying \Cref{cond:main} with parameters $\chi>0$, $0<\eps_1<1$ and $\eps_2>0$. 
  For any instance $(G=(V,E),\*L) \in \mathscr{L}$ and any two vertices $u,v\in V$ with $u\ne v$, it holds that
\begin{align}
\label{eq-lemma-path}
 R_{G,\*L}(u,v) \leq \frac{1}{\eps_1\eps_2}\sum_{\substack{\text{SAW $P=(v_1,v_2,\ldots,v_{\ell})$}\\\text{$u=v_1$ and $v = v_{\ell}$}}} \prod_{k=1}^{\ell-1} \frac{\eps_1}{|\Gamma_G(v_k) \setminus \{v_i \mid i < k\} |},
\end{align}
where $\Gamma_G(v_k)$ is the neighbourhood of $v_k$ in $G$.
\end{lemma}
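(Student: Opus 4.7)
The plan is to prove \Cref{lemma-path} by induction on $\abs{V}$ using a recursive coupling in the spirit of~\cite{GMP05}. Fix two colours $c_1, c_2 \in L(u)$ attaining the maximum in the definition of $R_{G,\*L}(u,v)$, and let $\+L^{(i)} = \pin_{G,\*L}(\{u\}, c_i)$ for $i=1,2$; by downward-closure of $\mathscr{L}$ both lie in $\mathscr{L}$. The two pinned instances share the graph $G_u = G[V\setminus\{u\}]$ and differ only at the colour lists of $\Gamma_G(u)$: for each $w\in\Gamma_G(u)$, $L^{(1)}(w) = L(w)\setminus\{c_1\}$ whereas $L^{(2)}(w) = L(w)\setminus\{c_2\}$. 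Thus the problem reduces to bounding $\DTV{\mu_{v,\+L^{(1)}}}{\mu_{v,\+L^{(2)}}}$.

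Next I would couple the two distributions by sampling colours for $\Gamma_G(u) = \{u_1, \ldots, u_m\}$ one at a time in an arbitrary fixed order, using at each step the optimal coupling of the conditional marginal distributions on $u_j$. Combining \Cref{proposition-coupling} with a telescoping argument for total variation yields
\begin{align*}
R_{G,\*L}(u,v) \leq \sum_{j=1}^m \E\!\left[\DTV{\mu_{u_j, \+L^{(1)}_{j-1}}}{\mu_{u_j, \+L^{(2)}_{j-1}}} \cdot R_{\widetilde{G}_j, \widetilde{\*L}_j}(u_j, v)\right],
\end{align*}
where $\+L^{(i)}_{j-1}$ is the instance $\+L^{(i)}$ after further pinning $u_1,\ldots,u_{j-1}$ to their coupled values, and $(\widetilde{G}_j, \widetilde{\*L}_j) \in \mathscr{L}$ is the resulting instance (after also pinning $u_j$ to a disagreeing value). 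The inner TV is bounded using the fact that $\+L^{(1)}_{j-1}$ and $\+L^{(2)}_{j-1}$ differ at $u_j$ only in which of $c_1, c_2$ is forbidden; by the marginal upper bound~\eqref{eq-marginal-up-2} of~\Cref{cond:main}, which applies since removing $u$ strictly reduces the degree, this local TV factor matches exactly $\frac{\eps_1}{|\Gamma_G(u)\setminus\emptyset|}$, the $k=1$ term of~\eqref{eq-lemma-path}.

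Applying the inductive hypothesis to $R_{\widetilde{G}_j,\widetilde{\*L}_j}(u_j, v)$ bounds it by $\frac{1}{\eps_1\eps_2}$ times a sum over SAWs from $u_j$ to $v$ in $\widetilde{G}_j$, which is a subgraph with $u$ and $u_1,\ldots,u_{j-1}$ removed. Each such SAW, prepended by $u$, yields a SAW from $u$ to $v$ in $G$ whose second vertex is $u_j$; the self-avoidance condition is automatic since the previously revealed vertices are absent from $\widetilde{G}_j$. Summing over $j$ then reconstructs the sum over all SAWs from $u$ to $v$ on the right-hand side of~\eqref{eq-lemma-path}, with the degree-reciprocal factors $\frac{\eps_1}{|\Gamma_G(v_k)\setminus\{v_i:i<k\}|}$ aligning correctly across the recursion.

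The main obstacle I anticipate lies in the careful bookkeeping of this recursion: (i) justifying the chain-rule bound so that the per-step error is precisely $\DTV{\mu^{(1)}_{u_j}}{\mu^{(2)}_{u_j}}\cdot R(u_j,v)$ rather than a weaker sum; (ii) verifying that the local TV factor emerges as exactly $\frac{\eps_1}{|\Gamma_G(v_k)\setminus\{v_i:i<k\}|}$ (with no slack) from~\eqref{eq-marginal-up-2}, in particular confirming that the relevant degree bound applies at every step; and (iii) tracking the prefactor $\frac{1}{\eps_1\eps_2}$ through the base case, which I expect to arise from invoking the coarser marginal bound~\eqref{eq-marginal-up-1} to absorb a residual $\frac{1}{\eps_2\chi+1}$ term at the terminal vertex of each SAW, where the degree-based bound~\eqref{eq-marginal-up-2} is no longer applicable.
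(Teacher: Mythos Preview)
Your plan has the right high-level structure but misidentifies the local factor, and this breaks the induction as you state it. The marginal bound~\eqref{eq-marginal-up-2} controls $\mu_w(c)$ in terms of $\deg(w)$, so when two telescoped instances on $G_u$ differ only in the list of a neighbour $w_j\in\Gamma_G(u)$, the total variation of their $w_j$-marginals is at most $\max\{\mu_{w_j}(c_1),\mu_{w_j}(c_2)\}\le\eps_1/\deg_{G_u}(w_j)$ --- the degree of the \emph{neighbour} $w_j$, not of $u$. One step of the recursion therefore contributes the $k=2$ factor $\eps_1/|\Gamma_G(v_2)\setminus\{v_1\}|$ of~\eqref{eq-lemma-path}, not the $k=1$ factor $\eps_1/|\Gamma_G(u)|$ you claim. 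Unfolding yields $\prod_{k=2}^{\ell}\min\bigl(\eps_1/|\Gamma_G(v_k)\setminus\{v_i:i<k\}|,\,1/(\eps_2\chi+1)\bigr)$, shifted by one relative to the target. The prefactor $1/(\eps_1\eps_2)$ is precisely the cost of this index shift: take the $1/(\eps_2\chi+1)$ branch at $k=\ell$, multiply and divide by $\eps_1/\deg_G(u)$ to slide the product down to $k=1,\ldots,\ell-1$, and bound the ratio by $\frac{\deg_G(u)}{\eps_1}\cdot\frac{1}{\eps_2\chi+1}\le\frac{\chi}{\eps_1(\eps_2\chi+1)}\le\frac{1}{\eps_1\eps_2}$. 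It does not come from a base case.

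Separately, your description conflates two different mechanisms. ``Sampling $u_1,\ldots,u_m$ one at a time via optimal coupling'' is a greedy sequential coupling in which disagreements accumulate; after step $j-1$ the two conditioned instances still differ at all of $u_j,\ldots,u_m$, not ``at $u_j$ only''. What is needed (and what the paper does, via a vertex-splitting trick and the triangle inequality) is to telescope through intermediate list-colouring instances on $G_u$, adjacent ones differing only in the list of a single $w_j$; for each such pair one then couples $w_j$ first and recurses. Each sub-instance in that telescoping lives on $G_u$ with only $u$ removed, not on a graph with $u,u_1,\ldots,u_{j-1}$ removed as you write; this matters because the denominators in~\eqref{eq-lemma-path} exclude only the vertices along the SAW.
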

We remark that the denominator of each ratio in the RHS of~\eqref{eq-lemma-path} is positive because $v_{k+1} \in \Gamma_G(v_k) \setminus \{v_i \mid i < k\}$ for all $1\leq k \leq \ell-1$. 
\Cref{lemma-path} is proved in \Cref{section-proof-path} via a recursive coupling argument.

Now, we are ready to prove~\Cref{lemma-decay-2}.

\begin{proof}[Proof of \Cref{lemma-decay-2}]
%

Fix $(G=(V,E), \*L) \in \^L$.
For any vertex $u \in V$ and any integer $\ell \geq 1$, we use $P_{\ell}^u$ to denote the set of all SAWs from $u$ that contains $\ell$ vertices. Formally,
$P_{\ell}^{u} \triangleq \{P=(v_1,v_2,\ldots,v_\ell) \mid P \text{ is a SAW}, v_1= u \}.$
We claim that
\begin{align}
\label{eq-claim-path}
\forall u \in V,\ell \geq 1, \quad \sum_{\substack{\text{SAW }P =(v_1,v_2,\ldots,v_\ell) \in P_\ell^u }}\prod_{k=1}^{\ell - 1}\frac{\eps_1}{|\Gamma_G(v_k) \setminus \{v_i \mid i < k\} |} \leq \eps_1^{\ell-1}.
\end{align}
We now use~\eqref{eq-claim-path} to prove \Cref{lemma-decay-2}.
By~\Cref{lemma-path}, for any $u \in V$,
\begin{align*}
  \sum_{v \in V:v\ne u} R_{G,\*L}(u,v) 
  &\leq  \frac{1}{\eps_1\eps_2}\cdot\sum_{v \in V:v\ne u}\sum_{\substack{\text{SAW }P=(v_1,v_2,\ldots,v_{\ell})\\\text{$u=v_1$ and $v = v_{\ell}$}}} \prod_{k=1}^{\ell-1} \frac{\eps_1}{|\Gamma_G(v_k) \setminus \{v_i \mid i < k\} |}\\
  \tag{$\star$}
  &\le \frac{1}{\eps_1\eps_2}\cdot \sum_{\ell = 2}^{\infty}\sum_{\substack{\text{SAW }P =(v_1,v_2,\ldots,v_\ell) \in P_\ell^u }}\prod_{k=1}^{\ell - 1}\frac{\eps_1}{|\Gamma_G(v_k) \setminus \{v_i \mid i < k\} |}\\
  \tag{by~\eqref{eq-claim-path}}
  &\le \frac{1}{\eps_1\eps_2}\cdot \sum_{\ell=2}^{\infty}\eps_1^{\ell-1} = \frac{1}{(1-\eps_1)\eps_2},
\end{align*}
where $(\star)$ is due to the fact that $v\ne u$ implies all SAWs in consideration are of length at least $2$. This proves~\Cref{lemma-decay-2}.

We then prove~\eqref{eq-claim-path} by an induction on $\ell$.  
If $\ell = 1$, the LHS of~\eqref{eq-claim-path} is 1, thus~\eqref{eq-claim-path} holds trivially. Suppose~\eqref{eq-claim-path} holds for all $\ell \leq t$, we prove it for $\ell = t+1$. 
Let $P_{t}^{u \rightarrow v}$ denote the set of all SAWs from $u$ to $v$ that contains $t$ vertices. Formally,  
\begin{align*}
P_{t}^{u \rightarrow v} \triangleq \{P=v_1,v_2,\ldots,v_t \mid P \text{ is a SAW}, v_1= u, v_t =v \}.
\end{align*}   
Hence, $P_{t}^u = \bigcup_{v \in V}P_{t}^{u \rightarrow v}$. 
If $P \in P_{t+1}^u$ is a SAW such that $P=v_1,v_2,\ldots,v_t,v_{t+1}$, 
then the prefix $P' = v_1,v_2,\ldots,v_t$ is in the set $P_{t}^{u \rightarrow v_{t}}$ and $v_{t+1} \in \Gamma_G(v_{t}) \setminus \{v_i \mid i < t\}$,
and vice versa.
This implies that
\begin{align}
  \label{eq-saw}
  P_{t+1}^u = \bigcup_{v \in V}\left\{(P, w) \mid P =(v_1=u,v_2,\ldots,v_t=v) \in P_{t}^{u \rightarrow v}, w \in \Gamma_G(v) \setminus \{v_i \mid i < t\} \right\},	
\end{align}
where $(P,w)$ is the path obtained by appending $w$ at the end of the path $P$.  We have 
\begin{align*}
\forall u\in V,\qquad
&\sum_{\substack{\text{SAW }P =(v_1,v_2,\ldots,v_{t+1}) \in P_{t+1}^u }}\prod_{k=1}^{t}\frac{\eps_1}{|\Gamma_G(v_k) \setminus \{v_i \mid i < k\} |}\\
=&\, \sum_{v \in V}\sum_{\substack{\text{SAW }P = (v_1,v_2,\ldots,v_t) \in P_{t}^{u\rightarrow v}}}\sum_{w \in \Gamma_G(v) \setminus \{v_i \mid i < t\}} \prod_{k=1}^{t}\frac{\eps_1 }{|\Gamma_G(v_k) \setminus \{v_i \mid i < k\} |}\tag{by~\eqref{eq-saw}}\\
\leq&\,\eps_1\cdot \sum_{v\in V}\sum_{\substack{\text{SAW }P = (v_1,v_2,\ldots,v_t) \in P_{t}^{u\rightarrow v}}}\prod_{k=1}^{t-1}\frac{\eps_1}{|\Gamma_G(v_k) \setminus \{v_i \mid i < k\} |}\tag{$\star$}\\
=&\, \eps_1\cdot \sum_{\substack{\text{SAW }P =(v_1,v_2,\ldots,v_t)  \in P_t^u }}\prod_{k=1}^{t-1}\frac{\eps_1}{|\Gamma_G(v_k) \setminus \{v_i \mid i < k\} |}\\
\leq&\, \eps_1^{t}.\tag{by I.H.}
\end{align*}
The inequality $(\star)$ holds because $\Gamma_G(v) \setminus \{v_i \mid i < t\} = \Gamma_G(v_t) \setminus \{v_i \mid i < t\}$ (due to  $v_t = v$). 
We remark the $(\star)$ is an inequality rather than an equality because $\Gamma_G(v) \setminus \{v_i \mid i < t\}$ can be empty.
This proves~\eqref{eq-claim-path}.	
\end{proof}

\subsubsection{Influence bounds via recursion}
\label{section-proof-path}
Now we prove Lemma~\ref{lemma-path}. The proof technique is based on the ``recursive coupling''
 introduced by Goldberg, Martin and Paterson~\cite{GMP05}. 
 
We introduce some definitions.
Let $(G,\*L)$ be a list colouring instance, where $G=(V, E)$. 
Fix a vertex $u \in V$ and two colours $c_1,c_2 \in L(u)$. Let $w_1,w_2,\ldots,w_m$ denote the neighbours of $u$ in graph $G$, where $m =\deg_G(u)$. For any $0\leq k \leq m$, we define a list colouring instance $(G_u, \*L_{u, k}^{c_1,c_2})$: The graph $G_u = G[V \setminus \{u\}]$ is obtained by removing vertex $u$ from $G$. The colour list $\*L_{u,k}^{c_1,c_2}$ is obtained by removing the colour $c_1$ from the lists $L(w_\ell)$ for $\ell < k$, and removing the colour $c_2$ for the lists $L(w_\ell)$ for $\ell > k$. Formally, 
\begin{align}
\label{eq-def-new-list}
\forall v \in V \setminus \{u\}:\quad L_{u,k}^{c_1,c_2}(v) = \begin{cases}
L(v) \setminus \{c_1\} &\text{if } v \in \{w_1,w_2,\ldots,w_{k-1}\} \\
L(v) \setminus \{c_2\} &\text{if } v \in \{w_{k+1},w_{k+2},\ldots,w_m\} \\
L(v) &\text{if } v \not\in \Gamma_G(u) \text{ or } v = w_k.	
 \end{cases}
\end{align}

\begin{lemma}
\label{lemma-recursive}
Let $\mathscr{L}$ be a downward closed family of list colouring instances satisfying \Cref{cond:main} with parameters $\chi>0$, $0<\eps_1<1$ and $\eps_2>0$. For any $(G=(V, E),\*L) \in \mathscr{L}$, the following result holds.
Fix a pair of vertices $u, v \in V$.
Let $w_1,w_2,\ldots,w_{\deg_G(u)}$ denote the neighbours of $u$ in $G$.
Let $c_1,c_2 \in L(u)$ be the colours achieving the maximum in~\eqref{eq-def-recur} (breaking ties arbitrarily). 
Then, 
\begin{align*}
R_{G,\*L}(u, v) \leq \begin{cases}
1 &\text{if $u=v$};\\
0 &\text{if $u$ and $v$ are disconnected in $G$};\\
\sum_{k = 1}^{\deg_G(u)}\alpha_k\cdot R_{G_u, \*L_{u,k}^{c_1,c_2}}(w_k,v) &\text{otherwise}.
 \end{cases},
\end{align*}
where for all $1\leq k \leq \deg_G(u)$,
\begin{align*}
\alpha_k = \min \tp{ \frac{\eps_1}{\deg_{G_u}(w_k)}, \frac{1}{\eps_2\chi+1}}.
\end{align*}
\end{lemma}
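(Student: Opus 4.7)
The two base cases are immediate. When $u = v$, the bound $R_{G,\*L}(u,u) \le 1$ holds because any total variation distance is at most $1$. When $u$ and $v$ lie in different connected components of $G$, pinning $u$ to $c_1$ or $c_2$ affects only the component containing $u$, so the $v$-marginal is identical under both pinnings and $R_{G,\*L}(u,v) = 0$. For the remaining case, my plan is to follow the recursive coupling framework of Goldberg, Martin, and Paterson~\cite{GMP05}, with the per-neighbour contributions controlled by the marginal upper bounds from \Cref{cond:main}.

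The key observation is that $\mu_{(G,\*L)}^{u \gets c}$ depends on $c$ only through the constraints imposed on the neighbours $w_1,\ldots,w_m$ of $u$: each $w_i$ loses $c$ from its list. I would interpolate between pinning $u \gets c_2$ and $u \gets c_1$ by introducing, for each $0 \le k \le m$, the list $\*L^{(k)}$ in which $w_i$ has $c_1$ removed for $i \le k$ and $c_2$ removed for $i > k$. The triangle inequality then yields
\[
R_{G,\*L}(u,v) \;\le\; \sum_{k=1}^{m} \DTV{\mu_{v,(G_u, \*L^{(k-1)})}}{\mu_{v,(G_u, \*L^{(k)})}},
\]
and each consecutive pair differs only in the restriction on $w_k$'s list: both are derived from the frontier instance $\+G_k = (G_u, \*L_{u,k}^{c_1,c_2})$ by conditioning on $w_k \ne c_2$ and $w_k \ne c_1$, respectively.

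The $k$-th summand would then be bounded by a two-step coupling at $w_k$: first couple the two conditional marginals of $w_k$ in $\+G_k$, then, conditional on the two sampled colours, use the optimal coupling of the induced $v$-marginals $\mu_{v,\+G_k}^{w_k \gets \cdot}$. By the definition of $R_{\+G_k}(w_k,v)$, the total variation contribution is at most the probability of coupling failure at $w_k$ times $R_{\+G_k}(w_k,v)$. To bound this failure probability by $\alpha_k$, one invokes \Cref{cond:main} on $\+G_k$, which lies in $\^L$ by downward closure: since $\deg_{\+G_k}(w_k) = \deg_{G_u}(w_k) \le \chi - 1$, both \eqref{eq-marginal-up-2} and \eqref{eq-marginal-up-1} apply, yielding $\mu_{w_k,\+G_k}(c) \le \alpha_k$ for every colour $c$.

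The main technical hurdle will be getting the failure probability to be controlled by exactly $\alpha_k$ rather than a slightly larger quantity like $p_{c_1} + p_{c_2}$ or $p_{c_1}/(1 - p_{c_2})$ produced by the naive optimal coupling of the two conditional marginals. To close this gap I would use an asymmetric coupling: sample a common $W \sim \mu_{w_k,\+G_k}$ and, when $W = c_i$, only resample on the side for which $c_i$ is forbidden. This should charge only one of the two marginals $p_{c_1}, p_{c_2}$ per step, and any residual constant slack would be absorbed into the outer prefactor $1/(\eps_1 \eps_2)$ that appears in the downstream \Cref{lemma-path}.
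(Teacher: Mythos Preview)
Your overall plan—interpolating between the two pinnings via the intermediate lists $\*L^{(k)}$, applying the triangle inequality, and bounding each summand by a two-step coupling through $w_k$—is exactly the route the paper takes in Section~\ref{section-proof-recursive}; the split-vertex construction there is just a cosmetic repackaging of the same interpolation.

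The only real gap is at your ``main technical hurdle'', and it dissolves once you apply \Cref{cond:main} to the right instance. The optimal-coupling failure probability between the $w_k$-marginals in $(G_u,\*L^{(k-1)})$ and $(G_u,\*L^{(k)})$ equals, by a short counting argument (this is \Cref{lemma-single-disagree} in the paper),
\[
\max\bigl\{\mu_{w_k,(G_u,\*L^{(k-1)})}(c_1),\ \mu_{w_k,(G_u,\*L^{(k)})}(c_2)\bigr\}.
\]
The point you overlook is that $(G_u,\*L^{(k-1)})$ and $(G_u,\*L^{(k)})$ themselves lie in $\^L$—in each of them \emph{every} neighbour of $u$, including $w_k$, has had at most one colour removed, so they sit below $(G,\*L)$ under $\preceq$. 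Applying \Cref{cond:main} directly to these instances (rather than to $\+G_k$, where $w_k$'s list is unpruned) bounds the displayed maximum by $\alpha_k$ with no loss. Your asymmetric coupling is unnecessary, and the ``absorb residual slack into $1/(\eps_1\eps_2)$'' fallback would not prove the lemma as stated.
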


We remark that if $\deg_{G_u}(w_k) = 0$, then by convention we have $\frac{\eps_1}{\deg_{G_u}(w_k)} = \infty$ and thus  $\alpha_k = \frac{1}{\eps_2\chi+1}$.

Now we use \Cref{lemma-recursive} to derive \Cref{lemma-path} and defer the proof of \Cref{lemma-recursive} to \Cref{section-proof-recursive}.
\begin{proof}[Proof of \Cref{lemma-path}]
 Suppose $(G=(V,E),\*L)\in \^L$. It is clear that the instance $(G_u, \*L_{u,k}^{c_1,c_2})$ obtained from $(G,\*L)$ is also in $\^L$. Hence, we can use \Cref{lemma-recursive} recursively. This implies that for any  $(G,\*L)\in \^L$,  any $u,v \in V$,
\begin{align}
\label{eq-recur-proof}
R_{G,\*L}(u,v) \leq \sum_{\substack{\text{SAW $P=(v_1,v_2,\ldots,v_{\ell})$}\\\text{$u=v_1$ and $v = v_{\ell}$}}} \prod_{k=2}^{\ell} \min\tp{\frac{\eps_1}{\abs{\Gamma_G(v_k) \setminus \{v_i \mid i < k\} }}, \frac{1}{\eps_2\chi+1}}.
\end{align}

If $u$ and $v$ are disconnected, then $R_{G,\*L}(u,v)= 0$, and in this case,  the RHS of~\eqref{eq-lemma-path} is $0$ because there is no SAW from $u$ to $v$, thus ~\eqref{eq-lemma-path} holds. So in the following we assume that $u$ and $v$ are connected.

Our goal is to prove \eqref{eq-lemma-path}.
Comparing \eqref{eq-recur-proof} with \eqref{eq-lemma-path},
the main difference is the range of $k$ in the product.
We will trade the last factor of $\frac{1}{\eps_2\chi+1}$ for $k=\ell$ by a factor of $\frac{\eps_1}{\chi}$ for $k=1$,
with a loss of $\frac{1}{\eps_1\eps_2}$.

More precisely, by~\eqref{eq-recur-proof}, we have
\begin{align*}
R_{G,\*L}(u,v) &\overset{(\star)}{\leq} \frac{1}{\eps_2\chi+1}\sum_{\substack{\text{SAW $P=(v_1,v_2,\ldots,v_{\ell})$}\\\text{$u=v_1$ and $v = v_{\ell}$}}} \tp{\prod_{k=2}^{\ell-1} \frac{\eps_1}{\abs{\Gamma_G(v_k) \setminus \{v_i \mid i < k\} }} }\\
\tag{as $0<\deg_G(v_1) \leq \chi$}\quad&\leq \frac{\chi}{\eps_1(\eps_2\chi+1)}\sum_{\substack{\text{SAW $P=(v_1,v_2,\ldots,v_{\ell})$}\\\text{$u=v_1$ and $v = v_{\ell}$}}} \tp{\prod_{k=1}^{\ell-1} \frac{\eps_1}{\abs{\Gamma_G(v_k) \setminus \{v_i \mid i < k\} }} }\\
 &\leq \frac{1}{\eps_1\eps_2}\sum_{\substack{\text{SAW $P=(v_1,v_2,\ldots,v_{\ell})$}\\\text{$u=v_1$ and $v = v_{\ell}$}}} \tp{\prod_{k=1}^{\ell-1} \frac{\eps_1}{\abs{\Gamma_G(v_k) \setminus \{v_i \mid i < k\} }} },
\end{align*}
where inequality $(\star)$ holds due to~\eqref{eq-recur-proof} and $\ell \geq 2$ (since $u \neq v$).
Note that in the formula above, it holds that $\abs{\Gamma_G(v_k) \setminus \{v_i \mid i < k\}} > 0$ for all $1\leq k \leq \ell - 1$ because $v_{k+1} \in \Gamma_G(v_k) \setminus \{v_i \mid i < k\}$.
This proves~\Cref{lemma-path}.
\end{proof}

\subsubsection{Establish recursion via coupling}
\label{section-proof-recursive}
Next, we prove \Cref{lemma-recursive}. Fix an instance
$(G=(V, E),\*L) \in \^L$. Fix a vertex $u \in V$. 
Let $c_1,c_2 \in L(u)$ be the colours achieving  the maximum in~\eqref{eq-def-recur} (breaking ties arbitrarily).  Our goal is to bound 
\begin{align*}
R_{G,\*L}(u, v) = \max_{c_1,c_2 \in L(u)} \DTV{\mu_{v,(G,\*L)}^{u \gets c_1}}	{\mu_{v,(G,\*L)}^{u \gets c_2}}.
\end{align*}
If $u = v$, then $R_{G,\*L}(u, v) \leq 1 $. If $u$ and $v$ are disconnected in $G$, then $R_{G,\*L}(u, v) = 0 $. In the rest of this section, we assume $u \neq v$ and $u,v$ are connected in graph $G$.

Let $w_1,w_2,\ldots,w_m$ denote the neighbours of $u$ in $G$, where $m = \deg_G(u)$. We construct a graph $G'$ from $G$ as follows. 
We remove the vertex $u$ from the graph $G$, add $m$ new vertices $u_1,u_2,\ldots,u_m$, and then add $m$ new edges $\{u_i,w_i\}$. Finally, we define a set of colour lists $\*L' = \{L'(v)\mid v \in V \setminus \{u\} \cup \{u_1,u_2,\ldots,u_m\}\}$ as 
\begin{align*}
L'(v) \triangleq \begin{cases}
L(u) &\text{if } v \in \{u_1,u_2,\ldots,u_m\}\\
L(v) &\text{if } v \in V \setminus \{u\}.
 \end{cases}
\end{align*}
This defines a new list colouring instance $(G',\*L')$. Figure~\ref{figure-ex} gives a small example.

 \begin{figure}[htb]
 \includegraphics{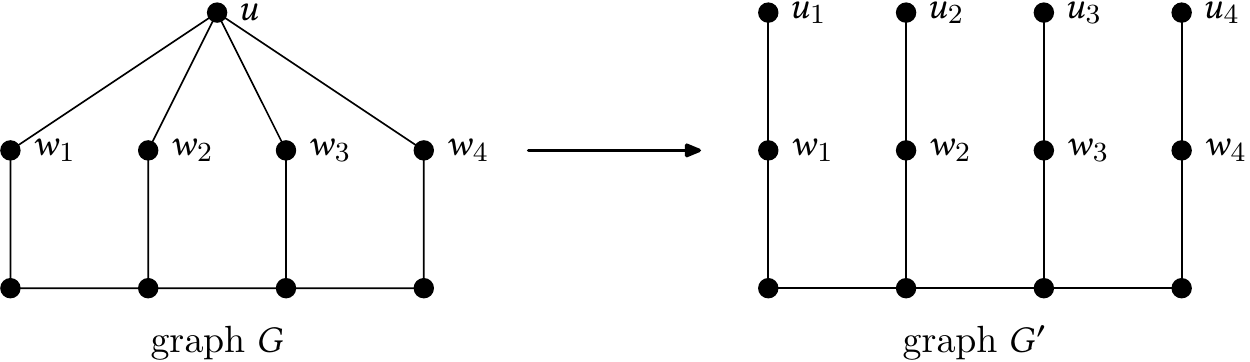}
 \vspace{-3pt}	
 \caption{Split vertex $u$ to modify the graph $G$ to $G'$} \label{figure-ex}
 \end{figure}

For each $0\leq k \leq m$,
we define a set of partial colourings $\sigma_k$ on $\{u_1,u_2,\ldots,u_m\}$ by
\begin{align*}
\sigma_k(u_j) \triangleq \begin{cases}
c_1      &\text{if } 1\leq j \leq k\\
c_2 &\text{if } k+1\leq j \leq m.	
 \end{cases}
\end{align*}
Then, it holds that $\mu_{v,(G,\*L)}^{u \gets c_1} = \mu_{v,(G',\*L')}^{\sigma_m}$ and $\mu_{v,(G,\*L)}^{u \gets c_2} = \mu_{v,(G',\*L')}^{\sigma_0}$.
By the triangle inequality, we have
\begin{align}
\label{eq-triangle}
\DTV{\mu_{v,(G,\*L)}^{u \gets c_1}}	{\mu_{v,(G,\*L)}^{u \gets c_2}} = \DTV{\mu_{v,(G',\*L')}^{\sigma_0}}	{\mu_{v,(G',\*L')}^{\sigma_m}} \leq \sum_{k=1}^{m}	\DTV{\mu_{v,(G',\*L')}^{\sigma_{k-1}}}	{\mu_{v,(G',\*L')}^{\sigma_{k}}}.
\end{align}
We now bound  $\DTV{\mu_{v,(G',\*L')}^{\sigma_{k-1}}}	{\mu_{v,(G',\*L')}^{\sigma_{k}}}$ for each $1\le k\le m$. Consider the following coupling procedure~$\+C$.
\begin{itemize}
\item sample $c,c' \in L'(w_k)=L(w_k)$ from the optimal coupling of $\mu_{w_k,(G',\*L')}^{\sigma_{k-1}}$ and $\mu_{w_k,(G',\*L')}^{\sigma_{k}}$.
\item sample $c_v,c'_v$ from the optimal coupling of $\mu_{v,(G',\*L')}^{\sigma_{k-1},w_k\gets c}$	 and $\mu_{v,(G',\*L')}^{\sigma_{k},w_k\gets c'}$.
\end{itemize}
By the definition of $\sigma_k$ and $\sigma_{k-1}$, they differ only at one vertex $u_k$. 
By the construction of the graph $G'$, $u_k$ is only adjacent to $w_k$. 
Then conditional on the colour of $w_k$, the colour of $u_k$ is independent from the colour of $v$. Hence,
in this coupling, we know that $c_v \neq c'_v$ can happen only if $c \neq c'$. 
Since $c,c'$ are sampled from the optimal coupling, we have $\Pr[\+C]{c \neq c'} = \DTV{\mu_{w_k,(G',\*L')}^{\sigma_{k-1}}}{\mu_{w_k,(G',\*L')}^{\sigma_{k}}}$.
Therefore,
\begin{align*}
\DTV{\mu_{v,(G',\*L')}^{\sigma_{k-1}}}	{\mu_{v,(G',\*L')}^{\sigma_{k}}} &\leq \Pr[\+C]{c_v \neq c_v'}\\
 &\leq \DTV{\mu_{w_k,(G',\*L')}^{\sigma_{k-1}}}{\mu_{w_k,(G',\*L')}^{\sigma_{k}}}\cdot \max_{c,c'\in L'(w_k)}\DTV{\mu_{v,(G',\*L')}^{\sigma_{k-1},w_k\gets c}}{\mu_{v,(G',\*L')}^{\sigma_{k},w_k\gets c'}}.
\end{align*}

Recall that the graph $G_u$ is obtained by removing $u$ from $G$, and the colour lists $\*L_{u,k}^{c_1,c_2}$ is defined in~\eqref{eq-def-new-list}. We can further derive
\begin{align}
  \DTV{\mu_{v,(G',\*L')}^{\sigma_{k-1}}}	{\mu_{v,(G',\*L')}^{\sigma_{k}}}
  &\leq 	\DTV{\mu_{w_k,(G',\*L')}^{\sigma_{k-1}}}{\mu_{w_k,(G',\*L')}^{\sigma_{k}}}\cdot\max_{c,c'\in L'(w_k)}\DTV{\mu_{v,(G',\*L')}^{\sigma_{k-1},w_k\gets c}}{\mu_{v,(G',\*L')}^{\sigma_{k},w_k\gets c'}}\notag\\
  \tag{$\star$}&=\DTV{\mu_{w_k,(G',\*L')}^{\sigma_{k-1}}}{\mu_{w_k,(G',\*L')}^{\sigma_{k}}}\cdot \max_{c,c'\in L_{u,k}^{c_1,c_2}(w_k)}\DTV{\mu_{v,(G_u,\*L_{u,k}^{c_1,c_2})}^{w_k\gets c}}{\mu_{v,(G_u,\*L_{u,k}^{c_1,c_2})}^{w_k\gets c'}}\notag\\
  \label{eq-couple-colouring}  &=\DTV{\mu_{w_k,(G',\*L')}^{\sigma_{k-1}}}{\mu_{w_k,(G',\*L')}^{\sigma_{k}}}\cdot R_{G_u, \*L_{u,k}^{c_1,c_2}}(w_k,v).
\end{align}
Equation $(\star)$ holds due to $L'(w_k)=L(w_k) = L_{u,k}^{c_1,c_2}(w_k)$ and the definitions of instances $(G',\*L')$ and $(G_u, \*L_{u,k}^{c_1,c_2})$.

Now, our task is reduced to bound $\DTV{\mu_{w_k,(G',\*L')}^{\sigma_{k-1}}}{\mu_{w_k,(G',\*L')}^{\sigma_{k}}}$. 
 Let $S$ denote $\set{u_1,\dots,u_m}$. We define two list colouring instances $(G^*_1,\*L^*_1) = \pin_{G',\*L'}(S,\sigma_{k-1})$ and $(G^*_2,\*L^*_2) = \pin_{G',\*L'}(S,\sigma_{k})$. Then we have $\mu_{w_k,(G',\*L')}^{\sigma_{k-1}} = \mu_{w_k,(G^*_1,\*L^*_1)}$ and  $\mu_{w_k,(G',\*L')}^{\sigma_{k}} = \mu_{w_k,(G^*_2,\*L^*_2)}$. Thus
 \begin{align}
\label{eq-dtv-reduction}
\DTV{\mu_{w_k,(G',\*L')}^{\sigma_{k-1}}}{\mu_{w_k,(G',\*L')}^{\sigma_{k}}} = \DTV{\mu_{w_k,(G^*_1,\*L^*_1)}}{\mu_{w_k,(G^*_2,\*L^*_2)}}.	
\end{align}
Besides, $G^*_1=G^*_2=G_u$ and both $(G^*_1,\*L^*_1), (G^*_2,\*L^*_2)$ can be obtained from $(G,\*L)$ by removing $u$ and removing certain colours from $L(u_k)$ for $k=1,\dots,m$. 
So we have that $(G^*_1,\*L^*_1),(G^*_2,\*L^*_2)\in\^L$ since $\^L$ is downward closed. 
Moreover, the two collections of colour lists $\*L_1^*=\set{L_1^*(v)\mid v\in V\setminus\set{u}}$ and $\*L_2^*=\set{L_2^*(v)\mid v\in V\setminus\set{u}}$ can only differ at $w_k$ where $L_1^*(w_k)=L(w_k)\setminus\set{c_2}$ and $L_2^*(w_k)=L(w_k)\setminus\set{c_1}$.

	We prove an auxiliary lemma.
\begin{lemma}
\label{lemma-single-disagree}
Let $\deg_G(w_k)$ denote the degree of $w_k$ in $G$.
\[
\DTV{\mu_{w_k,(G^*_1,\*L^*_1)}}{\mu_{w_k,(G^*_2,\*L^*_2)}}\le \min \tp{\frac{\eps_1}{\deg_{G}(w_k) - 1}, \frac{1}{\eps_2\chi+1}}.
\]
\end{lemma}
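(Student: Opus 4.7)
The plan is to compute $\DTV{\mu_{w_k,(G^*_1,\*L^*_1)}}{\mu_{w_k,(G^*_2,\*L^*_2)}}$ more or less explicitly, exploiting the fact that $(G^*_1,\*L^*_1)$ and $(G^*_2,\*L^*_2)$ share the same underlying graph $G_u$ and differ only at the colour list of the single vertex $w_k$: we have $L^*_1(w_k)=L(w_k)\setminus\set{c_2}$ and $L^*_2(w_k)=L(w_k)\setminus\set{c_1}$, and all other lists coincide.

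Write $\mu_i\defeq \mu_{w_k,(G^*_i,\*L^*_i)}$ for $i=1,2$, and let $Z_i(c)$ denote the number of proper list colourings of $(G^*_i,\*L^*_i)$ in which $w_k$ receives colour $c$. Because the two instances agree once $w_k$ is pinned to a colour lying in $L(w_k)\setminus\set{c_1,c_2}$, we have $Z_1(c)=Z_2(c)$ on that common set; call the sum of these common values $A$. Set $B_i \defeq Z_i(c_i)$, so that $\mu_i$ is $Z_i$ normalised by $A+B_i$. Defining $p_i\defeq \mu_i(c_i)=B_i/(A+B_i)$ and the probability distribution $q(c)\defeq Z_1(c)/A=Z_2(c)/A$ on $L(w_k)\setminus\set{c_1,c_2}$, we can rewrite
\[
\mu_1(c_1)=p_1,\quad \mu_2(c_2)=p_2,\quad \mu_i(c)=(1-p_i)\,q(c)\text{ for }c\in L(w_k)\setminus\set{c_1,c_2},
\]
with $\mu_1(c_2)=\mu_2(c_1)=0$. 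A short calculation using $\sum_c q(c)=1$ then gives
\[
\DTV{\mu_1}{\mu_2}=\tfrac12\bigl(p_1+p_2+|p_1-p_2|\bigr)=\max(p_1,p_2).
\]

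It therefore suffices to bound $\max(p_1,p_2)$ by $\min\bigl(\eps_1/(\deg_G(w_k)-1),\,1/(\eps_2\chi+1)\bigr)$. Since $\^L$ is downward closed and $(G^*_i,\*L^*_i)$ is obtained from $(G,\*L)$ by pinning $S=\set{u_1,\dots,u_m}$, both instances lie in $\^L$. Applying \eqref{eq-marginal-up-1} to the marginal $p_i = \mu_i(c_i)$ immediately yields $p_i\le 1/(\eps_2\chi+1)$. For the other bound, note that $G^*_i=G_u$ so $\deg_{G^*_i}(w_k)=\deg_G(w_k)-1$; since $\chi$ bounds the maximum degree in $\^L$, we have $\deg_{G^*_i}(w_k)\le \chi-1$, and \eqref{eq-marginal-up-2} yields $p_i\le \eps_1/\deg_{G^*_i}(w_k)=\eps_1/(\deg_G(w_k)-1)$ whenever the denominator is positive. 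Taking the minimum of these two bounds and combining with $\DTV{\mu_1}{\mu_2}=\max(p_1,p_2)$ gives the claim.

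The main, though mild, obstacle is the degenerate case $\deg_G(w_k)=1$, in which $w_k$ is isolated in $G_u$ and the first bound becomes vacuous; here we simply use the convention $\eps_1/0=\infty$ (as noted after \Cref{lemma-recursive}) so that the stated minimum reduces to $1/(\eps_2\chi+1)$, which still applies via \eqref{eq-marginal-up-1}. Beyond this, the argument is essentially bookkeeping: identifying the TV distance of two distributions differing only at a single extra symbol on each side with $\max(p_1,p_2)$ is what makes the cond:main marginal bounds directly applicable.
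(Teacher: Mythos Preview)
Your proof is correct and follows essentially the same approach as the paper's. Both arguments first show the exact identity
\[
\DTV{\mu_{w_k,(G^*_1,\*L^*_1)}}{\mu_{w_k,(G^*_2,\*L^*_2)}}=\max\set{\mu_{w_k,(G^*_1,\*L^*_1)}(c_1),\ \mu_{w_k,(G^*_2,\*L^*_2)}(c_2)}
\]
(the paper writes this as $\frac{\max\{n(c_1),n(c_2)\}}{N+\max\{n(c_1),n(c_2)\}}$, which is the same quantity), and then invoke \Cref{cond:main} together with $\deg_{G_u}(w_k)=\deg_G(w_k)-1\le\chi-1$ to obtain both upper bounds. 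Your decomposition $\mu_i(c)=(1-p_i)\,q(c)$ makes the TV computation slightly tidier, but the substance is identical.
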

\begin{proof}

It suffices to prove that 
\begin{align}
\label{eq-sr}
\DTV{\mu_{w_k,(G^*_1,\*L^*_1)}}{\mu_{w_k,(G^*_2,\*L^*_2)}} = \max\left\{ \mu_{w_k,(G^*_1,\*L^*_1)}(c_1),\mu_{w_k,(G^*_2,\*L^*_2)}(c_2) \right\}.
\end{align}
To see that \eqref{eq-sr} implies the lemma, note that $(G,\*L) \in \^L$, thus $\deg_{G_u}(w_k) = \deg_G(w_k) - 1 \leq \chi - 1$. 
Since $(G^*_1,\*L^*_1),(G^*_2,\*L^*_2)\in\^L$, \Cref{cond:main} gives
\[
\max\set{ \mu_{w_k,(G^*_1,\*L^*_1)}(c_1),\mu_{w_k,(G^*_2,\*L^*_2)}(c_2)}
\leq \min \tp{\frac{\eps_1}{\deg_{G_u}(w_k)}, \frac{1}{\eps_2\chi+1}} = \min \tp{\frac{\eps_1}{\deg_G(w_k)-1}, \frac{1}{\eps_2\chi+1}}.
\]

It remains to verify~\eqref{eq-sr}. 
Note that 
assuming \Cref{cond:main}, the distributions in~\eqref{eq-sr} are well-defined.

Let $(G_u,\*{\widetilde L})$ be a list colouring instance where $\*{\widetilde L}=\set{\widetilde L(v)\mid v\in V\setminus\set{u}}$ differs from $\*L^*_1$ and $\*L^*_2$ only on~$w_k$,
and $\widetilde L(w_k)=L(w_k)$. 
For each colour $c$, define $n(c)$ as the number of proper list colourings of  $(G_u,\*{\widetilde L})$ such that the colour of $w_k$ is $c$. 
Note that $n(c) = 0$ if $c \not\in \widetilde L(w_k)$. Define
\begin{align*}
  N \defeq \sum_{c \in \widetilde L(w_k) \setminus \{c_1,c_2\}}n(c).	
\end{align*}
We claim that
\begin{align}
\label{eq-dtv-max}
\DTV{\mu_{w_k,(G^*_1,\*L^*_1)}}{\mu_{w_k,(G^*_2,\*L^*_2)}} = \frac{\max\{n(c_1),n(c_2)\}}{N + \max\{n(c_1),n(c_2)\}}.
\end{align}
This implies \eqref{eq-sr} as the RHS of~\eqref{eq-dtv-max} equals $\max\left\{ \mu_{w_k,(G^*_1,\*L^*_1)}(c_1),\mu_{w_k,(G^*_2,\*L^*_2)}(c_2) \right\}$. 
To show \eqref{eq-dtv-max}, we may assume $n(c_1) \geq n(c_2)$ first.
Then,
\begin{align*}
\DTV{\mu_{w_k,(G^*_1,\*L^*_1)}}{\mu_{w_k,(G^*_2,\*L^*_2)}} &= \frac{1}{2}\tp{\sum_{c \in L(w_k) \setminus \{c_1,c_2\}}\left\vert\frac{n(c)}{N+n(c_1)} - \frac{n(c)}{N+n(c_2)} \right\vert	+ \frac{n(c_1)}{N+n(c_1)} + \frac{n(c_2)}{N+n(c_2)}}\\
\tag{as $n(c_1)\geq n(c_2)$}&=\frac{1}{2}\tp{ \frac{N(n(c_1)-n(c_2))}{(N+n(c_1))(N+n(c_2))}+ \frac{n(c_1)N+n(c_2)N+2n(c_1)n(c_2)}{(N+n(c_1))(N+n(c_2))} }\\
&=\frac{n(c_1)}{N+n(c_1)}.
\end{align*}
Similarly, $\DTV{\mu_{w_k,(G^*_1,\*L^*_1)}}{\mu_{w_k,(G^*_2,\*L^*_2)}} = \frac{n(c_2)}{N+n(c_2)}$ if $n(c_2) > n(c_1)$.	
This shows \eqref{eq-dtv-max}.
\end{proof}

Combining~\eqref{eq-dtv-reduction} and \Cref{lemma-single-disagree}, we have
\begin{align}
\DTV{\mu_{u_k,(G',\*L')}^{\sigma_{k-1}}}{\mu_{u_k,(G',\*L')}^{\sigma_{k}}} &= \DTV{\mu_{w_k,(G^*_1,\*L^*_1)}}{\mu_{w_k,(G^*_2,\*L^*_2)}}\notag\\
\tag{by \Cref{lemma-single-disagree}}&\leq \min \tp{\frac{\eps_1}{\deg_{G}(w_k) - 1}, \frac{1}{\eps_2\chi+1}}\notag\\
\label{eq-reduction-bound} &= 	 \min \tp{\frac{\eps_1}{\deg_{G_u}(w_k)}, \frac{1}{\eps_2\chi+1}},
\end{align}
where $G_u$ is the subgraph of $G$ induced by $V \setminus \{u\}$. By~\eqref{eq-triangle},~\eqref{eq-couple-colouring} and~\eqref{eq-reduction-bound}, we have
\begin{align*}
R_{G,\*L}(u, v)  \leq 
\sum_{k = 1}^{\deg_G(u)}\min \tp{\frac{\eps_1}{\deg_{G_u}(w_k)}, \frac{1}{\eps_2\chi+1}} \cdot R_{G_u, \*L_{u,k}^{c_1,c_2}}(w_k,v).
\end{align*}
This proves \Cref{lemma-recursive}.

\subsection{Verify \texorpdfstring{\Cref{cond:main}}{Condition 5.2} (Proof of \texorpdfstring{\Cref{thm-2D+1.76D}}{Theorem 5.1})}
\label{section-proof-colouring}

We first introduce the following lemma.
%

\begin{lemma}\label{lem-1.76D}
	Let $(G=(V,E),\*L)$ be an instance of list colouring where $G$ is triangle-free and $\*L=\set{L(v)\mid v\in V}$. Let $\Delta\geq 3$ be the maximum degree of $G$ and $\delta>0$ be a constant. Assume for every $v\in V$, it holds that
	\[
		\abs{L(v)}-\deg_G(v)\ge (\alpha^*+\delta-1)\Delta.
	\]
	Let $\^L$ be the downward closure of $(G,\*L)$. Then $\^L$ satisfies \Cref{cond:main} with parameters $\chi=\Delta$, $\eps_1=1-\frac{\delta}{\alpha^*+\delta}$ and $\eps_2=0.4+\delta$.
\end{lemma}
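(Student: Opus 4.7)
The plan is to verify the three bullets of \Cref{cond:main} for the family $\^L$, which is the downward closure of $(G,\*L)$. First I would establish that $\^L$ inherits the structural hypotheses placed on $(G,\*L)$. If $(G',\*L')\preceq (G,\*L)$, then one vertex $v$ has been removed and for each $u\in \Gamma_G(v)$ at most one colour has been dropped from $L(u)$. Hence the gap $|L(u)|-\deg_G(u)$ is non-decreasing along $\preceq$: deleting the neighbour $v$ contributes $-1$ to $\deg_G(u)$, compensating the at-most-one colour dropped from $L(u)$. Iterating, every $(G',\*L')\in\^L$ is triangle-free (subgraphs of triangle-free graphs are triangle-free), has maximum degree at most $\Delta$, and satisfies $|L'(u)|-\deg_{G'}(u)\ge (\alpha^*+\delta-1)\Delta>0$ for all $u$. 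This immediately gives $\chi=\Delta$, and the strict positivity of $|L'(u)|-\deg_{G'}(u)$ guarantees a proper list colouring by greedy.

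The substantive work is to verify the two marginal upper bounds, which I would prove simultaneously by induction on $|V(G')|$ for instances $(G',\*L')\in\^L$. The base case is an isolated vertex $v$, where $\mu_{v,(G',\*L')}(c)=1/|L'(v)|\le 1/((\alpha^*+\delta-1)\Delta)$; using $\alpha^*-1>0.4$ and $\Delta\ge 3$ one checks numerically that this is at most $1/(\eps_2\Delta+1)$, while the $\eps_1/\deg(v)$ bound is vacuous. For the inductive step at a vertex $v$ of degree $m$ in $G'$, \Cref{proposition-recursion} gives
\begin{align*}
\mu_{v,(G',\*L')}(c) = \frac{\prod_{i=1}^m(1-p_{i,c})}{\sum_{c'\in L'(v)}\prod_{i=1}^m(1-p_{i,c'})},
\end{align*}
where each $p_{i,c}=\mu_{v_i,(G'_v,\*L'_{i,c})}(c)$ is a marginal in a strictly smaller instance of $\^L$. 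By the inductive hypothesis, $p_{i,c}\le \min\bigl(\eps_1/\deg_{G'_v}(v_i),\;1/(\eps_2\Delta+1)\bigr)$, and I need to deduce the analogous bound for $\mu_{v,(G',\*L')}(c)$ itself.

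Triangle-freeness is the key structural ingredient here: since $v_1,\dots,v_m$ are pairwise non-adjacent in $G'$, the instances $\*L'_{i,c}$ and $\*L'_{i,c'}$ differ only by colour removals at vertices $v_j$ that are non-neighbours of $v_i$, which lets one compare $p_{i,c}$ with $p_{i,c'}$ across colours in a controlled way. The equation $\alpha^*=\e^{1/\alpha^*}$ enters as the critical threshold of the product-form exponential inequality needed to lower bound the denominator while upper bounding the numerator: it is exactly the value that makes the fixed-point system (with its critical marginal $1/\alpha^*$) balance. The constants $\eps_1=\alpha^*/(\alpha^*+\delta)$ and $\eps_2=0.4+\delta$ are then obtained by plugging the slack $\delta\Delta$ of the list-size assumption into this worst case, leaving just enough room to close the induction with $\eps_1<1$ and $\eps_2>0$.

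The principal obstacle is the simultaneous bookkeeping of the two marginal bounds during the induction. At each recursive step the degree $\deg_{G'_v}(v_i)$ of a neighbour of $v$ may range anywhere from $0$ to $\Delta$, so for different $i$ one must invoke different branches of the minimum—$\eps_1/\deg(v_i)$ when $v_i$ is far from saturation, $1/(\eps_2\Delta+1)$ when $v_i$ has degree close to $\Delta$—and then aggregate the resulting bounds into a single inequality on $\mu_{v,(G',\*L')}(c)$. Combined with the delicate exponential inequality keyed to $\alpha^*=\e^{1/\alpha^*}$, this case analysis is where the sharp numerical constants $0.4$ and $\delta/(\alpha^*+\delta)$ must be verified by careful computation.
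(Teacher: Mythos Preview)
Your structural preliminaries (preservation of the gap and triangle-freeness under $\preceq$, greedy existence) match the paper. But from there the paper takes a much more direct route than the induction you propose, and in doing so avoids exactly the ``principal obstacle'' you flag.

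For the $\eps_2$ bound~\eqref{eq-marginal-up-1}, the paper uses no recursion at all: conditional on \emph{any} colouring of $\Gamma_{G'}(v)$, the vertex $v$ retains at least $(\alpha^*+\delta-1)\Delta$ available colours, so its conditional marginal is at most $1/((\alpha^*+\delta-1)\Delta)$, and $\mu_{v,(G',\*L')}(c)$ is a convex combination of these. A numerical check using $\Delta\ge 3$ then gives $1/((\alpha^*+\delta-1)\Delta)\le 1/((0.4+\delta)\Delta+1)$. There is no induction to close.

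For the $\eps_1$ bound~\eqref{eq-marginal-up-2}, the paper again does not induct on $|V|$. Instead it conditions on an arbitrary colouring of $V\setminus\Gamma^+_{G'}(v)$. Because $G'$ is triangle-free, this reduces the instance to a \emph{star} centred at $v$: the neighbours $v_1,\dots,v_m$ become isolated after removing $v$, with explicit list sizes $s_i$. On the star the marginal has a closed form,
\[
\mu_{v}(c)=\frac{\prod_{i=1}^m\bigl(1-\tfrac{\delta_{i,c}}{s_i}\bigr)}{\sum_{b\in L(v)}\prod_{i=1}^m\bigl(1-\tfrac{\delta_{i,b}}{s_i}\bigr)},
\]
and one bounds the denominator below by AM--GM, the monotonicity of $(1-1/x)^x$, and the list-size lower bound $s_i\ge(\alpha^*+\delta-1)\Delta$. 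The resulting one-variable function $f(m)$ is shown (with computer assistance) to be decreasing in $m\ge 3$, and its limit $(\alpha^*+\delta)\e^{-1/(\alpha^*+\delta)}$ gives the constant $\eps_1=\alpha^*/(\alpha^*+\delta)$ via the defining relation $\e^{1/\alpha^*}=\alpha^*$. Small cases $m\le 2$ are handled directly from the $\eps_2$ bound.

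Your inductive plan via \Cref{proposition-recursion} is not obviously wrong, but you have not actually closed the induction, and your use of triangle-freeness is shakier than you suggest: the instances $\*L'_{i,c}$ and $\*L'_{i,c'}$ differ at the vertices $v_1,\dots,v_{i-1}$, which are non-adjacent to $v_i$ but can still lie in its connected component in $G'_v$, so the marginals $p_{i,c}$ genuinely depend on $c$ in a way that pairwise non-adjacency alone does not neutralise. The paper sidesteps this entirely by conditioning on everything outside the closed neighbourhood, which makes the $v_i$ truly independent and reduces the whole computation to a single explicit formula on a star.
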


It is clear that \Cref{thm-2D+1.76D} is a consequence of \Cref{lem-1.76D} and \Cref{thm-family-main}.

\begin{proof}[Proof of \Cref{thm-2D+1.76D}]
Suppose the instance $(G,\*L)$ satisfies the condition in~\eqref{eq-proof-1.7D}. 
By \Cref{lem-1.76D}, the downward closure $\^L$ of $(G,\*L)$ satisfies \Cref{cond:main} with parameters $\chi=\Delta$, $\eps_1=1-\frac{\delta}{\alpha^*+\delta}$ and $\eps_2=0.4+\delta$.
By \Cref{thm-family-main}, we have
\begin{align*}
	\tmix(\eps)\le  \tp{9\e^{\frac{2}{\eps_2}}}^{\tp{1+\frac{1}{(1-\eps_1)\eps_2}}} n^{1+\frac{2}{(1-\eps_1)\eps_2}}	  \cdot\log\tp{\frac{M}{\eps}}.
\end{align*}
Note that $\frac{2}{\epsilon_2} = \frac{2}{0.4+\delta} \leq 5$ and  $\frac{1}{(1-\eps_1)\eps_2} = \frac{\alpha^* + \delta}{\delta(0.4+\delta)} \leq \frac{1}{\delta} \cdot\frac{\alpha^*}{0.4} \leq \frac{9}{2\delta}$. Thus, we have
\begin{align*}
	\tmix(\eps)&\le \tp{9\e^5}^{\tp{1+\frac{9}{2\delta}}} n^{1+\frac{9}{\delta}}	  \cdot\log\tp{\frac{M}{\eps}} \leq  \tp{9\e^5 n}^{1 + 9/\delta}\cdot\log\tp{\frac{M}{\eps}}.	\qedhere
\end{align*}
\end{proof}

%
%

\subsubsection{Proof of \texorpdfstring{\Cref{lem-1.76D}}{}}
We first remark that $\chi \geq 3$.
We then claim that every instance $(G=(V,E),\*L=\set{L(v)\mid v\in V})\in\^L$ satisfies 
\begin{equation}\label{eq-2D}
  \forall v\in V: \abs{L(v)}-\deg_G(v)\ge (\alpha^*+\delta-1)\chi
\end{equation}
and $G$ is triangle-free. 
To see this, we only need to notice that \eqref{eq-2D} is preserved by the $\preceq$ relation, namely if $(G',\*L')$ satisfies \eqref{eq-2D} and $(G,\*L)\preceq (G',\*L')$, then $(G,\*L)$ satisfies \eqref{eq-2D} as well. This holds since by the definition of $\preceq$, $(G,\*L)$ can be obtained from $(G',\*L')$ by removing some vertex $v$ and removing at most one colour from the colour lists of $v$'s neighbours. Therefore, once the size of the colour list of certain vertex $u$ decreases by one, its degree must decrease by one as well. So the LHS of \eqref{eq-2D} never decreases. Besides, it is easy to see all graphs in $\^L$ are triangle-free.

By~\eqref{eq-2D} and $\chi \geq 3$, for any $(G,\*L)\in \^L$,  $\abs{L(v)} \geq \deg_G(v) + 3(\alpha^*+ \delta - 1) \geq \deg_G(v) + 2$ for any vertex $v$. One can construct a proper list colouring using a simple greedy procedure. Hence, a proper list colouring exists for any instance in $\^L$.

	We fix a list colouring instance $(G=(V, E),\*L)\in\^L$. We first prove 
	\begin{align}
	\label{eq-up-1}
		\mu_{v,(G,\*L)}(c) \leq \frac{1}{(\alpha^*+\delta-1)\chi} \overset{(\star)}{\leq} \frac{1}{(0.4+\delta)\chi + 1},
	\end{align}
	where $(\star)$ holds due to $\chi \geq 3$,
	so we can pick $\eps_2=0.4+\delta$. Conditional on any colouring of $\Gamma_G(v)$, vertex $v$ has at least $(\alpha^*+\delta-1)\chi$ available colours and therefore the marginal probability is at most $\frac{1}{(\alpha^*+\delta-1)\chi}\le \frac{1}{(0.4+\delta)\chi + 1}$. Since $\mu_{v,(G,\*L)}(c)$ is a convex combination of these conditional probabilities, the upper bound follows.

Next, fix a vertex $v \in V$ with $\deg_G(v) \leq \chi - 1$. We prove  $\mu_{v,(G,\*L)}(c) \leq \frac{1-\delta/(\alpha^*+\delta)}{\deg_G(v)}$, so we can pick $\eps_1=1-\frac{\delta}{\alpha^*+\delta}$. Let $\Gamma^+_G(v) = \Gamma_G(v) \cup \{v\}$ denote inclusive neighbourhood of $v$. We show that, conditional on any colouring $\sigma$ of $V \setminus \Gamma^+_G(v)$, the marginal probability $\mu_{v,(G,\*L)}^\sigma(c) \leq \frac{1-\delta/(\alpha^*+\delta)}{\deg_G(v)}$. Define a new instance $(\widetilde{G},\widetilde{\*L})=\pin_{G,\*L}(V \setminus \Gamma^+_G(v), \sigma)$, where $\pin_{\cdot}(\cdot)$ is in \Cref{definition-pin}. Since $\^L$ is downward closed, $(\widetilde{G},\widetilde{\*L}) \in \^L$. Let $m = \deg_G(v) = \deg_{\widetilde{G}}(v)$.
It suffices to prove that 
\begin{align}
\label{eq-proof-target}
\forall c \in L(v) = \widetilde{L}(v), \quad \mu_{v,(\widetilde{G},\widetilde{\*L})}(c) \leq \frac{1-\delta/(\alpha^*+\delta)}{m}.	
\end{align}

Note that if  $m = 0$,~\eqref{eq-proof-target} holds trivially. If $ m = 1$ or $m = 2$, by $(\widetilde{G},\widetilde{\*L}) \in \^L$, $\chi \geq 3$ and ~\eqref{eq-up-1}, we have
\begin{align*}
\mu_{v,(\widetilde{G},\widetilde{\*L})}(c) \leq \frac{1}{(\alpha^* + \delta - 1)\chi} \leq \frac{1}{3(\alpha^*+\delta-1)}\overset{(\star)}{\leq} \frac{1-\delta/(\alpha^*+\delta)}{2}\leq \frac{1-\delta/(\alpha^*+\delta)}{m},
\end{align*}
where $(\star)$ holds because $\frac{1}{3(\alpha^*+\delta-1)}  \leq \frac{\alpha^*}{2(\alpha^*+\delta)}$ for all $\delta > 0$.

Now, we assume $m \geq 3$.
Let $v_1,v_2,\ldots,v_m$ denote the neighbours of $v$ in $\widetilde{G}$. For each $1\leq i \leq m$, define $s_i = |\widetilde{L}(v_i)|$, and for any colour $b$, let $\delta_{i,b} =1 $ if $b \in \widetilde{L}(v_i)$; and $\delta_{i,b} =0 $ if $b \not\in \widetilde{L}(v_i)$. Since $\widetilde{G}$ is a triangle-free graph, we have for any $\forall c \in L(v) = \widetilde{L}(v)$,
\begin{align}
\label{eq-upper-proof}
\mu_{v,(\widetilde{G},\widetilde{\*L})}(c) = \frac{\prod_{i=1}^m(s_i-\delta_{i,c})}{\sum_{b \in L(v)}\prod_{i=1}^m(s_i-\delta_{i,b})} = \frac{\prod_{i=1}^m\tp{1-\frac{\delta_{i,c}}{s_i}}}{\sum_{b \in L(v)}\prod_{i=1}^m\tp{1-\frac{\delta_{i,b}}{s_i}}} \leq  \frac{1}{\sum_{b \in L(v)}\prod_{i=1}^m\tp{1-\frac{\delta_{i,b}}{s_i}}}.
\end{align}
We give a lower bound for denominator. Let $s_v = \abs{L(v)}$. By the AM-GM inequality, we have
\begin{align}
\label{eq-lower-proof}
\sum_{b \in L(v)}\prod_{i=1}^m\tp{1-\frac{\delta_{i,b}}{s_i}} \geq s_v \tp{ \prod_{b \in L(v)}\prod_{i=1}^m\tp{1-\frac{\delta_{i,b}}{s_i}} }^{1/s_v}
= s_v \tp{ \prod_{i=1}^m \prod_{b \in L(v) \cap \widetilde{L}(v_i)}\tp{1-\frac{1}{s_i}}}^{1/s_v},
\end{align}
where the last equality holds because $\delta_{i,b} = 1$ if and only if $b \in \widetilde{L}(v_i)$. 
Note that $(L(v) \cap \widetilde{L}(v_i)) \subseteq \widetilde{L}(v_i)$ and $s_i = |\widetilde{L}(v_i)|$, which implies $|L(v) \cap \widetilde{L}(v_i)| \leq s_i$. We have that
\begin{align*}
\prod_{i=1}^m \prod_{b \in L(v) \cap \widetilde{L}(v_i)}\tp{1-\frac{1}{s_i}} \geq 	 \prod_{i=1}^m \tp{1-\frac{1}{s_i}}^{s_i}.
\end{align*}
Let 
$p = (\alpha^* + \delta - 1)m + 0.5$.	 
Since $(\widetilde{G},\widetilde{\*L}) \in \^L$ and $m = \deg_G(v) \leq \chi - 1$, for all $1\leq i \leq m$,
\begin{align*}
s_i \geq (\alpha^* + \delta - 1)\chi \geq (\alpha^* + \delta - 1)(m+1) \geq  (\alpha^* + \delta - 1)m + 0.5 = p.	
\end{align*}
Note that $p > 1$ because $m \geq 3$.
Also note that $f(x) = (1-1/x)^x$ is increasing when $x \geq 1$. Then we have
$\prod_{i=1}^m \tp{1-\frac{1}{s_i}}^{s_i} \geq 	\tp{1-\frac{1}{p}}^{mp}$.
By~\eqref{eq-lower-proof}, we have
\begin{align*}
\sum_{b \in L(v)}\prod_{i=1}^m\tp{1-\frac{\delta_{i,b}}{s_i}}  \geq s_v \tp{1-\frac{1}{p}}^{\frac{mp}{s_v}}.
\end{align*}

Since $(\widetilde{G},\widetilde{\*L}) \in \^L$ and $m = \deg_G(v) \leq \chi - 1$, $s_v \geq m + (\alpha^*+\delta-1)\chi \geq m + (\alpha^*+\delta-1)(m+1) \geq m+p$ . By the fact that $p>1$, we have $\frac{1}{s_v} \leq \frac{1}{m+p}$ and $\tp{1-\frac{1}{p}}^{-\frac{mp}{s_v}} \leq \tp{1-\frac{1}{p}}^{-\frac{mp}{m+p}}$. By~\eqref{eq-upper-proof}, we have
\begin{align}
\label{eq-upper-proof-2}
\mu_{v,(\widetilde{G},\widetilde{\*L})}(c) \leq \frac{1}{s_v}\tp{1-\frac{1}{p}}^{-\frac{mp}{s_v}}
\le \frac{1}{m+p}\tp{1-\frac{1}{p}}^{-\frac{mp}{m+p}}.
\end{align}

To proof~\eqref{eq-proof-target}, we define the following function
\begin{equation*}
f(m) \triangleq \frac{m+p}{m}\tp{1-\frac{1}{p}}^{\frac{mp}{m+p}} = \frac{(\alpha^*+\delta)m + 0.5}{m}\tp{1-\frac{1}{(\alpha^* + \delta - 1)m + 0.5}}^{\frac{m((\alpha^* + \delta - 1)m + 0.5)}{(\alpha^* + \delta )m + 0.5}}	 
\end{equation*}
By definition, $\mu_{v,(\widetilde{G},\widetilde{\*L})}(c) \leq \frac{1}{mf(m)}$.
In \Cref{lemma-decrasing}, we show that $f(m)$ is a decreasing function for $m \geq 3$. Thus, we have
\begin{align*}
f(m) \geq \lim_{x \rightarrow \infty} f(x)	= (\alpha^*+\delta)\exp\tp{-\frac{1}{\alpha^*+\delta}}.
\end{align*}
Thus, we have
\begin{align*}
\mu_{v,(\widetilde{G},\widetilde{\*L})}(c) \leq \frac{1}{mf(m)} \leq \frac{1}{m}\cdot	\frac{1}{\alpha^*+\delta}\exp\tp{\frac{1}{\alpha^*+\delta}} \overset{(\star)}{\le} \frac{1}{m}\cdot\frac{\alpha^*}{\alpha^*+\delta} = \frac{1-\delta/(\alpha^*+\delta)}{m}.
\end{align*}
where $(\star)$ is due to the fact that $\exp\tp{\frac{1}{\alpha^*}}=\alpha^*$. This proves~\eqref{eq-proof-target} for all $m \geq 3$.

%

\subsection{Tightness of the marginal upper bound}
Our whole analysis relies on the upper bound of the marginal probabilities, 
which states that the marginal probability of $v$ taking a specific colour is less than the reciprocal value of $v$'s degree.
Similar properties were also used in analysing strong spatial mixing~\cite{GMP05,GKM15} or zero-free regions~\cite{liu2019deterministic} for graph colourings.
A natural question is whether the upper bound can be further improved.

%

In this section, we show that the bound $\abs{L(v)} > \alpha^* \Delta \pm O(1) $ is tight for our technique based on the upper bound of the marginal probabilities.
This means that the bound in \Cref{thm-2D+1.76D} is the best we can achieve using current techniques.
However, we also remark that the construction below only applies to list colouring instances.

%

We show that there exists a list colouring instance $(G,\*L)$ with triangle-free $G$ such that if $\abs{L(v)} <\alpha^*\Delta - 3$ for some vertex $v$, then $(G,\*L)$ does not have the desired marginal upper bound.
Consider a star $G=(V,E)$ with $(\Delta+1)$ vertices, where $V=\{v,v_1,v_2,\ldots,v_{\Delta}\}$ and $E = \{\{v,v_i\}\mid 1\leq i \leq \Delta\}$. Define  colour lists $\*L $ by $L(v) = [q] = \{0,1,\ldots,q-1\}$ and $L(v_i) = [q-1]=\{0,1,\ldots,q-2\}$ for all $1\leq i \leq \Delta$. 
\begin{proposition}
If $q < \alpha^*\Delta - 3$, then $\mu_{v,(G,\*L)}(c) > \frac{1}{\deg_{G}(v)} = \frac{1}{\Delta}$, where $c$ is the colour $q-1$.
\end{proposition}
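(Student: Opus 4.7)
The plan is to compute $\mu_{v,(G,\*L)}(q-1)$ in closed form using the star structure of $G$, reduce the desired inequality to an elementary one-variable estimate, and verify that estimate using the defining identity $\alpha^* = \exp(1/\alpha^*)$ (equivalently $\log\alpha^* = 1/\alpha^*$).

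First I would exploit that conditional on the colour of the centre $v$, the colours of the leaves $v_1,\ldots,v_\Delta$ are mutually independent. If $v$ takes any colour $c\in\{0,1,\ldots,q-2\}$, each leaf has $|L(v_i)\setminus\{c\}|=q-2$ admissible colours, contributing $(q-2)^\Delta$ to the partition function; if $v$ takes colour $q-1$, no leaf constraint is active because $q-1\notin L(v_i)$, so each leaf has $q-1$ choices, contributing $(q-1)^\Delta$. Summing over the $q-1$ colours in $\{0,\ldots,q-2\}$ gives
\begin{align*}
\mu_{v,(G,\*L)}(q-1) \;=\; \frac{(q-1)^\Delta}{(q-1)(q-2)^\Delta + (q-1)^\Delta}.
\end{align*}
The target inequality $\mu_{v,(G,\*L)}(q-1) > 1/\Delta$ is then algebraically equivalent to $(\Delta-1)(q-1)^\Delta > (q-1)(q-2)^\Delta$, i.e.\ after taking logarithms,
\begin{align*}
\Delta\log\!\left(1+\tfrac{1}{q-2}\right) \;>\; \log\!\left(\tfrac{q-1}{\Delta-1}\right).
\end{align*}

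Under the hypothesis $q<\alpha^*\Delta-3$, I would bound the left-hand side from below via the standard estimate $\log(1+x) > x - x^2/2$ for $x>0$, and bound the right-hand side from above by substituting $q-1<\alpha^*\Delta-4$ and using $\log(\alpha^*-y)\le\log\alpha^* - y/\alpha^*$. The key observation is that at the critical value $q=\alpha^*\Delta$, both sides agree to leading order at $1/\alpha^*$: indeed $\Delta/(q-2)\to 1/\alpha^*$ and $\log((q-1)/(\Delta-1))\to\log\alpha^*$, which coincide by the defining equation of $\alpha^*$. The additive slack of $-3$ in the hypothesis produces a $\Theta(1/\Delta)$ gap at first order between the two sides (with the LHS strictly above), and this gap dominates the quadratic correction $\Delta/(2(q-2)^2)=O(1/\Delta)$ coming from the $\log(1+x)$ expansion.

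The main obstacle is precisely the tightness at $q\approx\alpha^*\Delta$: because the leading constants match via $\alpha^*\log\alpha^*=1$, the proof cannot proceed by any crude global bound and must instead carry out a careful first-order asymptotic comparison around the critical threshold, tracking the $O(1/\Delta)$ correction produced by the hypothesis $q<\alpha^*\Delta-3$ and checking that it beats the $O(1/\Delta)$ linearisation error from $\log(1+1/(q-2))$. Once this accounting is done cleanly the conclusion $\mu_{v,(G,\*L)}(q-1)>1/\Delta$ follows, exhibiting a triangle-free instance on which the marginal upper bound used throughout Section~5 fails as soon as one goes below the $\alpha^*\Delta$ threshold.
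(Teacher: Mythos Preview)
Your computation of the marginal and the algebraic reduction to $\Delta\log(1+\tfrac{1}{q-2})>\log\tfrac{q-1}{\Delta-1}$ are both correct, and your first-order comparison around the threshold $q\approx\alpha^*\Delta$ (where both sides meet at $1/\alpha^*=\log\alpha^*$) is the right picture. The approach is sound.

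The paper takes a shorter path to the same destination. Rather than taking logs and Taylor-expanding $\log(1+x)$, it rewrites the marginal as
\[
\mu_{v,(G,\*L)}(c)=\frac{1}{(q-2)\left(1-\tfrac{1}{q-1}\right)^{\Delta-1}+1}
\]
and applies the one-step bound $(1-x)^n\le e^{-nx}$ to get $\mu_{v,(G,\*L)}(c)\ge \bigl((q-2)e^{-(\Delta-1)/(q-1)}+1\bigr)^{-1}$; the remaining verification that $(q-2)e^{-(\Delta-1)/(q-1)}+1<\Delta$ when $q<\alpha^*\Delta-3$ is left to the reader. Your route trades this single exponential bound for a two-term Taylor expansion plus a concavity bound on the RHS, which forces you to track competing $O(1/\Delta)$ terms explicitly; the paper's route hides that bookkeeping inside the final ``one can verify'' step. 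Both arguments ultimately rest on the same $\alpha^*\log\alpha^*=1$ identity, and neither spells out the final constant comparison in full; the paper's version simply arrives at that point with less machinery.
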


\begin{proof}
We can calculate the probability that $v$ takes the colour $c = q-1$ as follows
\begin{align*}
\mu_{v,(G_1,\*L_1)}(c)  = \frac{(q-1)^\Delta}{(q-1)(q-2)^\Delta + (q-1)^\Delta } = \frac{1}{(q-2)\left(1-\frac{1}{q-1}\right)^{(\Delta-1)} + 1 } \geq \frac{1}{(q-2)\exp\tp{-\frac{\Delta-1}{q-1}}+1}.
\end{align*}
If $q < \alpha^*\Delta - 3$, we can verify that ${(q-2)\exp\tp{-\frac{\Delta-1}{q-1}}+1} < \Delta$. This proves the proposition.
\end{proof}

Note that the graph $G$ is a tree, which means that no matter how large we assume the girth of the graph to be, such barrier of marginal upper bounds still exists.
%

%

Indeed, the upper bound~\eqref{eq-marginal-up-2} in \Cref{cond:main} is only required for vertices $v$ with $\deg_G(v) \leq \chi - 1$, but a simple modification of the instance above can provide a counter example to \Cref{cond:main}. 
Similar barriers of the marginal upper bound also appear in~\cite{GMP05,GKM15,liu2019deterministic}.
Finally, we remark that the barrier discussed in this section only applies for  our current technique, which is  solely based on marginal upper bounds. It may still be possible to improve the dependence between the number of colours and the degree of the graph by exploiting spectral independence (\Cref{def:spectral-independence}) through other means.

\bibliographystyle{alpha}
\bibliography{refs.bib}

\appendix
\section{Computer assisted proof}
We give a computer-assisted proof for the following technical lemma used in the analysis for colouring. 
\begin{lemma}
\label{lemma-decrasing}
Let $\alpha^* \approx 1.763\ldots$ be the solution of $\alpha^* = \exp\tp{\frac{1}{\alpha^*}}$ and $\delta > 0$ a real number. Define
\begin{align*}
f(x) = \frac{(\alpha^*+\delta)x + 0.5}{x}\tp{1-\frac{1}{(\alpha^* + \delta - 1)x + 0.5}}^{\frac{x((\alpha^* + \delta - 1)x + 0.5)}{(\alpha^* + \delta )x + 0.5}}	.
\end{align*}
The function $f$ is decreasing for $x \in [3, \infty)$.
\end{lemma}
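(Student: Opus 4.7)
The plan is to verify this by direct differentiation and reduction to a finite check amenable to a computer algebra system, since the lemma is explicitly labelled as a computer-assisted proof. Set $a = \alpha^* + \delta$ and $b = a - 1$, and write
\begin{align*}
\log f(x) = \log(ax + 1/2) - \log x + H(x)\cdot\log\!\left(1 - \frac{1}{bx + 1/2}\right), \quad H(x) \triangleq \frac{x(bx + 1/2)}{ax + 1/2}.
\end{align*}
It suffices to show $(\log f)'(x) \le 0$ for all $x \ge 3$ and all $\delta > 0$. Differentiating yields a closed-form expression involving only rational functions in $x, a, b$ together with the single transcendental $\log(1 - 1/(bx+1/2))$.

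To certify negativity uniformly on the two-parameter region $\{x \ge 3\} \times \{\delta > 0\}$, I would first compactify by the substitutions $u = 1/x \in (0,1/3]$ and $s = 1/a \in (0, 1/\alpha^*)$. In this range $bx + 1/2 > (\alpha^* - 1)\cdot 3 + 1/2 > 2.78$, so the remaining logarithm admits the rapidly convergent alternating series $-\sum_{k\ge 1}(bx+1/2)^{-k}/k$. Truncating after a constant number of terms produces rigorous rational upper and lower bounds for $(\log f)'(x)$ as functions of $u$ and $s$, reducing the problem to a polynomial inequality in two variables on a bounded rectangle; this can be discharged by interval arithmetic on a partition of the rectangle, or by a single invocation of Mathematica's \textsf{Reduce} working over a sufficiently good rational approximation to $\alpha^*$.

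The main obstacle is the simultaneous dependence on the two unbounded parameters $x$ and $\delta$. A cleaner alternative that sidesteps the series expansion is to show, by a further differentiation, that $(\log f)'(x)$ is monotone in $\delta$ for each fixed $x \ge 3$; this reduces the claim to the two boundary cases $\delta \to 0^+$ and $\delta \to \infty$, each of which is a one-variable inequality in $x$ for which the identity $\log \alpha^* = 1/\alpha^*$ is the essential algebraic input: it forces $\lim_{x\to\infty} f(x) = a\,e^{-1/a}$ to equal exactly $1$ when $\delta = 0$ and to be strictly greater than $1$ when $\delta > 0$, which is consistent with $f$ decreasing to this limit from above and provides the calibration that makes the monotonicity claim both plausible and tight at the relevant threshold.
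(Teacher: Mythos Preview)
Your main approach is essentially the paper's: differentiate, replace the single transcendental $\log(1-1/(bx+1/2))$ by a truncated Taylor polynomial to obtain a one-sided polynomial bound, and discharge the resulting two-variable polynomial inequality with a computer algebra system (the paper uses four Taylor terms and Mathematica's \textsf{Resolve} directly, without compactifying). Two small remarks: the series $-\sum_{k\ge 1}(bx+1/2)^{-k}/k$ is not alternating---all terms are negative---but truncation still gives the upper bound you need since the coefficient $H'(x)$ of the logarithm is positive; and the compactification step is unnecessary, since quantifier elimination over the reals handles unbounded polynomial regions directly, which is exactly what the paper does.
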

\begin{proof}
	Let $a=\alpha^*+\delta-1>0.763$. Direct calculation yields $f'(x)=A\cdot B$ with
\begin{align*}
	A & = \tp{2x^2(2ax-1)(2(a+1)x+1)}^{-1}\tp{1-\frac{2}{1+2ax}}^{\frac{x(1+2ax)}{1+2(a+1)x}};\\
	B &=  1 + 2 x - 4 a^2 x^2 + 8 a (1 + a) x^3 + 
 x \tp{-1 + 8 a^3 x^3 - 2 a x (1 + 2 x) + 4 a^2 x^2 (1 + 2 x)} \ln\tp{
   1 - \frac{2}{1 + 2 a x}}.
\end{align*}
It is easy to see that $A>0$, so we only need to verify that $B<0$. To see this, note that the term
\[
-1 + 8 a^3 x^3 - 2 a x (1 + 2 x) + 4 a^2 x^2 (1 + 2 x)=2ax(1+2x)(2ax-1)+\tp{8a^3x^3-1}>0
\]
for any $x\ge 3$ and $a\ge \alpha^*-1$. It follows from the Taylor series of $\ln(1-z)$ that 
\[
	\ln\tp{
   1 - \frac{2}{1 + 2 a x}} \le -\frac{2}{1+2ax}-\frac{2}{(1+2ax)^2}-\frac{8/3}{(1+2ax)^3}-\frac{4}{(1+2ax)^4}.
\]
Therefore we have
\begin{align*}
	B&\le 1 + 2 x - 4 a^2 x^2 + 8 a (1 + a) x^3 \\
	&\quad+ 
 x \tp{-1 + 8 a^3 x^3 - 2 a x (1 + 2 x) + 4 a^2 x^2 (1 + 2 x)} \\
 &\quad\cdot \tp{-\frac{2}{1+2ax}-\frac{2}{(1+2ax)^2}-\frac{8/3}{(1+2ax)^3}-\frac{4}{(1+2ax)^4}}
 \overset{(\star)}{<} 0,
\end{align*}
where $(\star)$ is verified by the following Mathematica code:

\begin{lstlisting}
Resolve[Exists[x,1+2x-4a^2x^2+8a(1+a)x^3+x(-1+8a^3x^3-2a x(1+2x)+4 a^2x^2(1+2x))*(-(2/(1+2a*x))-2/(1+2a*x)^2-(8/3)/(1+2a*x)^3-4/(1+2a*x)^4)>=0 && a>763/1000 && x>=3]]
\end{lstlisting}
\end{proof}

Here we used the Resolve command in Mathematica. 
This is a rigorous implementation of a quantifier elimination algorithm, which determines the feasibility of a collection of polynomial inequalities.
For more details, see \cite{Str06}.
	
\end{document}